\newcommand{\IF}{\operatorname{IF}}
\newcommand{\E}{\operatorname{\mathbb{E}}}
\newcommand{\trace}{\operatorname{tr}}
\newtheorem{theorem}{Theorem}
\newtheorem{lemma}{Lemma}
\newtheorem{corollary}{Corollary}
\newtheorem{remark}{Remark}
\newcommand{\vect}[1]{\boldsymbol{#1}}
\newcommand{\modulus}{\mathrm{mod}}
\newcommand{\argmax}{\mathrm{argmax}}
\newcommand{\email}[1]{\texttt{#1}}
\begin{document}

\title{Weighted likelihood methods for robust fitting of wrapped models for $p$-torus data}

\author[1]{Claudio Agostinelli}
\author[2]{Luca Greco} 
\author[3]{Giovanni Saraceno}

\affil[1]{Department of Mathematics, University of Trento, Italy, \email{claudio.agostinelli@unitn.it}}
\affil[2]{University Giustino Fortunato, Benevento, Italy, \email{l.greco@unifortunato.eu}}
\affil[3]{Department of Biostatistics, University of Buffalo, New York, USA, \email{gsaracen@buffalo.edu}}

\maketitle

\begin{abstract}
We consider robust estimation of  wrapped models to multivariate circular data that are points on the surface of a $p$-torus based on the weighted likelihood methodology. 
Robust model fitting is achieved by a set of weighted likelihood estimating equations, based on the computation of data dependent weights aimed to down-weight anomalous values, such as unexpected directions that do not share the main pattern of the bulk of the data. 
Weighted likelihood estimating equations	with weights evaluated on the torus orobtained after unwrapping the data onto the Euclidean space are proposed and compared. 
Asymptotic properties and robustness features of the estimators under study have been studied, whereas their finite sample behavior has been investigated by Monte Carlo numerical experiment and real data examples.

\noindent \textbf{Keywords}: Circular data, Expectation-Maximization algorithm, Outliers, Pearson residual, Ramachandran plot.

\noindent \textbf{MSC Classification}: 62H11, 62F35.
\end{abstract}

\section{Introduction}
\label{sec:0}
Multivariate circular data arise commonly in many different fields, including the analysis of wind directions \citep{lund1999cluster, Agostinelli2007}, animal movements \citep{ranalli2020model, rivest2016general}, handwriting recognition \citep{bahlmann2006directional}, people orientation \citep{baltieri2012people}, cognitive and experimental psychology \citep{warren2017wormholes}, human motor resonance \citep{cremers2018one}, neuronal activity \citep{rutishauser2010human} and protein bioinformatics \citep{mardia2007protein,mardia2012mixtures, Eltzner2018}. The reader is pointed to \cite{MardiaJupp2000, JammalamadakaSenGupta2001,  pewsey2013circular} for a general review.
The data can be thought as points on the surface of a $p$-torus, embedded in a $(p+1)$-dimensional space, whose surface is obtained by revolving the unit circle in a $p-$dimensional manifold. 
A $p$-torus is topologically equivalent to a product of a circle $p$ times by itself, written $\mathbb{T}^p, \ p \ge 1$ \citep{munkres2018elements}. The peculiarity of torus data is periodicity, that reflects in the boundedness of the sample space and often of the parametric space. 

In order to illustrate the nature of torus data, let us consider a bivariate example, concerning
$n=490$ backbone torsion angle pairs $(\phi, \psi)$ for the protein 8TIM. Data are available from the {\tt R} package {\tt BAMBI} \citep{bambi} and are extracted from the vast Protein Data Bank \citep{bourne:2000}.  
The protein is an example of a TIM barrel folded into eight $\alpha$-helices and eight parallel $\beta$-strands, alternating along the protein tertiary structure. It gets its name from the enzyme triose-phosphate isomerase, a conserved metabolic enzyme \citep{tim}. 
The data are shown in Figure \ref{fig:a} according to the Ramachandran plot of the angles over $[0,2\pi)\times [0,2\pi)$, in the left panel, or $[-\pi, \pi)\times [-\pi, \pi)$, in the right panel. Clearly, this type of graphical display is not unique and depends on how the angles are represented. Actually, the Ramachandran plot does not allow to show the intrinsic periodicity of the angles. In order to account for such wraparound nature of the data, one should topologically glue both pairs of opposite edges together with no twists. Then, the resulting surface is that of a torus with one hole (say, of genus one) in three dimensions. The data on the torus are displayed in Figure \ref{fig:b} from two different perspectives.
The limitations of the Ramachandran plot in the two dimensional space can be circumvented by {\it unwrapping} the data on a flat torus, that is the angles are revolved around the unit circle a fixed number of times in each dimension and transformed into linear data,  according to $\vect{x}=\vect{y}+2\pi \vect{j}$, for a given $\vect{j}\in\mathbb{Z}^2$.
This representation is shown in Figure \ref{fig:c} where the data are given for different choices of $\vect{j}\in\mathbb{Z}^2$: then, the same data structure repeats itself to reflect the periodic nature of the data. Dotted lines give multiples of $\pi$. 
 
\begin{figure*}[t]
	\centering
	\includegraphics[scale=0.28]{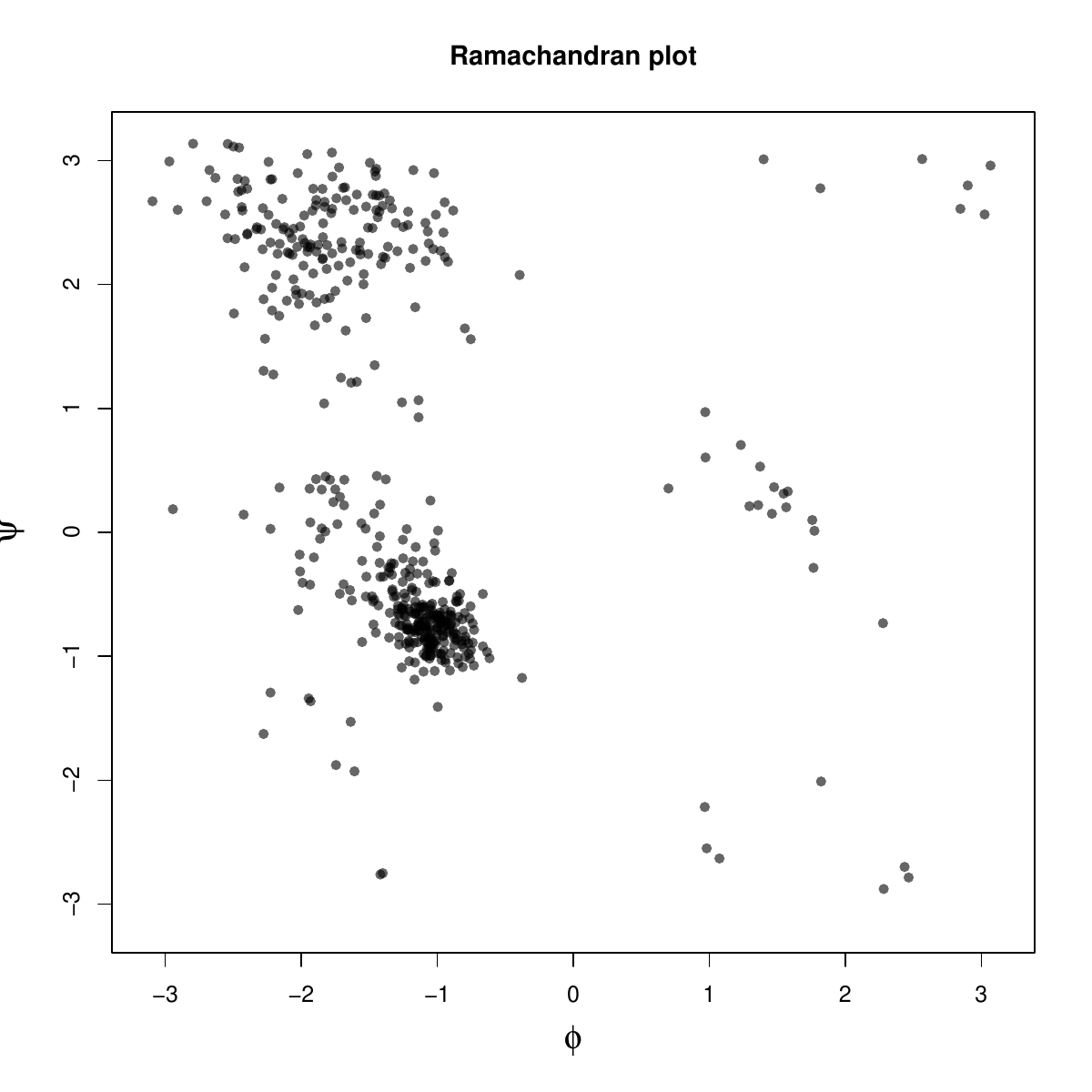}
	\includegraphics[scale=0.28]{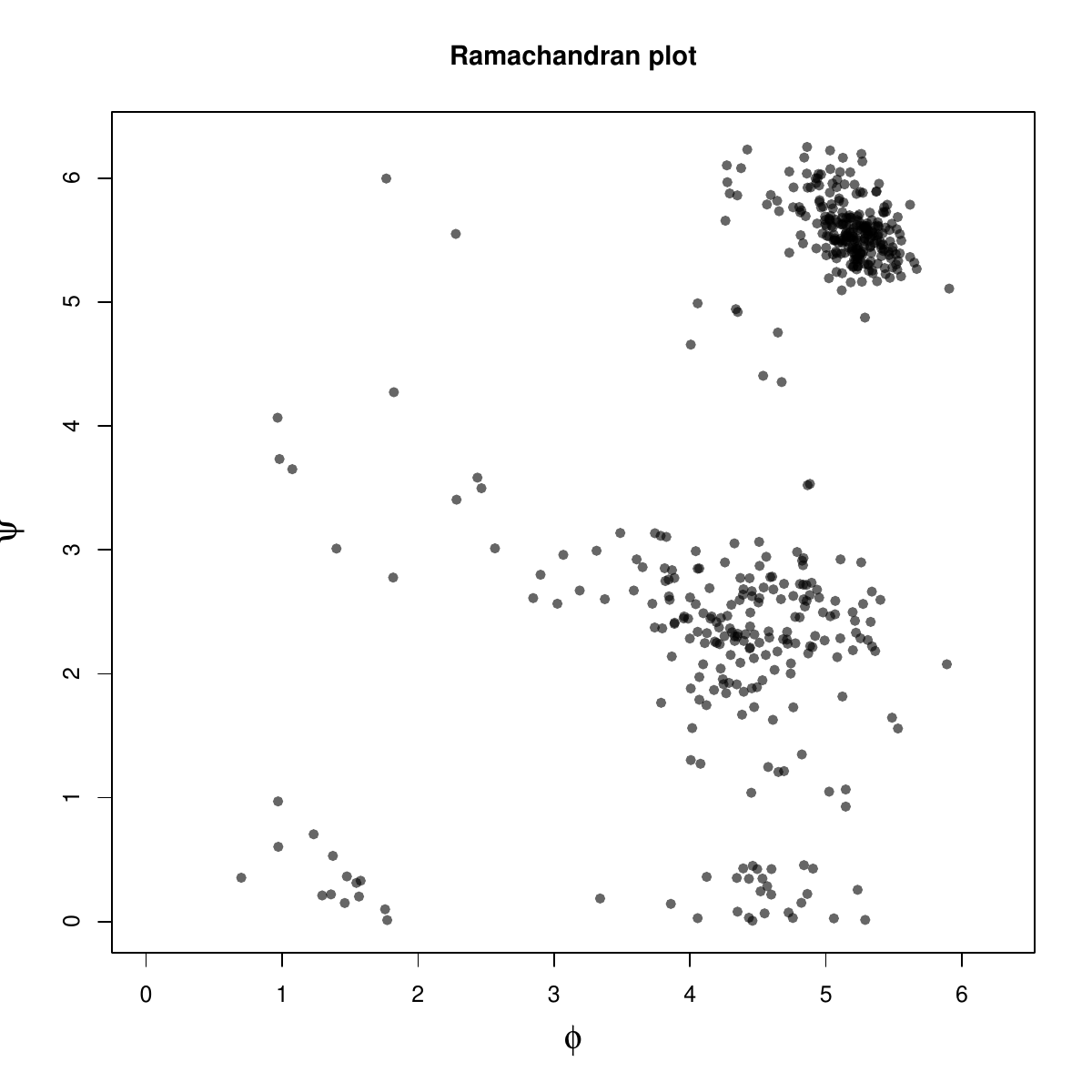}
	\caption{8TIM protein data. Ramachandran plot over $[0, 2\pi)\times [0,2\pi)$ (left) and over $[-\pi, \pi)\times [-\pi, \pi)$.} 
	\label{fig:a}
\end{figure*}

\begin{figure*}[t]
	\centering
	\includegraphics[scale=0.65]{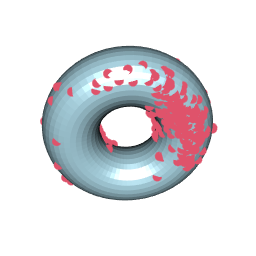}
	\includegraphics[scale=0.65]{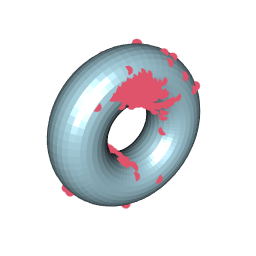}
	\caption{8TIM protein data. Bivariate angles as points on the surface of a torus from two different perspectives.} 
	\label{fig:b}
\end{figure*}

\begin{figure}[t]
	\centering
	\includegraphics[scale=0.3]{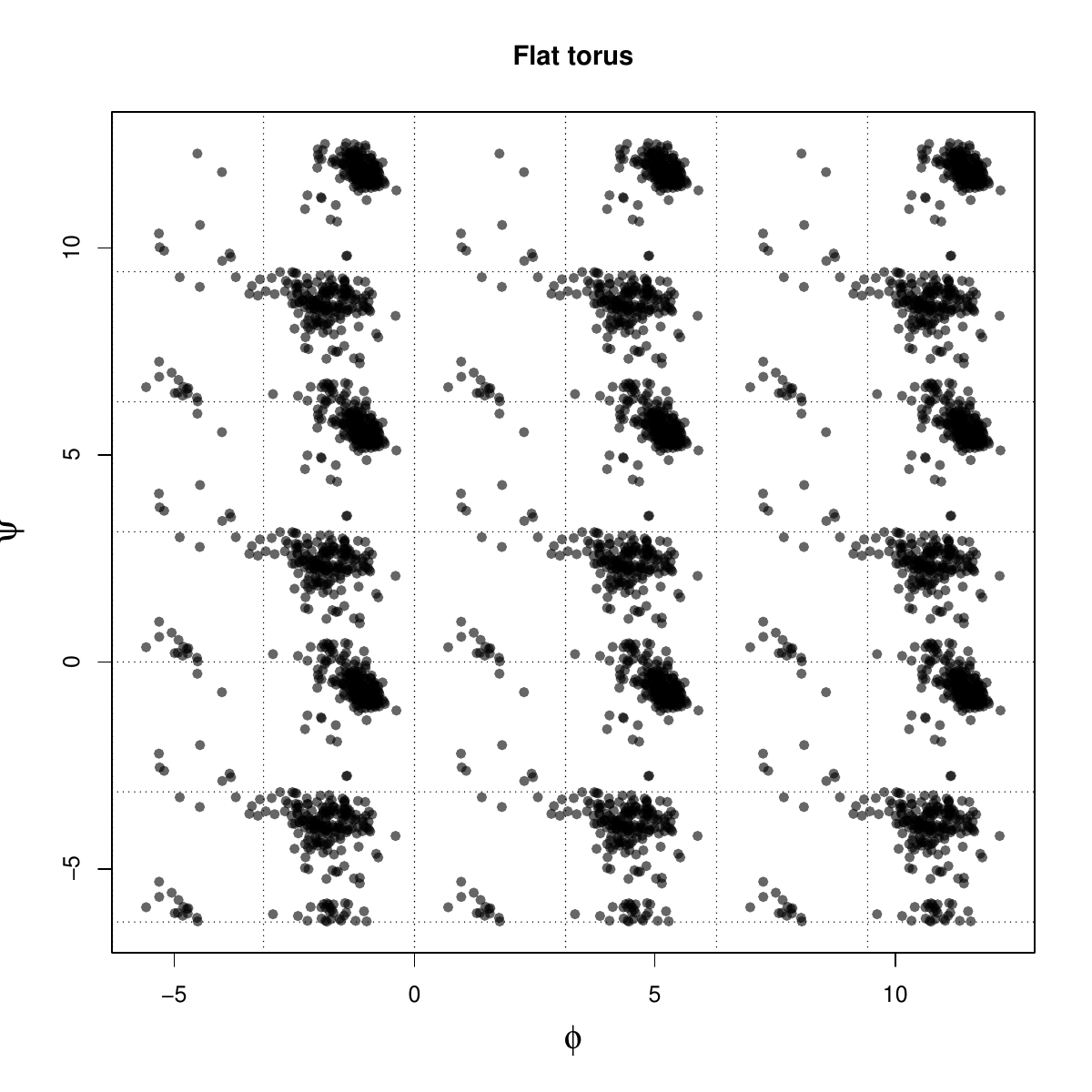}
	\caption{8TIM protein data. Flat torus plot. The dotted lines give multiples of $\mp\pi$.} 
	\label{fig:c}
\end{figure}

The problem of modeling circular data has been tackled through suitable distributions, such as the von Mises \citep{mardia1972statistics}. 
In a different fashion, in this paper, we focus our attention on the family of wrapped distributions \citep{MardiaJupp2000}. Wrapping is a popular method to define distributions for torus data.
Let $\vect{X}=(X_1,X_2,\ldots,X_p)$ be a {\it linear} random vector
with distribution function $M(\vect{x}; \vect{\theta})$ and corresponding probability density function $m(\vect{x};\vect{\theta})$, with $\vect{x}\in\mathbb{R}^p$ and $\vect{\theta}\in\Theta$.
Assume that each component is wrapped around the unit circle, i.e., $Y_d=X_d \ \modulus \ 2\pi, \ d=1,2,\ldots, p$, where $\modulus$ denotes the modulus operator. Then,
the distribution of $\vect{Y} = \vect{X} \ \modulus \ 2\pi$ is a $p-$variate wrapped distribution with distribution function
\begin{equation*}
M^{\circ}(\vect{y}; \vect{\theta})= \sum_{\vect{j} \in \mathbb{Z}^p}\left[ M(\vect{y} + 2 \pi \vect{j}; \vect{\theta}) - M(2 \pi \vect{j}; \vect{\theta})\right]  
\end{equation*}
and probability density function 
\begin{equation}
	m^{\circ}(\vect{y}; \vect{\theta})= \sum_{\vect{j} \in \mathbb{Z}^p} m(\vect{y} + 2 \pi \vect{j}; \vect{\theta}) \ , 
	\label{dens}
\end{equation}
$\vect{y}=(y_1,y_2,\ldots,y_p) \in[0,2\pi)^p$, $\vect{j}=(j_1,j_2,\ldots,j_p)\in\mathbb{Z}^p$.
The $p$-dimensional vector $\vect{j}$ is the vector of wrapping coefficients, that, if it was known, would describe how many times each component of the $p$-toroidal data point was wrapped. 
In other words, if we knew $\vect{j}$ along with $\vect{y}$, we would obtain the unwrapped data $\vect{x}=(x_1,x_2,\ldots,x_p)$ as $\vect{x}=\vect{y}+2\pi \vect{j}$.
Hereafter, we concentrate on unimodal and elliptically symmetric densities of the form
\begin{equation} \label{elliptic}
m(\vect{x}; \vect{\theta}) \propto \vert \Sigma \vert^{-1/2} h\left((\vect{x} - \vect{\mu})^\top \Sigma^{-1} (\vect{x} - \vect{\mu})\right)
\end{equation}
where $h(\cdot)$ is a strictly decreasing and non-negative function, $\vect{\theta} = (\vect{\mu}, \Sigma)$, 
$\vect{\mu}=(\mu_1,\mu_2,\ldots,\mu_p)$ is a location vector and $\Sigma$ is a $p \times p$ positive definite scatter matrix.
When $h(t)=\exp(-t/2)$, the multivariate normal distribution is recovered as a special case. 
Applying the component-wise wrapping of a $p$-variate normal distribution $\vect{X}\sim N_p(\vect{\mu}, \Sigma)$ onto a $p$-dimensional torus, one obtains 
the multivariate wrapped normal (WN), $\vect{Y}\sim WN_p(\vect{\mu},\Sigma)$, with mean vector 
$\vect{\mu}$ and variance-covariance matrix $\Sigma$.
Without loss of generality, we let $\vect{\mu}\in[0,2\pi)^p$ to ensure identifiability.

Torus data are not immune to the occurrence of outliers, that is unexpected values, such as angles or directions, that do not share the main pattern of the bulk of the data. The key to understanding circular outliers lies in the intrinsic periodic nature of the data. In particular, outliers in the circular setting differ from those in the linear case, in that angular distributions have bounded support. For classical {\it linear} data in an Euclidean space, one single outliers can lead the mean to minus or plus infinity. In contrasts, breakdown occurs in directional data when contamination causes the mean direction to change by at most $\pi$ \citep{davies2005breakdown, davies2006addendum}.
Marginally, the occurrence and subsequent detection of anomalous circular data points clearly depends on the concentration of the data around some main direction. The lower the concentration, the more outliers are unlikely to occur and have a little effect on estimates of location or spread. Furthermore, in a multivariate framework, outliers can violate the main correlation structures of the data and lead to misleading associations.
Therefore, when outliers do contaminate the torus data at hand, they can very badly affect likelihood based estimation, leading to unreliable inferences. The problem of robust fitting for directional data has been addressed since the works of \cite{lenth1981robust, ko1988robustness, he1992robust, Agostinelli2007}, mainly for univariate problems. 
A very first attempt to develop a robust parametric technique well suited for $p$-torus data and wrapped models can be found in \cite{saraceno2021robust}. A second approach has been discussed in \cite{greco2021robust}. They are both based on a set of weighted data-augmented estimating equations that are solved using a 
Classification Expectation-Maximization (CEM) algorithm, whose M-step is enhanced by the computation of a set of data dependent weights aimed to down-weight outliers. 

The main contributions of this paper can be summarized as follows. We generalize the approach developed in \cite{saraceno2021robust} building a set of weighted likelihood estimating equations \citep[WLEE,][]{markatou1998weighted} as weighted counterparts of the likelihood equations. 
The technique is developed in a very general framework for unimodal and elliptically symmetric distributions and not limited to the WN model.  
The resulting weighted likelihood estimator (WLE) can be evaluated according to different weighting schemes.
We shed new light on the nature, definition and treatment of torus outliers. In details, it is shown how the different approaches to evaluate weights can be justified in light of the current definition of outliers in use. 
We present and discuss a new strategy to obtain weights for robust fitting based on the unwrapped data, after imputing the vector of wrapping coefficients $\vect{j}$. 
It is shown that the estimating equations based on the unwrapped data can be properly used for sufficiently enough concentrated distributions on the torus. 
Furthermore, this work is meant to be a step forward the existing literature also because it is accompanied by formal theoretical results about the asymptotic behavior and the robustness properties of the proposed estimators. 

The remainder of the paper is organized according to the following structure. Some background on maximum likelihood estimation of wrapped models is given in Section \ref{sec:1}. The concept of outlyingness for torus data is discussed in Section \ref{sec:out}. Methods for weighted likelihood fitting are described in Section \ref{sec:2}.
Theoretical properties are discussed in Section \ref{sec:3}. 
Numerical studies are presented in Section \ref{sec:5}. Real data examples are given in Section \ref{sec:6}.
{\tt R} \citep{cran} code to run the proposed algorithms and replicate the real examples is available as Supplementary Material.

\section{Maximum likelihood estimation}
\label{sec:1}

Given an i.i.d sample $(\vect{y}_1, \vect{y}_2, \ldots, \vect{y}_n)$ from $\vect{Y} \sim m^\circ(\vect{y}; \vect{\theta})$, the maximum likelihood estimate (MLE) is obtained by maximizing the log-likelihood function 
\begin{equation}\label{ell}
\ell^\circ(\vect{\theta}) = \sum_{i=1}^n \log m^\circ(\vect{y}_i; \vect{\theta})
\end{equation}
or solving the corresponding set of estimating equations $\sum_{i=1}^n u^\circ(\vect{y}_i; \vect{\theta}) = \vect{0}$, where $$u^\circ(\vect{y}; \vect{\theta}) = \nabla_{\vect{\theta}} \log m^\circ(\vect{y}; \vect{\theta}) = \frac{\nabla_{\vect{\theta}} m^\circ(\vect{y}; \vect{\theta})}{m^\circ(\vect{y}; \vect{\theta})}$$ is the score function. For a wrapped unimodal elliptically symmetric model, i.e. given by wrapping (\ref{elliptic}) onto the $p$-torus, 
let 
\begin{equation} \label{vijMLE}
v_{i\vect{j}} = v_{i\vect{j}}(\vect{\mu}, \Sigma) = \frac{h^\prime(\vect{y}_i + 2 \pi \vect{j}; \vect{\mu}, \Sigma)}{\sum_{\vect{k}\in\mathbb{Z}^p} h(\vect{y}_i + 2 \pi \vect{k}; \vect{\mu}, \Sigma)}.
\end{equation}
Then, the MLE is the solution to the following fixed point equations
\begin{align} \label{MLE}
\vect{\mu} & = \frac{\sum_{i=1}^n \sum_{\vect{j} \in \mathbb{Z}^p} v_{i\vect{j}} (\vect{y}_i + 2\pi \vect{j})}{\sum_{i=1}^n \sum_{\vect{k} \in \mathbb{Z}^p} v_{i\vect{k}}} \\
\Sigma & = - \frac{2}{n} \sum_{i=1}^n \sum_{\vect{j} \in \mathbb{Z}^p} v_{i\vect{j}}(\vect{y}_i + 2\pi \vect{j}-\vect{\mu} )  (\vect{y}_i + 2\pi \vect{j}-\vect{\mu} )^\top \nonumber \ .
\end{align}
The reader is pointed to Appendix \ref{appendix:mle} for details. Finding the MLE requires an iterative procedure alternating between the computation of (\ref{vijMLE}) based on current parameters values and finding the (updated) solution to (\ref{MLE}). 
An approximate MLE can be obtained using crispy assignments after the computation of (\ref{vijMLE}), that is 
we let 
\begin{equation} \label{Cstep}
\hat{\vect{j}}_i= \argmax_{\vect{j} \in \mathbb{Z}^p} v_{i\vect{j}}
\end{equation}
and solve the estimating equation
\begin{equation} \label{CEM}
\sum_{i=1}^n u(\hat{\vect{x}}_{i}; \vect{\theta}) = \vect{0} 
\end{equation}
based on the \textit{unwrapped} (fitted) linear data $\hat{\vect{x}}_i = \vect{y}_i + 2\pi \hat{\vect{j}}_i$. 

In the special situation given by the WN, the derivation of the MLE through the fixed point equations in (\ref{MLE}) coincides with that obtained from an Expectation-Maximization (EM) algorithm based on a data augmentation procedure \citep{Fisher1994, coles1998inference, jona2012spatial, nodehi2020}. In a similar fashion, the approximate MLE can be obtained from a Classification EM (CEM) algorithm \citep{nodehi2020}. See Appendix \ref{appendix:em}.

\begin{remark}
	The infinite sum over $\mathbb{Z}^p$ makes likelihood inference challenging and hence it is common to replace it by a sum over the Cartesian product $\mathcal{C}_J=\vect{\mathcal{J}}^p$ where $\vect{\mathcal{J}} = (-J, -J+1, \ldots, 0, \ldots, J-1, J)$ for some $J$ providing a good approximation, since the summands of the series converge to zero. The approximation based on the truncated series works when 
	\begin{equation*}
	\Pr\left\{(\vect{Y} - \vect{\mu}) \in (-2 \pi J, 2 \pi J]^p\right\} \le \sum_{d=1}^p \Pr\left\{(Y_k - \mu_k) \in (-2 \pi J, 2 \pi J]\right\}
	\end{equation*}
	is negligible; this is the case when $\left(\mu_d - 4 \Sigma_{dd}^{1/2}, \mu_d + 4 \Sigma_{dd}^{1/2}\right) \subseteq (-2 \pi J, 2 \pi J]$, for $d=1, 2, \ldots, p$ \citep[see also][]{kurz2014efficient}. Actually, in case of the wrapped elliptically symmetric family, the density in (\ref{dens}) tends to that of a uniform distribution as concentration decreases \citep[see also][for the WN case]{mardia2000directional}.
\end{remark}
As noticed in \citep{nodehi2020}, the MLE for location is equivariant under affine transformation of the data in the original (unwrapped) linear space. On the contrary, this is not the case for the scatter matrix estimates. 
Furthermore, it is worth to remark that solving (\ref{CEM}) does not lead to consistent estimates for $\Sigma$ since the $\hat{\vect{j}}_i$ can not be a consistent estimates of the unknown wrapping coefficients. Therefore, there is lack of consistency for $\hat{\vect{x}}_i$, as well. The population estimating equation
\begin{equation} \label{em:population}
\int_{\mathbb{T}^p} u^\circ(\vect{y}; \vect{\mu}, \Sigma) m^\circ(\vect{y}; \vect{\mu_0}, \Sigma_0) \ d\vect{y} = \vect{0}
\end{equation}
is solved by the true values $(\vect{\mu_0}, \Sigma_0)$, hence making the MLE estimator Fisher consistent. In contrasts, the estimating equation (\ref{em:population}) is not the population estimating equations corresponding to (\ref{CEM}). Actually, we can always re-express our observations so that $\vect{z_i} = \vect{y_i} - \vect{\mu} \in (-\pi, \pi]^p$. It is not difficult to see that $\hat{\vect{x}}_i = \vect{z}_i$. Then, the distribution from which the $\hat{\vect{x}}_i$s are sampled is not $m(\vect{x}; \vect{\mu_0}, \Sigma_0)$. However, the distribution is still elliptically symmetric around $\vect{\mu}$ and its support is any hyper-cube of length $2\pi$ and in particular we can take $T(\vect{\mu}) = \times_{k=1}^p(\mu_k -\pi, \mu_k+\pi]$. We call this distribution the unwrapped model and we denote it by
\begin{equation*}    
m^u(\vect{x}; \vect{\mu_0}, \Sigma_0) = m^\circ(\vect{x}; \vect{\mu_0}, \Sigma_0) \mathbb{I}(\vect{x} \in T(\vect{\mu}_0)).
\end{equation*}  
Now, we can define $\Sigma^u_0$ as the solution to the CEM population estimating equation
\begin{equation} \label{cem:population}
\int_{\mathbb{R}^p} u(\vect{x}; \vect{\mu}, \Sigma) m^u(\vect{x}; \vect{\mu_0}, \Sigma_0) \ d\vect{x} = \vect{0} \ .
\end{equation} 
For illustrative purposes, let us consider the following univariate examples.
In Figure \ref{denunwrapped} we compare the unwrapped normal density $m^u(x; 0,\sigma_0^2)$ with the original normal density $m(x; 0, \sigma_0^2)$, for $\sigma_0=3\pi/8 \approx 1.178$ (left panel) and $\sigma_0=\pi/2 \approx 1.571$ (middle panel). We find that $\sigma_0^u \approx 1.163$ and $\sigma_0^u \approx 1.460$ respectively. For small values of $\sigma_0$ the two densities are very similar apart from the truncation of the tails in the range $(-\pi,\pi]$. On the opposite, the difference becomes marked for large values of $\sigma_0$  The relation between $\sigma_0$ and $\sigma_0^u$ is displayed in the right panel of Figure \ref{denunwrapped}. It follows that (\ref{CEM}) can be safely used for $\sigma\leq \pi/2$. However, in most practical cases, distributions characterized by large concentrations are not of interest and the identification of outliers become unfeasible, as  already discussed in Section \ref{sec:0}.

\begin{figure*}[t]
	\centering
	\includegraphics[width=0.3\textwidth]{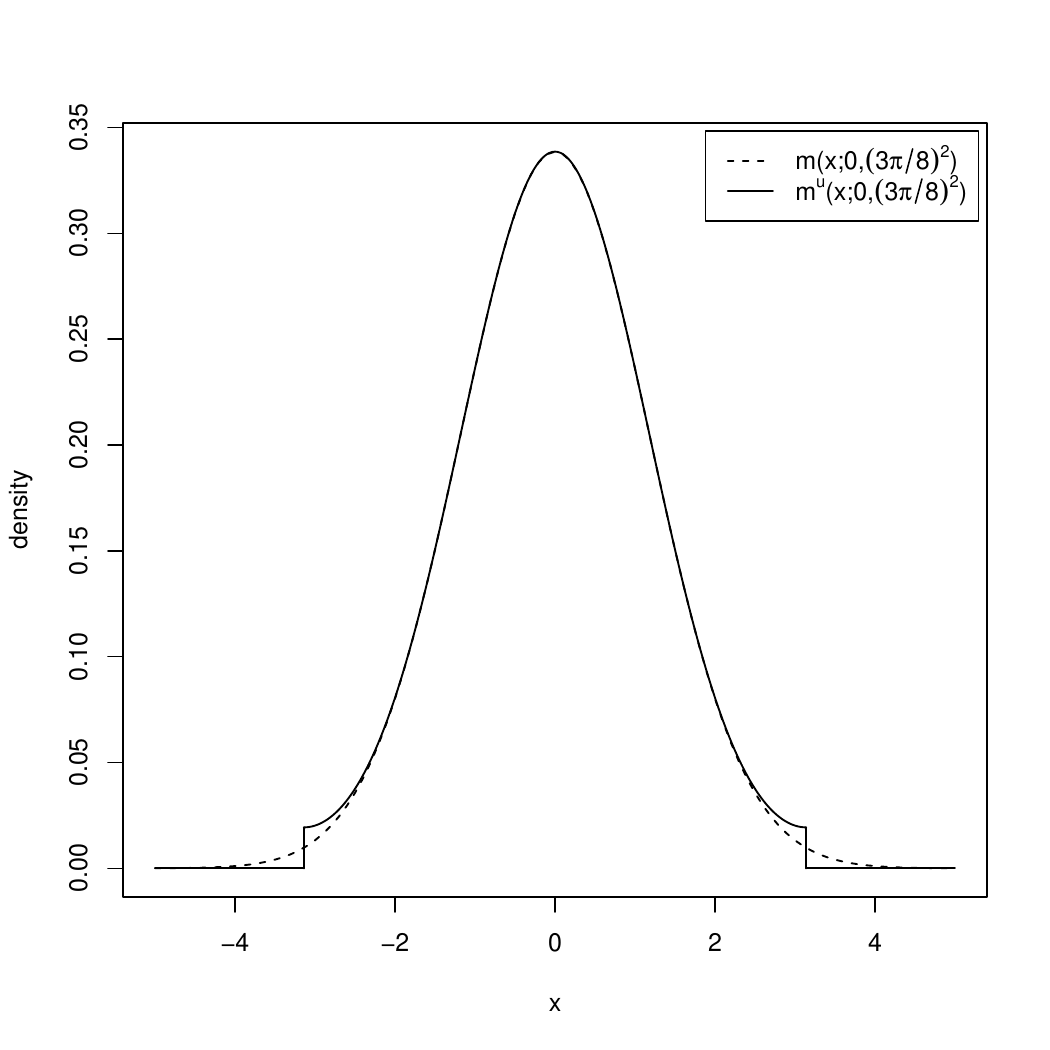}
	\includegraphics[width=0.3\textwidth]{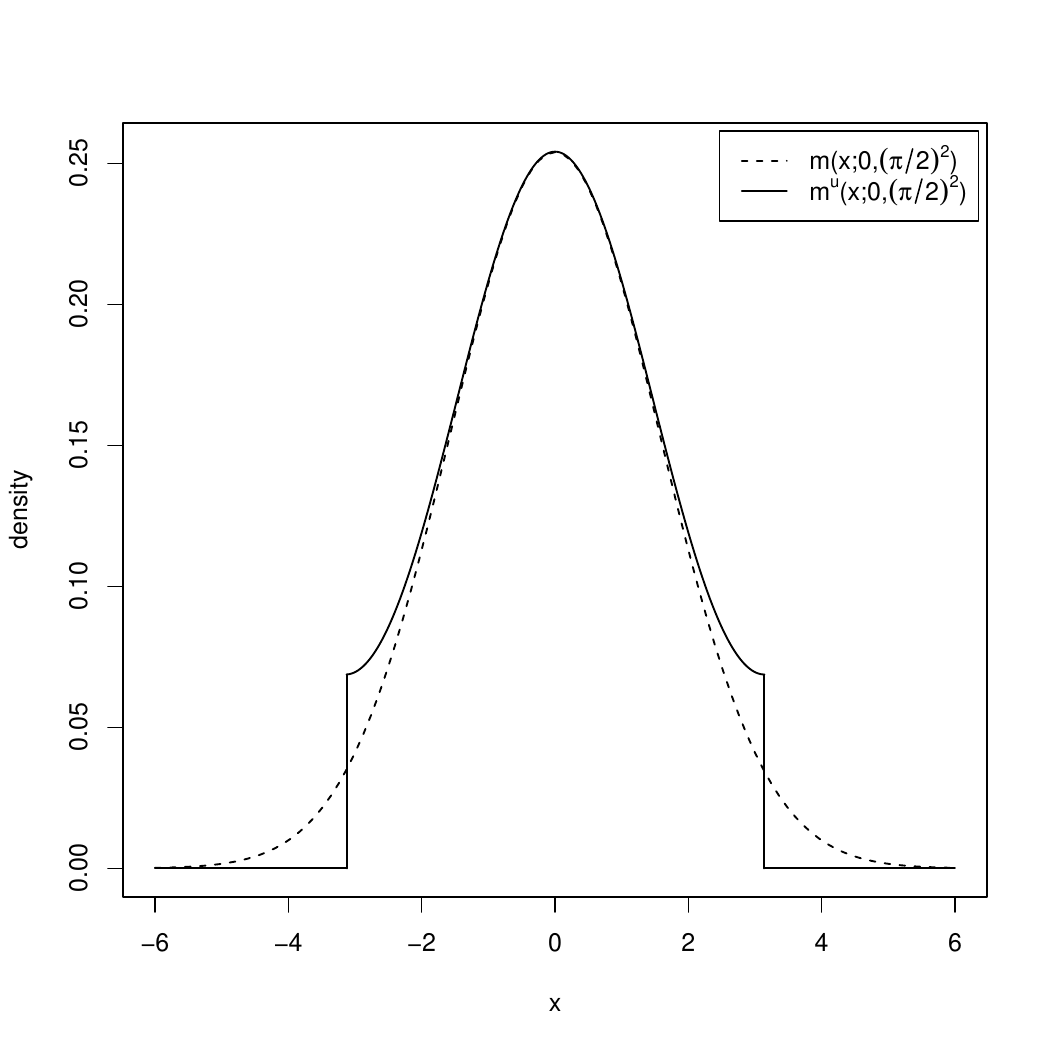}
	\includegraphics[width=0.3\textwidth]{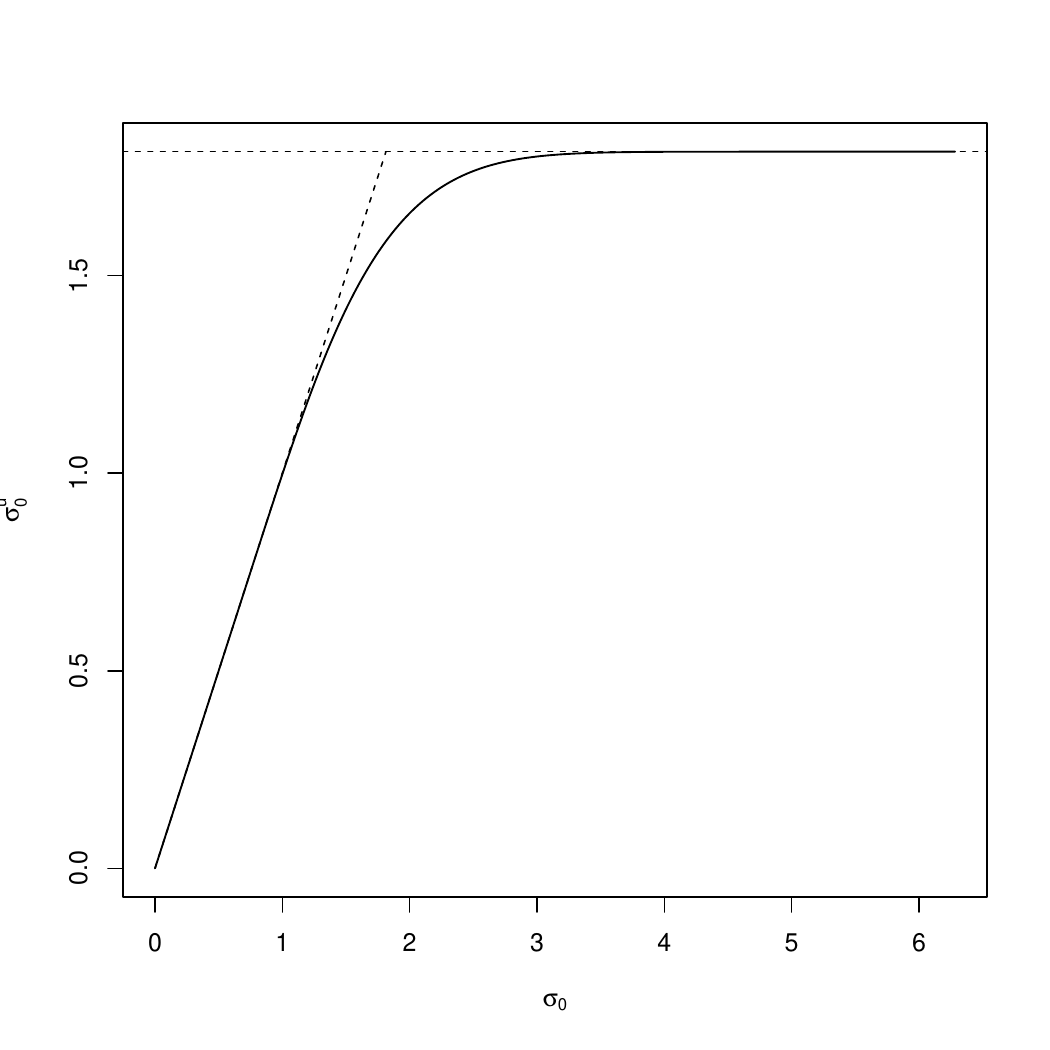}  
	\caption{Unwrapped normal density $m^u(x; 0, \sigma_0^2)$ compared with the original normal density $m(x; 0, \sigma_0^2)$, $\sigma_0=3 \pi/8$, (left panel), $\sigma_0=\pi/2$ (middle panel); $\sigma_0^u$ versus $\sigma_0$ (right panel).} 
	\label{denunwrapped}
\end{figure*}

\section{Outlyingness of torus data}
\label{sec:out}
We distinguish at least two approaches in the definition of outliers. The probabilistic approach is based on the idea that outliers are values {\it that are highly unlikely to occur under the assumed model} \citep{markatou1998weighted, Agostinelli2007}. Under this perspective, outlyingness can be measured according to the degree of agreement between the data and the assumed model, as provided by the Pearson residual \citep{lindsay1994}. In contrasts, according to the geometric approach, outliers are observations {\it which deviate from the pattern set by the majority of the data} \citep{huber2robust, rousseeuw2011robust} with respect to a geometric distance. However, it is not straightforward to define and measure geometric distances on the torus \citep{mardia2012statistics}. This makes the  probabilist point of view very appealing in this framework.

A simple but effective way to introduce outliers on the torus is that of considering the classical gross error model \citep{huber2robust} on the unwrapped linear space. Let $0 \le \epsilon < 0.5$ and $g(\vect{x})$ be an arbitrary density function. Then, the {\it true} density on the Euclidean space is
\begin{equation}\label{gol}
f(\vect{x})= (1-\epsilon)m(\vect{x}; \vect{\theta})+ \epsilon g(\vect{x}) \ ,
\end{equation}
whereas, on the torus, we have that
\begin{align}
f^\circ(\vect{y}) & = \sum_{\vect{j} \in \mathbb{Z}^p} f(\vect{y}+ 2\pi \vect{j}) \nonumber \\
& = (1-\epsilon) \sum_{\vect{j} \in \mathbb{Z}^p} m(\vect{y} + 2\pi \vect{j}; \vect{\theta}) + \epsilon \sum_{\vect{j} \in \mathbb{Z}^p} g(\vect{y} + 2\pi \vect{j}) \nonumber\\
& = (1-\epsilon) m^\circ(\vect{y}; \vect{\theta}) + \epsilon g^\circ(\vect{y}). \label{equ:probabilistic}
\end{align}

A measure of the agreement between the true and assumed model on the probabilistic ground is provided by the Pearson residual function \citep{lindsay1994, basu1994minimum, markatou1998weighted}.
Let $K_H(\vect{y})$ be a smooth family of (circular) kernel functions with bandwidth matrix $H$.
Let  $\hat{f}^\circ(\vect{y})$ and $\hat{m}^\circ(\vect{y}; \vect{\theta})$ be smoothed densities, obtained by convolution between $K_H(\vect{y})$ and $f^\circ(\vect{y})$ and $m^\circ(\vect{y}; \vect{\theta})$, respectively. In \cite{saraceno2021robust} it has been suggested to measure the outlyingness of torus data based on (\ref{equ:probabilistic}) and 
using the Pearson residual function defined on $\vect{y} \in \mathbb{T}^p$ as
\begin{equation}
	\label{pearson}
	\delta^\circ (\vect{y}; \vect{\theta}) = \frac{\hat{f}^\circ(\vect{y})}{\hat{m}^\circ(\vect{y}; \vect{\theta})} - 1 \ ,
\end{equation}
with $\delta^\circ(\vect{y}; \vect{\theta})\in [-1,+\infty)$, see also \citep{Agostinelli2007}.
The same probabilistic definition of outliers can be applied on the unwrapped linear space rather than on the torus, in a dual fashion. Therefore, in a CEM-based framework, one can define outlyingness on the unwrapped rather than circular data, based on (\ref{gol}). Actually, for a given $\vect{x} \in \mathbb{R}^p$, one can define the Pearson residual function
\begin{equation}
\label{pearson:lin}
\delta(\vect{x}; \vect{\theta}) = \frac{\hat{f}(\vect{x})}{\hat{m}(\vect{x}; \vect{\theta})} - 1 \ ,
\end{equation}
where $\hat{f}(\vect{x})$ and $\hat{m}(\vect{x}; \vect{\theta})$ are linear smoothed model densities.
However, according to the results stated in Section \ref{sec:1}, the use of a C-step does not lead to observe data directly from $m(\vect{x}; \vect{\theta})$ but from the wrapped-unwrapped mechanism $m^u(\vect{x}; \vect{\theta})$.  Then, it would be correct to consider the Pearson residual function
\begin{equation}
\label{pearson:unw}
\delta^u(\vect{x}; \vect{\theta}) = \frac{\hat{f}^u(\vect{x})}{\hat{m}^u(\vect{x}; \vect{\theta})} - 1 \ ,
\end{equation}
instead, with $\delta^u(\vect{x}; \vect{\theta})\in [-1,+\infty)$. 

Large Pearson residuals detect points in disagreement with the model. This points are supposed to be down-weighted in the estimation process using a proper weighting function. The evaluation of a proper set of weights requires measuring the outlyingness of each data point with respect to a given (robust) fit of the postulated model. Based on the weighted likelihood methodology \citep{markatou1998weighted}, the weights are obtained from the finite sample counterparts of the Pearson residuals defined in (\ref{pearson}) or (\ref{pearson:unw}). In the former case, we have 
\begin{equation}
\label{residualfs}
\delta_n^\circ(\vect{y}; \vect{\theta}) = \frac{\hat f_n^\circ(\vect{y})}{\hat m^\circ(\vect{y}; \vect{\theta})} - 1 \ ,
\end{equation}
where $\hat f_n^\circ(\vect{y})$ is a circular kernel density estimate on the torus. As well, in the case of unwrapped data, we have that
\begin{equation}
\label{residualfs2}
\delta_n^u(\vect{x}; \vect{\theta}) = \frac{\hat f_n^u(\vect{x})}{\hat m^u(\vect{x}; \vect{\theta})} - 1 \ 
\end{equation}
where $\hat f_n(\vect{x})$ is a kernel density estimate evaluated on the hyperplane over the fitted unwrapped (complete) data $(\hat{\vect{x}}_1, \ldots, \hat{\vect{x}}_n)$. 
In practice, for concentrated circular distributions, the Pearson residuals in (\ref{residualfs2}) can be approximated by 
\begin{equation}
\label{residualfs2bis}
\delta_n(\vect{x}; \vect{\theta}) = \frac{\hat f_n^u(\vect{x})}{\hat m(\vect{x}; \vect{\theta})} - 1 \ .
\end{equation}
Smoothing the model makes the Pearson residuals converge to zero with probability one under the assumed model and it is not required that the kernel bandwidth goes to zero as the sample size increases \citep{markatou1998weighted}. In general, the choice of the kernel is not crucial. \\

\begin{remark}
	When the model is the multivariate WN distribution, we can use a multivariate WN kernel with covariance matrix $H=\operatorname{diag}(h^2)$, since the smoothed model density is still an element of the WN family with covariance matrix $\Sigma+H$.  
\end{remark}

\begin{remark}
	In practice, under the WN model, the distribution of the unwrapped data can be approximated by a multivariate normal variate for {\it concentrated} distributions, that is whenever all the variances are sufficiently {\it small}. In this case, using a multivariate normal kernel with bandwidth matrix $H=\operatorname{diag}(h^2)$ returns a smoothed model that is still normal with variance-covariance matrix $\Sigma+H$. It is worth to stress that the WN distribution inherits this property of closure with respect to convolution from the normal model. The closure to convolution property makes the use of the Gaussian kernel very appealing.
\end{remark}

\begin{remark}
The family of elliptical distributions is not closed under convolution. e.g. see Sec 5.3.4 of \citep{prestele2007}. However, some subfamilies of elliptical distributions are closed under convolution; for example the class of elliptical stable distributions are closed under convolutions.
\end{remark}
  
Despite several weight functions could be used, in the weighted likelihood methodology it is common to consider
\begin{equation}
\label{weightfun}
w(\delta) = \min\left\{ 1, \frac{\left[A(\delta) + 1\right]^+}{\delta + 1} \right\} \ ,
\end{equation}
where $w(\delta)\in [0, 1]$, $[\cdot]^+$ denotes the positive part and $A(\delta)$ is the Residual Adjustment Function (RAF, \cite{lindsay1994,basu1994minimum,park+basu+lindsay+2002}), whose
special role is related to the connections between weighted likelihood estimation and minimum disparity estimation. In practice, the RAF acts by bounding the effect of those points leading to large Pearson residuals. The function $A(\cdot)$ is assumed to be increasing and twice differentiable in $[-1, +\infty)$, with $A(0) = 0$ and $A^\prime(0) = 1$. The weights decline smoothly to zero as $\delta\rightarrow\infty$ (outliers) and depending on the RAF also as $\delta\rightarrow -1$ (inliers). In particular, the weight function (\ref{weightfun}) can involve a RAF based on the Symmetric Chi-squared divergence \citep{markatou1998weighted}, the family of Power divergences \citep{lindsay1994} or the Generalized Kullback-Leibler divergence \citep{park+basu+2003} \citep[see][for details]{saraceno2021robust}. 

\subsection{The geometric approach}
The probabilistic approach allows to identify outliers both on the torus or after unwrapping the data, in a purely dual fashion. On the other hand, the geometric approach can be used only in the latter situation, as described in \cite{greco2021robust}.
By exploiting the methodology developed in \cite{agostinelli2019weighted}, under the elliptically symmetric model in (\ref{ell}) and for a known wrapping coefficient vector $\vect{j}$, Pearson residuals and weights can be based on the squared Mahalanobis distance $d^2 = d^2(\vect{x}; \vect{\theta}) = [(\vect{x} - \vect{\mu})^\top \Sigma^{-1} (\vect{x} - \vect{\mu})]$. In particular, finite sample Pearson residuals are defined as 
\begin{equation}
\label{residualfsGunw}
\delta_n^{du}(\vect{x}; \vect{\theta}) = \frac{\hat f_n^u(d^2)}{\chi_u^2(d^2; p)} - 1 \ ,
\end{equation}
where $\hat f_n(d^2)$ is a (unbounded at the boundary) kernel density estimate evaluated over squared Mahalanobis distances $d^2(\hat{\vect{x}}; \hat{\vect{\theta}})$ and 
$\chi_u^2(d^2; p)$ is the density of the Mahalanobis distance evaluated under the wrapped-unwrapped model $m^u(\cdot; \vect{\theta})$. For concentrated circular distributions, the Pearson residual in (\ref{residualfsGunw}) can be approximated by 
\begin{equation}
\label{residualfsG}
\delta_n^d(\vect{x}; \vect{\theta}) = \frac{\hat f_n(d^2)}{\chi^2(d^2; p)} - 1 \ ,
\end{equation}
where $\chi^2(\cdot; p)$ denotes the (asymptotic) distribution of Mahalanobis distances for the original linear data.
Figure \ref{chiunwrapped} shows two examples of $\chi_u^2(d^2; p)$ for $p=6$ when $\sigma_0 = 3\pi/8$ (left panel) and $\sigma_0=\pi/2$ (right panel). In the first case the support of the distribution is the interval $[0, 42.\bar{6})$ while in the second case is the interval $[0,24)$.

\begin{figure*}[t]
	\centering
	\includegraphics[width=0.45\textwidth]{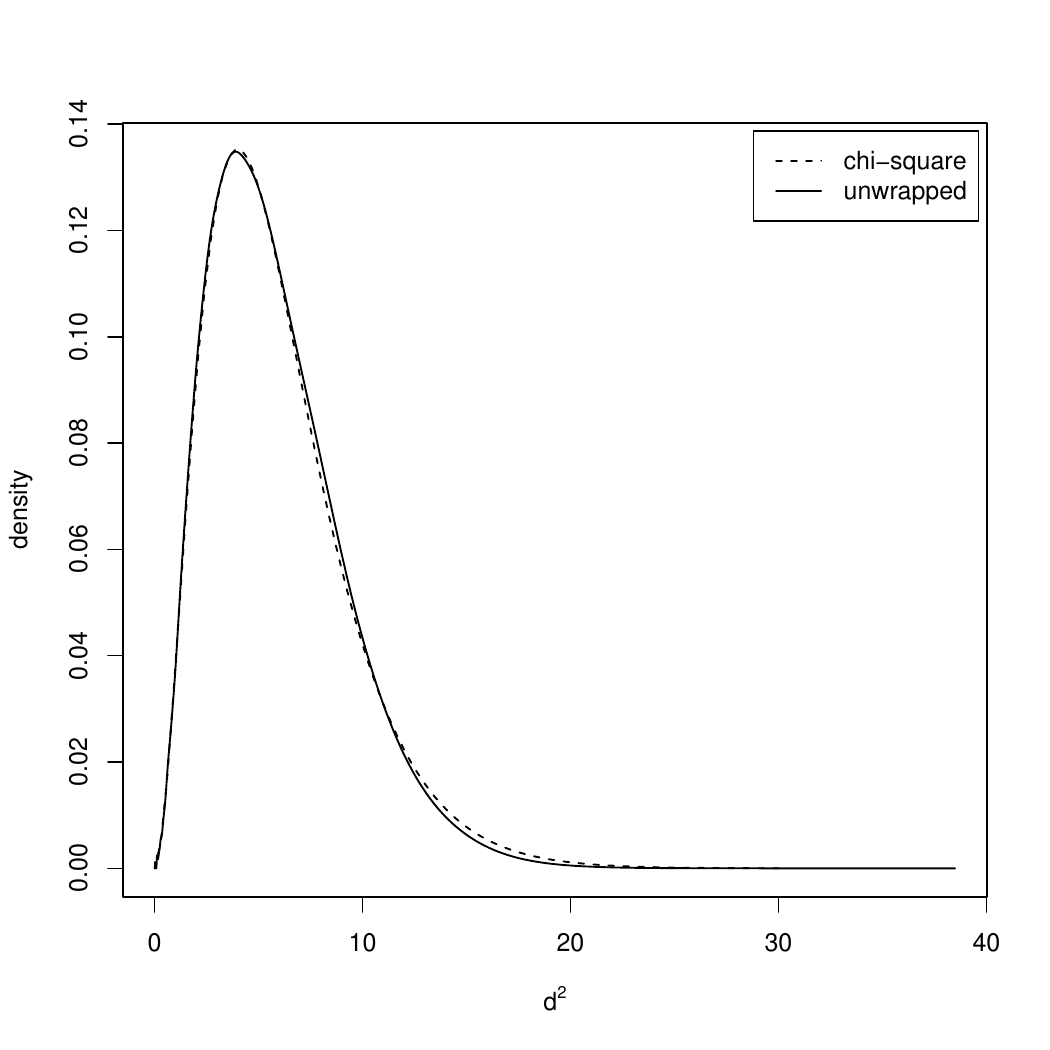}
	\includegraphics[width=0.45\textwidth]{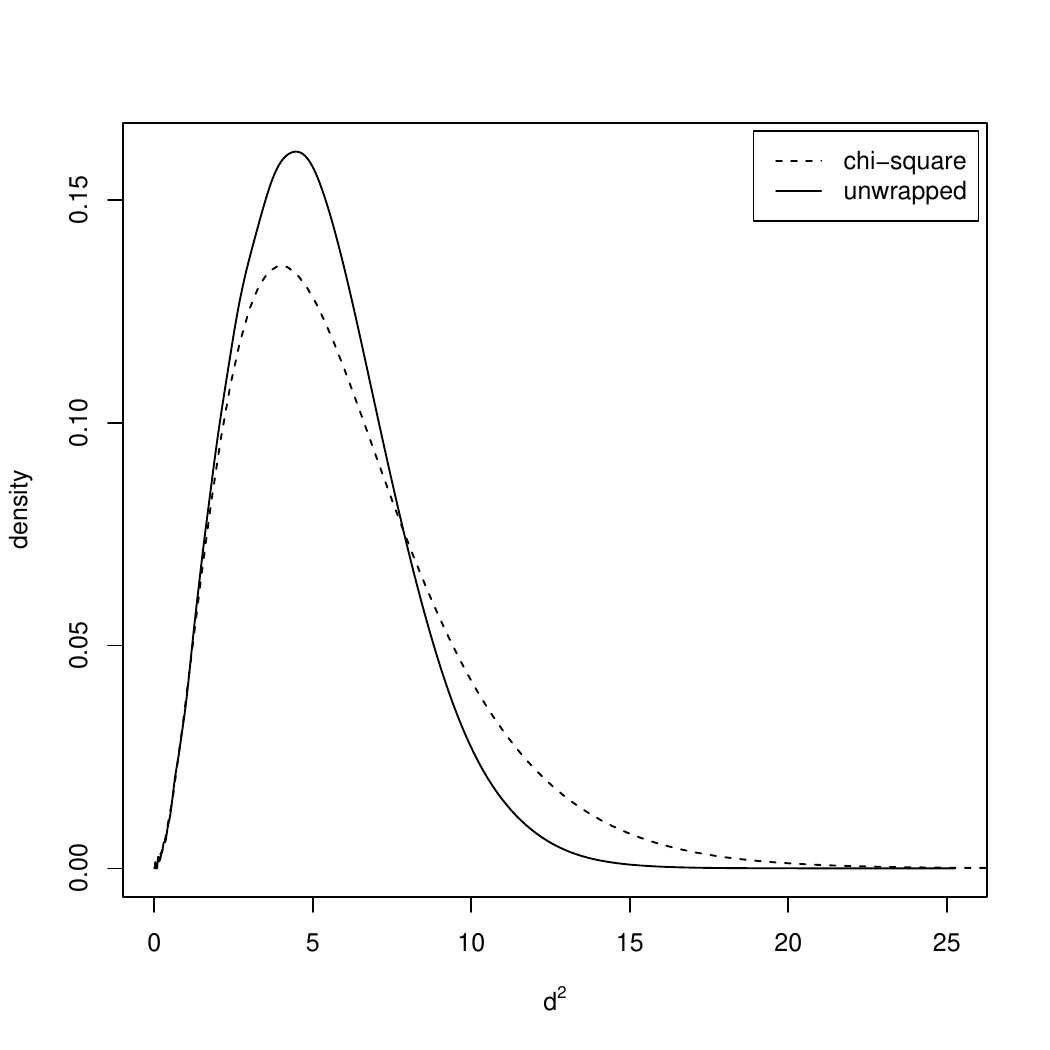}
	\caption{Distribution of the squared Mahalanobis distance for the unwrapped observations from a wrapped normal model with $\sigma_0=3 \pi/8$, (left panel) and $\sigma_0=\pi/2$ (right panel).} 
	\label{chiunwrapped}
\end{figure*}

\section{Robust fitting based on WLEE}
\label{sec:2}

Robust fitting of a multivariate wrapped unimodal elliptically symmetric model to torus data can be achieved according to a weighted version of the population estimating equations (\ref{em:population}), i.e.,
\begin{equation} \label{wem:population}
\int_{\mathbb{T}^p} w^\circ(\vect{y}) u^\circ(\vect{y}; \vect{\mu}, \Sigma) m^\circ(\vect{y}; \vect{\mu_0}, \Sigma_0) \ d\vect{y} = \vect{0}
\end{equation}
where the weight function is given by $w^\circ(\vect{y}) = w\left(\delta^\circ(\vect{y}; \vect{\theta}) \right)$.
We notice that $w^\circ(\vect{y})$ is a periodic function, i.e., $w^\circ(\vect{y}) = w^\circ(\vect{y} + 2\pi\vect{j})$, $\vect{j} \in \mathbb{Z}^p$.  The sample version of (\ref{wem:population}), that is
\begin{equation*}
  \sum_{i=1}^n w(\delta_n^\circ(\vect{y}_i; \vect{\theta})) u^\circ(\vect{y}_i; \vect{\theta}) = \vect{0} \ ,
\end{equation*}
specializes to the following WLEE for unimodal elliptically symmetric distributions 
\begin{align} \label{WLE}
\vect{\mu} & = \frac{\sum_{i=1}^n w(\delta_n^\circ(\vect{y}_i)) \sum_{\vect{j} \in \mathbb{Z}^p} v_{i\vect{j}} (\vect{y}_i + 2\pi \vect{j})}{\sum_{i=1}^n  w(\delta_n^\circ(\vect{y}_i)) \sum_{\vect{k} \in \mathbb{Z}^p} v_{i\vect{k}}} \\
\Sigma & = - \frac{2}{\sum_{i=1}^n  w(\delta_n^\circ(\vect{y}_i))} \sum_{i=1}^n  w(\delta_n^\circ(\vect{y}_i)) \sum_{\vect{j} \in \mathbb{Z}^p} v_{i\vect{j}}(\vect{y}_i + 2\pi \vect{j}-\vect{\mu}) (\vect{y}_i + 2\pi \vect{j}-\vect{\mu} )^\top \nonumber .
\end{align}
with $w(\delta_n^\circ(\vect{y}_i)) = w(\delta_n^\circ(\vect{y}_i; \vect{\mu}, \Sigma))$. 
The WLEE can be solved by a suitable modification of the iterative procedure depicted in Section \ref{sec:1} to find the MLE. 
At iteration $(s)$, based on current $v_{i\vect{j}}^{(s)}$ obtained as in (\ref{vijMLE}),  
a set of data dependent weights $w_i^{(s)}=w\left(\delta^\circ_n(\vect{y}_i;\vect{\mu}^{(s)},\Sigma^{(s)})\right)$ is computed, whose effect is that of down-weighting the contribution of those points with large Pearson residuals based on the current fit. Then, updated estimates from iteration $(s)$ to $(s+1)$ are obtained by solving the WLEE in (\ref{WLE}).
In practice, the summation over $\mathbb{Z}^p$ is replaced by a summation over $C_J$.

According to a similar reasoning, we can consider a weighted counterpart of the population estimating equation (\ref{cem:population}), that is
\begin{equation} \label{wcem:population}
\int_{\mathbb{R}^p} w(\vect{x}) u(\vect{x}; \vect{\mu}, \Sigma) m^u(\vect{x}; \vect{\mu_0}, \Sigma_0) \ d\vect{x} = \vect{0} \ .
\end{equation}
 We notice that, in this situation, the use of (\ref{pearson}) or (\ref{pearson:lin}) leads to the same estimator. Hence, one can build a WLEE based on the fitted unwrapped linear data $\hat{\vect{x}_i}$, with weights whose evaluation can be now based on  (\ref{residualfs}), (\ref{residualfs2}) or (\ref{residualfsGunw}).  
At iteration $(s)$, estimates are updated according to
\begin{align} \label{WCEM}
  \hat{\vect{\mu}}^{(s+1)} & = \frac{\sum_{i=1}^n w_i^{(s)} \dot{h}_i^{(s)} \hat{\vect{x}}_i^{(s)}}{\sum_{i=1}^n w_i^{(s)} \dot{h}_i^{(s)}} \\
\hat\Sigma^{(s+1)} & = - \frac{2}{\sum_{i=1}^n w_i^{(s)}} \sum_{i=1}^n w_i^{(s)}  \dot{h}_i^{(s)} \left(\hat{\vect{x}}_i^{(s)}-\hat{\vect{\mu}}^{(s+1)}\right)\left(\ \hat{\vect{x}}_i^{(s)}-\hat{\vect{\mu}}^{(s+1)}\right)^\top  \nonumber .
\end{align}
where $ \dot{h}_i^{(s)} = h^\prime(d(\hat{\vect{x}}_i^{(s)};\hat{\vect{\mu}}^{(s)}, \hat{\Sigma}^{(s)}))/h(d(\hat{\vect{x}}_i^{(s)};\hat{\vect{\mu}}^{(s)}, \hat{\Sigma}^{(s)}))$.
We stress that the derivation of the WLEE for torus data generalizes the approach introduced in \cite{saraceno2021robust}, that was confined to a data augmentation perspective rather than on genuine maximum likelihood estimation. Therefore, here it is possible to derive a WLE that is the weighted counterpart of the MLE (and of its approximated version) and we are not limited to a CEM-type algorithm. 

\begin{remark}
	For a fixed bandwidth matrix $H$, the newly established weighting approach based on (\ref{residualfs2}) requires that a multivariate kernel density estimate is computed at each iteration. The same is also true when using the weights in (\ref{residualfsGunw}). In contrasts, the procedure based on (\ref{residualfs}) requires the evaluation of a more demanding torus kernel density estimate only once. However, computing a new kernel density estimate for linear data at each iteration adds no computational burden. 
\end{remark}

\subsection{Bandwidth selection}
The finite sample robustness of the WLE depends on the selection of the smoothing parameter $h$, whatever the type of Pearson residuals among those listed above. 
Large values of $h$ lead to smooth kernel density estimates that are stochastically close to the postulated model. As a result, Pearson residuals are all close to zero, weights all close to one, the WLE gains efficiency at the model but is less robust. On the opposite, small values of $h$ make the kernel estimate more sensitive to the occurrence of outliers. Then, Pearson residuals become large where the data are in disagreement with the model and such points are properly down-weighted: the WLE looses efficiency at the model but recover robustness to outliers contamination.

The selection of $h$ is still an open issue in weighted likelihood estimation. From a practical point of view, selecting a too small value for $h$ can lead to an undue excess of down-weighting and hide relevant features in the data. In contrasts, a too large value could provide an insufficient down-weighting and misleading estimates, as well as the MLE. One strategy relies on a monitoring approach \citep{agostinelli2018discussion, greco2020weighted, greco2020weighted1} in the selection of the bandwidth. It is suggested
to run the procedure for different values of the smoothing parameter $h$ and monitor the behavior of estimates and/or weights as $h$ varies in a reasonable range. Monitoring the weights as $h$ varies is expected to 
describe a transition from a robust to a non robust fit, since for increasing values of $h$ all the weights approach one and the methodology does not allow to discriminate between the genuine part of te data and the outliers, anymore. As well, one can monitor a summary of the weights, such as the empirical down-weighting level $1- \bar w$, where $\bar w$ denotes the average of the weights. It can be considered as a rough estimate of the amount of down-weighting.
The approach of monitoring unveils patterns and substructures otherwise hidden that can aid the comprehension of the phenomenon under study and the sources of contamination. 

\subsection{Initialization}
The iterative algorithm to solve the WLEE in (\ref{WLE}) or (\ref{WCEM}) can be initialized using subsampling. 
The subsample size is expected to be as small as possible in order to increase the probability to get an outliers free initial subset but large enough to guarantee estimation of the unknown parameters. 

The initial value for the mean vector $\vect{\mu}$ is set equal to the circular sample mean. Initial diagonal elements of $\Sigma$ can be obtained as $\Sigma^{(0)}_{rr}=-2\log(\hat{\rho}_r)$, where $\hat{\rho}_r$ is the sample mean resultant length, whereas its off-diagonal elements are given by
$\Sigma^{(0)}_{rs}=\rho_c(\vect{y}_r, \vect{y}_s) \sigma_{rr}^{(0)} \sigma_{ss}^{(0)}$ ($r \neq s$), where $\rho_c(\vect{y}_r, \vect{y}_s)$ is the circular correlation coefficient, $r,2=1,2,\ldots,p$ \citep{JammalamadakaSenGupta2001}.
It is suggested to run the algorithm from several starting points.
The {\it best} solution can be selected by minimizing the probability to observe a small Pearson residual \citep{agostinelli2019weighted, saraceno2021robust}.
According to the experience of the authors, a small number of subsamples is sufficient and very often they led to the same solution.

\subsection{Outliers detection}
\label{sec:4}
The objective of a robust analysis is twofold: from the one hand we protect model fitting from the adverse effect of anomalous values, from the other hand it is of interest to provide effective tools to identify outliers based on formal rules and the robust fit. The process of outliers detection allows to investigate deeply their source and nature and unveil hidden and unexpected sub-structures in the data that are worth studying and may not have been considered otherwise \citep{farcomeni2016robust}. 
The inspection of weights provides a first approach for the task of outliers detection: points whose weight is below a fixed, and opportunely low, threshold (see also \cite{greco2020weighted} in a different framework) could be declared as outlying. 
However, it would be desirable to base outliers detection on an appropriate statistic to test outlyingness of each data point. 
In this respect, at least when robust fitting relies on (\ref{WCEM}), it is suggested to 
build a decision rule based  on the fitted unwrapped linear data at convergence, treating them as a proper sample from a multivariate {\it linear} variate with density function as in (\ref{elliptic}). This approximation is supposed to work as long as torus data show a sufficiently high concentrated distribution.
Therefore, one can pursue outliers detection looking at the squared robust  distances $d^2(\hat{\vect{x}_i}; \hat\theta)$. Outlying data are those whose distance exceeds a fixed threshold corresponding to the $(1-\alpha)$-level quantile of a chi-square distribution with $p$ degrees of freedom \citep{greco2021robust}.

\section{Properties}
\label{sec:3}
Here, the asymptotic behavior of the proposed estimators and their robustness properties are investigated. The reader is pointed to \cite{agostinelli2019weighted} for details on the asymptotic behavior of the WLE in a general setting. 
Hereafter, we assume broad regularity conditions  for consistency and asymptotic normality of the MLE to hold.

\subsection{Asymptotic distribution under the model}
The following Lemma give the conditions to ensure the required asymptotic behavior of the Pearson residuals in (\ref{residualfs}), (\ref{residualfs2}) and (\ref{residualfsGunw}) and the corresponding weights at the assumed model. Henceforth, $\hat{f}(\vect{y}) = \hat{m}(\vect{y}, \vect{\theta}_0)$ (a.s.) and
\begin{equation*}
\delta(\vect{y};\vect{\theta}) = \frac{\hat{m}(\vect{y}, \vect{\theta}_0)}{\hat{m}(\vect{y}, \vect{\theta})} - 1
\end{equation*}
where $\hat{m}(\vect{y}; \vect{\theta}) = \int k(\vect{y}-\vect{t}) m^*(\vect{t}; \vect{\theta}) \ d\vect{t}$ is the smoothed model involved in the definition of Pearson residuals in use, i.e. $m^*(\vect{y})$ can be $m^\circ(\vect{y})$, $m^u(\vect{x})$ or $\chi_u^2(d^2)$, respectively. Moreover, let $\delta_n$ be the Pearson residuals defined as either in (\ref{residualfs}), (\ref{residualfs2}) or (\ref{residualfsGunw}),  
and $\hat{f}_n$ be a kernel density estimator with kernel $K_H(\cdot)$ and bandwidth  matrix $H$, corresponding to $\hat{f}_n^\circ$, $\hat{f}_n^u$ or $\hat{f}_n^{du}$, respectively, according to the definition of $\delta_n$ in use.

\begin{lemma} \label{lemma:deltan}
Assume that: (i) the kernel $K_H(\cdot)$ is of bounded variation; (ii) the model is correctly specified, that is, there exists $\vect{\theta}_0 \in \Theta$ such that $f^\circ(\vect{y}) = m^\circ(\vect{y}; \vect{\theta}_0)$ (a.s.); (iii) the model density is positive over the support $ \mathcal{Y}$, that is, there exists $K > 0$ such that $\sup_{\vect{y} \in \mathcal{Y}, \vect{\theta} \in \Theta} m^\circ(\vect{y}; \vect{\theta}) \geq K$; (iv) $A(0)=0$, $A^\prime(0)=1$ and $A^{\prime\prime}(\delta)$ is a bounded and continuous function w.r.t. $\delta$. Then
\begin{align*}
\sup_{\vect{y} \in \mathcal{Y}, \vect{\theta} \in \Theta} \vert \delta_n(\vect{y};\vect{\theta}) - \delta(\vect{y};\vect{\theta}) \vert \stackrel{a.s.}{\rightarrow} 0 \\
\sup_{\vect{y} \in \mathcal{Y}, \vect{\theta} \in \Theta} \vert w (\delta_n(\vect{y};\vect{\theta})) -  w (\delta(\vect{y};\vect{\theta})) \vert \stackrel{a.s.}{\rightarrow}  0 \\
\sup_{\vect{y} \in \mathcal{Y}, \vect{\theta} \in \Theta} \vert w^\prime(\delta_n(\vect{y};\vect{\theta})) -  w^\prime(\delta(\vect{y};\vect{\theta})) \vert \stackrel{a.s.}{\rightarrow}  0 \ .
\end{align*}
\end{lemma}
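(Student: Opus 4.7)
The argument proceeds in three steps: a uniform strong law for the kernel density estimator $\hat f_n$ on its (bounded) domain, propagation through the ratio defining $\delta_n$, and propagation through the maps $\delta \mapsto w(\delta)$ and $\delta \mapsto w'(\delta)$.

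First I would establish
\begin{equation*}
\sup_{\vect{y} \in \mathcal{Y}} \bigl| \hat f_n(\vect{y}) - \hat f(\vect{y}) \bigr| \stackrel{a.s.}{\rightarrow} 0.
\end{equation*}
Because the bandwidth matrix $H$ is held fixed (the smoothing serves only to build robust weights), $\hat f_n(\vect{y})$ is simply the empirical mean of the uniformly bounded i.i.d.\ random functions $K_H(\vect{y}-\cdot)$ indexed by $\vect{y} \in \mathcal{Y}$. Assumption (i) places this family in a Glivenko--Cantelli class -- bounded variation translates into finite bracketing numbers, or equivalently into a VC-subgraph class via a representation as a signed combination of indicators of orthants -- and assumption (ii) identifies the almost sure limit as $\hat f(\vect{y}) = \hat m(\vect{y};\vect{\theta}_0)$. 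The same reasoning applies in each of the three variants $\hat f_n^\circ, \hat f_n^u, \hat f_n^{du}$, where the relevant domain is bounded: a hyper-cube of side $2\pi$ for the torus and unwrapped variants, and a bounded interval of squared distances for the Mahalanobis variant.

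For Step 2, I would write
\begin{equation*}
\delta_n(\vect{y};\vect{\theta}) - \delta(\vect{y};\vect{\theta}) = \frac{\hat f_n(\vect{y}) - \hat f(\vect{y})}{\hat m(\vect{y};\vect{\theta})}
\end{equation*}
and invoke assumption (iii), read naturally as $\inf_{\vect{y},\vect{\theta}} m^\circ(\vect{y};\vect{\theta}) \ge K > 0$, a bound that is preserved by convolution with the probability kernel $K_H$ and thus gives $\hat m(\vect{y};\vect{\theta}) \ge K$ uniformly in $(\vect{y},\vect{\theta})$. Dividing the uniform convergence of the numerator by this deterministic positive lower bound yields the first display. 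Step 3 then transfers the convergence to $w$ and $w'$: since $\hat f$ is continuous and bounded above and away from zero on the compact set $\mathcal Y$, the limit $\delta(\vect{y};\vect{\theta})$ lies in a fixed compact subset $[\delta_{\min},\delta_{\max}] \subset (-1,\infty)$, and by Step 2 the same is true, almost surely and eventually, of $\delta_n$ on a slightly enlarged compact subinterval. Assumption (iv) makes $A \in C^2$ with bounded $A''$, so $A$ and $A'$ are Lipschitz on compacta; inspection of (\ref{weightfun}) then shows that $w$ and $w'$ are Lipschitz on any compact subset of $(-1,\infty)$, since the $\min$-truncation at $1$ and the positive part preserve Lipschitz continuity and the denominator $(\delta+1)$ stays bounded away from zero on the relevant range. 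Composing uniform convergence of $\delta_n$ with uniform Lipschitz continuity of $w$ and $w'$ delivers the remaining two displays.

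The main obstacle I foresee is a clean justification of the uniform strong law in Step 1 when $p > 1$: bounded variation of $K_H$ admits several nonequivalent definitions (Vitali, Hardy--Krause), and one must select one compatible with a VC or bracketing bound. For product kernels this is immediate, but in the $\hat f_n^{du}$ variant the kernel is composed with the data-dependent Mahalanobis transformation $\vect{x} \mapsto d^2(\vect{x};\vect{\theta})$ which re-indexes the class by $\vect{\theta}$; provided $\Theta$ is compact and the transformation is uniformly smooth in $\vect{\theta}$, this extra indexing can be absorbed into the function class without changing its entropy order. Once Step 1 is secured, Steps 2 and 3 are essentially algebraic.
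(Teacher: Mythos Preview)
Your proposal is correct and follows essentially the same route as the paper: uniform Glivenko--Cantelli convergence of $\hat f_n$ under (i)--(ii), division by the uniform lower bound $K$ from (iii) to get the first display, and then passage through $w$ and $w'$ using (iv) and the form of (\ref{weightfun}). Your Step~3 is in fact more careful than the paper's, which simply invokes ``the continuous mapping theorem'' without isolating the compact range of $\delta$ or the Lipschitz property needed to preserve \emph{uniform} convergence; your concerns about multivariate bounded variation and the $\vect{\theta}$-indexing in the Mahalanobis variant are legitimate technical points that the paper does not address beyond a citation.
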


\begin{proof}
 Under assumptions (i) and (ii) we have that $\hat{f}_n(\vect{y}) \stackrel{a.s.}{\rightarrow} \hat{m}(\vect{y}; \vect{\theta}_0)$ uniformly w.r.t. $\vect{y}$ as a result of the Glivenko-Cantelli theorem \citep{rao2014nonparametric}. Under (iii) we obtain
\begin{align*}
\sup_{\vect{y} \in \mathcal{Y}, \vect{\theta} \in \Theta} \left\vert \delta_n(\vect{y}; \vect{\theta}) - \delta(\vect{y}; \vect{\theta}) \right\vert & = \sup_{\vect{y} \in \mathcal{Y}, \vect{\theta} \in \Theta} \left\vert \frac{\hat{f}(\vect{y}) - \hat{m}(\vect{y}; \vect{\theta}_0)}{\hat{m}(\vect{y}; \vect{\theta})}  \right\vert \\
& \leq \frac{\sup_{\vect{y} \in \mathcal{Y}, \vect{\theta} \in \Theta} \left\vert \hat{f}(\vect{y}) - \hat{m}(\vect{y}; \vect{\theta}_0) \right\vert}{K}   \\
& \stackrel{a.s.}{\rightarrow} 0 \ .
\end{align*}
The second and third statements follows from equation (\ref{weightfun}), assumption (iv) and the continuous mapping theorem.
\end{proof}

\begin{remark}
	Assumption (iii) in Lemma \ref{lemma:deltan} is plausible in the case of toroidal densities. It allows to relax the mathematical device of evaluating the supremum of the Pearson residuals, since it avoids the occurrence of small (almost null) densities in the tails that would affect the denominator of Pearson residuals \cite{agostinelli2019weighted}. It is satisfied for wrapped models obtained from (\ref{elliptic}) under e.g. the assumption that $h(\cdot)$ is strictly positive in the hyper-cube $\times_{i=1}^p (\mu_i - \pi, \mu_i + \pi]$ and $\Sigma$ is positive definite.    
\end{remark}

\begin{lemma}  \label{wconsistency1}
Assume that for all $\vect{y}$ and $\vect{\theta}$, $\Psi(\vect{y};\vect{\theta}) = w(\delta(\vect{y};\vect{\theta})) u(\vect{y};\vect{\theta})$ is differentiable and the matrix $\dot{\Psi}(\vect{y};\vect{\theta})$ with elements $i,j$ be $\partial \Psi_i/\partial \theta_j$ is positive definite and $\E_{\vect{\theta}_0}(\dot{\Psi}(\vect{Y};\vect{\theta}))$ is finite, then
\begin{enumerate}
\item[i.] for every $n$, if there exists a solution $\check{\vect{\theta}}_n$ of $\sum_{i=1}^n \Psi(\vect{Y}_i;\vect{\theta}) = \vect{0}$ this solution is unique;
\item[ii.] let $\check{\vect{\theta}}_n$ be the sequence of solutions, then $\check{\vect{\theta}}_n \stackrel{a.s.}{\rightarrow} \vect{\theta}_0$ as $n \rightarrow \infty$.
\end{enumerate}
\end{lemma}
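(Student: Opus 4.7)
The plan is to follow the standard consistency template for Z-estimators, leveraging the positive definiteness assumption to handle both uniqueness and identification.

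For part (i), I would define $F_n(\vect{\theta}) = \sum_{i=1}^n \Psi(\vect{Y}_i;\vect{\theta})$, whose Jacobian is $\sum_{i=1}^n \dot{\Psi}(\vect{Y}_i;\vect{\theta})$, a finite sum of positive definite matrices and hence positive definite at every $\vect{\theta}$. A continuously differentiable map whose Jacobian is everywhere positive definite satisfies the strict monotonicity relation $(F_n(\vect{\theta}_1) - F_n(\vect{\theta}_2))^\top (\vect{\theta}_1 - \vect{\theta}_2) > 0$ for $\vect{\theta}_1 \neq \vect{\theta}_2$, which I would verify by applying the fundamental theorem of calculus along the segment joining the two candidate roots (working on a convex neighbourhood of $\Theta$, which is harmless in the location--scatter setting since the PD cone is convex). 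Strict monotonicity immediately yields injectivity, so at most one solution of $F_n(\vect{\theta}) = \vect{0}$ can exist.

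For part (ii), I would first check that $\vect{\theta}_0$ solves the population equation $\E_{\vect{\theta}_0}[\Psi(\vect{Y};\vect{\theta}_0)] = \vect{0}$. Under correct specification (assumption (ii) of Lemma \ref{lemma:deltan}) we have $\delta(\vect{y};\vect{\theta}_0) \equiv 0$, and the RAF conditions in assumption (iv) give $w(0) = 1$, so $\Psi(\vect{y};\vect{\theta}_0) = u(\vect{y};\vect{\theta}_0)$ reduces to the ordinary score, which integrates to zero under $\vect{\theta}_0$. Next, I would argue that $\vect{\theta}_0$ is the \emph{unique} zero of $G(\vect{\theta}) := \E_{\vect{\theta}_0}[\Psi(\vect{Y};\vect{\theta})]$: differentiating under the integral, justified by the finite expectation of $\dot{\Psi}$, yields $\dot{G}(\vect{\theta}) = \E_{\vect{\theta}_0}[\dot{\Psi}(\vect{Y};\vect{\theta})]$, which is positive definite as a Bochner integral of positive definite matrices. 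The same strict monotonicity argument as in (i) then forces uniqueness of the population zero.

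With both sample-level and population-level roots uniquely pinned down, consistency follows from a uniform law of large numbers: under the stated regularity conditions, $n^{-1}\sum_{i=1}^n \Psi(\vect{Y}_i;\vect{\theta})$ converges almost surely and uniformly in $\vect{\theta}$ (on compacta in $\Theta$) to $G(\vect{\theta})$, via a Glivenko--Cantelli bracketing argument or a stochastic equicontinuity argument combined with the continuity of $\Psi$ in $\vect{\theta}$ inherited from the smoothed model densities. Combined with the well-separated unique zero of $G$ at $\vect{\theta}_0$, this yields $\check{\vect{\theta}}_n \stackrel{a.s.}{\to} \vect{\theta}_0$ by the classical Z-estimator consistency theorem.

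The main obstacle is establishing the uniform law of large numbers, because $\Psi$ inherits the nonstandard structure of the weight $w(\delta(\vect{y};\vect{\theta}))$ defined through a smoothed density ratio, and an integrable envelope for the class $\{\Psi(\cdot;\vect{\theta}) : \vect{\theta} \in \Theta\}$ is not automatic; a careful use of the boundedness of $w \in [0,1]$ together with the continuity provided by Lemma \ref{lemma:deltan} should resolve this. A secondary delicate point is the interchange of derivative and expectation used in computing $\dot{G}$, which goes beyond pointwise differentiability and leans directly on the finiteness assumption $\E_{\vect{\theta}_0}[\dot{\Psi}(\vect{Y};\vect{\theta})] < \infty$ to furnish the required local dominating function.
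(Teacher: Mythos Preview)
Your argument for part (i) is essentially the paper's: the paper simply cites Theorem 10.9 of \citet{maronna2019robust}, which is precisely the strict-monotonicity-from-positive-definite-Jacobian statement you spell out.

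For part (ii) you take a genuinely different route. The paper does \emph{not} go through a uniform law of large numbers and the Z-estimator argmax theorem. Instead, after observing (as you do) that $\Psi(\vect{y};\vect{\theta}_0)=u(\vect{y};\vect{\theta}_0)$, it writes a first-order Taylor expansion of $\sum_i \Psi(\vect{Y}_i;\cdot)$ around $\vect{\theta}_0$, evaluated at the root $\check{\vect{\theta}}_n$, and solves for
\[
\check{\vect{\theta}}_n-\vect{\theta}_0=\Bigl[\tfrac{1}{n}\sum_i \dot\Psi(\vect{Y}_i;\vect{\theta}_i)\Bigr]^{-1}\,\tfrac{1}{n}\sum_i u(\vect{Y}_i;\vect{\theta}_0),
\]
then argues that the second factor tends to zero a.s.\ by the ordinary SLLN for the score, while the first is a.s.\ bounded. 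This is more elementary than your plan: it needs only pointwise SLLN and the assumed positive definiteness/finite expectation of $\dot\Psi$, and completely sidesteps the ULLN and the integrable-envelope issue you flag as the main obstacle. What your route buys is a cleaner logical structure (identification $+$ uniform convergence $\Rightarrow$ consistency) that does not rely on controlling the Jacobian at random intermediate points $\vect{\theta}_i$; conversely, the paper's Taylor-expansion argument is shorter and avoids any empirical-process machinery, at the cost of being somewhat informal about why the inverted matrix stays bounded before consistency is established.
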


\begin{proof}
  Part \textit{i.} is an application of Theorem 10.9 in \cite{maronna2019robust}. For part \textit{ii.} notice that $\Psi(\vect{y};\vect{\theta}_0) = u(\vect{y};\vect{\theta}_0)$ and by a first order Taylor expansion around $\vect{\theta}_0$ of $\Psi(\vect{y};\vect{\theta})$ we have
\begin{equation*}
0 = \sum_{i=1}^n \Psi(\vect{Y}_i;\check{\vect{\theta}}_n) = \sum_{i=1}^n u(\vect{Y}_i;\vect{\theta}_0) + \sum_{i=1}^n \dot{\Psi}(\vect{Y}_i;\vect{\theta}_i) (\check{\vect{\theta}}_n - \vect{\theta}_0)
\end{equation*}
hence
\begin{equation*}
\check{\vect{\theta}}_n - \vect{\theta}_0 = \left[ \frac{1}{n} \sum_{i=1}^n \dot{\Psi}(\vect{Y}_i;\vect{\theta}_i)\right]^{-1} \frac{1}{n} \sum_{i=1}^n u(\vect{Y}_i;\vect{\theta}_0)
\end{equation*}
On the right hand side, the first term is bounded almost surely, while the second term goes to zero almost surely by the strong law of large numbers for i.i.d. random variables. Hence, $\check{\vect{\theta}}_n \stackrel{a.s.}{\rightarrow} \vect{\theta}_0$ as $n \rightarrow \infty$.
\end{proof}  

\begin{theorem}[Consistency] \label{wconsistency2}
Under the assumptions of Lemmas \ref{lemma:deltan} and \ref{wconsistency1}. Assume $\Psi_n(\vect{y};\vect{\theta}) = w(\delta_n(\vect{y};\vect{\theta})) u(\vect{y};\vect{\theta})$ is differentiable and the matrix $\dot{\Psi}_n(\vect{y};\vect{\theta})$ with elements $i,j$ be $\partial \Psi_{n,i}/\partial \theta_j$ is positive definite, then
\begin{enumerate}
\item[i.] for every $n$, if there exists a solution $\hat{\vect{\theta}}_n$ of $\sum_{i=1}^n \Psi_n(\vect{Y}_i;\vect{\theta}) = \vect{0}$ this solution is unique;
\item[ii.] let $\hat{\vect{\theta}}_n$ be the sequence of solutions, then $\hat{\vect{\theta}}_n \stackrel{a.s.}{\rightarrow} \vect{\theta}_0$ as $n \rightarrow \infty$.
\end{enumerate}
\end{theorem}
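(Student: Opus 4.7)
The plan is to mirror the structure of Lemma \ref{wconsistency1}, replacing $\Psi$ by $\Psi_n$ and invoking Lemma \ref{lemma:deltan} to transfer conclusions from the population-level weights to the sample-level ones. Part \textit{i.} is immediate: since $\dot{\Psi}_n$ is assumed positive definite, Theorem 10.9 in \cite{maronna2019robust} applies verbatim, exactly as in Lemma \ref{wconsistency1}.i.

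For part \textit{ii.}, I would perform a first-order Taylor expansion of $\Psi_n$ around $\vect{\theta}_0$ evaluated at $\hat{\vect{\theta}}_n$,
\begin{equation*}
\vect{0} \;=\; \sum_{i=1}^n \Psi_n(\vect{Y}_i;\hat{\vect{\theta}}_n) \;=\; \sum_{i=1}^n \Psi_n(\vect{Y}_i;\vect{\theta}_0) + \left[\sum_{i=1}^n \dot{\Psi}_n(\vect{Y}_i;\vect{\theta}_i^\star)\right](\hat{\vect{\theta}}_n - \vect{\theta}_0),
\end{equation*}
for some $\vect{\theta}_i^\star$ on the segment between $\hat{\vect{\theta}}_n$ and $\vect{\theta}_0$, solve for $\hat{\vect{\theta}}_n - \vect{\theta}_0$, and reduce the problem to showing (a) $n^{-1}\sum_i \Psi_n(\vect{Y}_i;\vect{\theta}_0) \stackrel{a.s.}{\to} \vect{0}$ and (b) the averaged Jacobian is eventually a.s.\ invertible with bounded inverse.

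The crux of (a) is the identity $\hat{f}(\vect{y}) = \hat{m}(\vect{y};\vect{\theta}_0)$ (a.s.) built into the hypotheses of Lemma \ref{lemma:deltan}, which forces $\delta(\vect{y};\vect{\theta}_0) \equiv 0$ and therefore $w(\delta(\vect{y};\vect{\theta}_0)) = w(0) = 1$. Decomposing
\begin{equation*}
\frac{1}{n}\sum_{i=1}^n \Psi_n(\vect{Y}_i;\vect{\theta}_0) = \frac{1}{n}\sum_{i=1}^n u(\vect{Y}_i;\vect{\theta}_0) + \frac{1}{n}\sum_{i=1}^n \left[w(\delta_n(\vect{Y}_i;\vect{\theta}_0)) - 1\right] u(\vect{Y}_i;\vect{\theta}_0),
\end{equation*}
the first average vanishes a.s.\ by the SLLN since the true score has zero mean, while the second is dominated by $\sup_{\vect{y}} |w(\delta_n(\vect{y};\vect{\theta}_0)) - 1|$ times the a.s.\ bounded empirical first moment of $\|u(\vect{Y};\vect{\theta}_0)\|$; the first factor tends to zero by Lemma \ref{lemma:deltan}. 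For (b), the uniform convergence of $w^\prime(\delta_n)$ from Lemma \ref{lemma:deltan} propagates to $\dot{\Psi}_n \to \dot{\Psi}$ uniformly on $\mathcal{Y}\times\Theta$, and combined with the positive definiteness hypothesis and the finiteness/continuity of $\E_{\vect{\theta}_0} \dot{\Psi}(\vect{Y};\vect{\theta})$ from Lemma \ref{wconsistency1}, it yields the required uniform control in a neighborhood of $\vect{\theta}_0$.

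The main obstacle I anticipate is the circularity inherent in the Taylor argument: the intermediate points $\vect{\theta}_i^\star$ depend on $\hat{\vect{\theta}}_n$, so the Jacobian bound in (b) presupposes the consistency one is trying to prove. I would break this loop by a standard two-step localization: first, uniform convergence of $\Psi_n$ to $\Psi$ on a compact neighborhood of $\vect{\theta}_0$ together with the uniqueness of the root proved in Lemma \ref{wconsistency1} forces any almost-sure cluster point of $\hat{\vect{\theta}}_n$ to coincide with $\vect{\theta}_0$; then the Taylor expansion above, now legitimately applied eventually within the localized region, upgrades this to the claimed almost-sure convergence $\hat{\vect{\theta}}_n \stackrel{a.s.}{\to} \vect{\theta}_0$.
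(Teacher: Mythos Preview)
Your proposal is correct but follows a different route from the paper. The paper does \emph{not} expand $\Psi_n$ around $\vect{\theta}_0$; instead it introduces the auxiliary sequence $\check{\vect{\theta}}_n$ from Lemma~\ref{wconsistency1} (the root of $\sum_i \Psi(\vect{Y}_i;\vect{\theta})=\vect{0}$), Taylor-expands $\Psi_n$ around $\check{\vect{\theta}}_n$, and shows
\[
\hat{\vect{\theta}}_n-\check{\vect{\theta}}_n
=-\Bigl[\tfrac{1}{n}\textstyle\sum_i \dot{\Psi}_n(\vect{Y}_i;\vect{\theta}_{n,i})\Bigr]^{-1}
\tfrac{1}{n}\textstyle\sum_i\bigl(\Psi_n(\vect{Y}_i;\check{\vect{\theta}}_n)-\Psi(\vect{Y}_i;\check{\vect{\theta}}_n)\bigr)\stackrel{a.s.}{\longrightarrow}\vect{0},
\]
the second factor being bounded by $\sup_{\vect{y},\vect{\theta}}|w(\delta_n)-w(\delta)|\cdot\tfrac{1}{n}\sum_i\|u(\vect{Y}_i;\check{\vect{\theta}}_n)\|$ via Lemma~\ref{lemma:deltan}; then $\check{\vect{\theta}}_n\stackrel{a.s.}{\to}\vect{\theta}_0$ from Lemma~\ref{wconsistency1} finishes the argument. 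The virtue of this detour through $\check{\vect{\theta}}_n$ is precisely that it sidesteps the circularity you flag: the intermediate points now lie between $\hat{\vect{\theta}}_n$ and a sequence already known to converge, and the residual term is a difference $\Psi_n-\Psi$ evaluated at a \emph{single} point, so no localization step is needed. Your direct expansion around $\vect{\theta}_0$ works too---your decomposition in~(a) exploiting $w(\delta(\cdot;\vect{\theta}_0))\equiv 1$ is exactly the specialization of the paper's bound to $\vect{\theta}=\vect{\theta}_0$---but you then have to pay for it with the two-step localization argument. Both are valid; the paper's use of the intermediary $\check{\vect{\theta}}_n$ is a tidier way to package the same ingredients.
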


\begin{proof}
For each $n$ consider a first order Taylor expansion around $\check{\vect{\theta}}_n$ of $\Psi_n(\vect{Y}_i; \vect{\theta})$ and hence
\begin{equation*}
\sum_{i=1}^n \left( \Psi_n(\vect{Y}_i;\hat{\vect{\theta}}_n) - \Psi_n(\vect{Y}_i;\check{\vect{\theta}}_n) \right) = \sum_{i=1}^n \dot{\Psi}_n(\vect{Y}_i; \vect{\theta}_{n,i}) (\hat{\vect{\theta}}_n - \check{\vect{\theta}}_n) \ ,
\end{equation*}  
and since
\begin{align*}
\vect{0} = \sum_{i=1}^n \Psi_n(\vect{Y}_i; \hat{\vect{\theta}}_n) & = \sum_{i=1}^n \left( \Psi_n(\vect{Y}_i;\hat{\vect{\theta}}_n) - \Psi_n(\vect{Y}_i;\check{\vect{\theta}}_n) \right) + \sum_{i=1}^n \left( \Psi_n(\vect{Y}_i;\check{\vect{\theta}}_n) - \Psi(\vect{Y}_i; \check{\vect{\theta}}_n) \right) \\
& = \sum_{i=1}^n \dot{\Psi}_n(\vect{Y}_i; \vect{\theta}_{n,i}) (\hat{\vect{\theta}}_n - \check{\vect{\theta}}_n) + \sum_{i=1}^n \left( \Psi_n(\vect{Y}_i;\check{\vect{\theta}}_n) - \Psi(\vect{Y}_i; \check{\vect{\theta}}_n) \right) \ ,
\end{align*}
we have
\begin{equation*}
\hat{\vect{\theta}}_n - \check{\vect{\theta}}_n = - \left[ \frac{1}{n} \sum_{i=1}^n \dot{\Psi}_n(\vect{Y}_i; \vect{\theta}_{n,i})\right]^{-1} \frac{1}{n} \sum_{i=1}^n \left( \Psi_n(\vect{Y}_i;\check{\vect{\theta}}_n) - \Psi(\vect{Y}_i; \check{\vect{\theta}}_n) \right) \ .
\end{equation*}
The first term is bounded almost surely, while for the second term, we notice that
\begin{align*}
\left\| \frac{1}{n} \sum_{i=1}^n \left( \Psi_n(\vect{Y}_i;\check{\vect{\theta}}_n) - \Psi(\vect{Y}_i; \check{\vect{\theta}}_n) \right) \right\| & = \left\| \frac{1}{n} \sum_{i=1}^n \left( w(\delta_n(\vect{Y}_i;\check{\vect{\theta}}_n)) -  w(\delta(\vect{Y}_i;\check{\vect{\theta}}_n)) \right) u(\vect{Y}_i;\check{\vect{\theta}}_n) \right\| \\
& \le \sup_{\vect{y} \in \mathcal{Y}, \vect{\theta} \in \Theta} \left\vert w(\delta_n(\vect{Y}_i;\check{\vect{\theta}}_n)) -  w(\delta(\vect{Y}_i;\check{\vect{\theta}}_n)) \right\vert \\
& \times \frac{1}{n} \sum_{i=1}^n \left\Vert u(\vect{Y}_i;\check{\vect{\theta}}_n) \right\Vert  
\end{align*}
the first term goes to zero almost surely by Lemma \ref{lemma:deltan}, while the second term is bounded almost surely by assumption on the second moment of the score function. Hence $\hat{\vect{\theta}}_n - \check{\vect{\theta}}_n \stackrel{a.s.}{\rightarrow} \vect{0}$. On the other hand, by Lemma \ref{wconsistency1} we have $\check{\vect{\theta}}_n - \vect{\theta}_0 \stackrel{a.s.}{\rightarrow} \vect{0}$ and this concludes the proof.
\end{proof}  

\begin{remark}
	We stress again that the WLE is consistent for $\vect{\theta}_0=(\vect{\mu}_0, \Sigma_0^u)$ in the case of (\ref{wcem:population}), but the differences between the solutions to the population estimating equations (\ref{wem:population}) and (\ref{wcem:population}) are negligible for concentrated circular distributions, as well as for (\ref{em:population}) and (\ref{cem:population}).  
\end{remark}

\begin{theorem}[Asymptotic distribution]
  Under the assumptions of Theorem \ref{wconsistency2}. Assume, for each $n$, $\Psi_n$ be twice differentiable with respect to $\vect{\theta}$ with bounded derivatives; let $\dot{\Psi}_{n,jk} = \partial \Psi_{n,j}/\partial \theta_k$ assume, for all $\vect{y}$, $\vect{\theta}$ $\vert \dot{\Psi}_{n,jk} \vert \le K(\vect{y})$ with $\E(K(\vect{Y})) < \infty$. Then
\begin{equation*}  
\sqrt{n} (\hat{\vect{\theta}}_n - \vect{\theta}_0) \stackrel{d}{\rightarrow} N(\vect{0}, A^{-1})  
\end{equation*}
where $A = \E_{\vect{\theta}_0}(\Psi(\vect{y}; \vect{\theta}_0)\Psi(\vect{y}; \vect{\theta}_0)^\top)$.
\end{theorem}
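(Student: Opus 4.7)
The plan is to follow the standard M-estimator recipe: a mean-value expansion of the weighted score at $\hat{\vect{\theta}}_n$, paired with a uniform law of large numbers for the Jacobian and a central limit theorem for the leading term, with extra care devoted to stripping the data-dependent weights from the sum at $\vect{\theta}_0$. Starting from $\sum_{i=1}^n \Psi_n(\vect{Y}_i;\hat{\vect{\theta}}_n) = \vect{0}$ and expanding to first order in $\vect{\theta}$, I would write
\begin{equation*}
\sqrt{n}(\hat{\vect{\theta}}_n - \vect{\theta}_0) = - \left[\frac{1}{n}\sum_{i=1}^n \dot{\Psi}_n(\vect{Y}_i; \tilde{\vect{\theta}}_n)\right]^{-1} \frac{1}{\sqrt{n}} \sum_{i=1}^n \Psi_n(\vect{Y}_i;\vect{\theta}_0),
\end{equation*}
with $\tilde{\vect{\theta}}_n$ on the segment joining $\hat{\vect{\theta}}_n$ and $\vect{\theta}_0$, which converges a.s.\ to $\vect{\theta}_0$ by Theorem \ref{wconsistency2}.

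For the Jacobian term, the bound $\lvert \dot{\Psi}_{n,jk} \rvert \le K(\vect{y})$ with $\E(K(\vect{Y})) < \infty$ produces an integrable envelope, so a standard uniform law of large numbers yields $\frac{1}{n}\sum_{i=1}^n \dot{\Psi}_n(\vect{Y}_i;\tilde{\vect{\theta}}_n) \stackrel{a.s.}{\to} \E_{\vect{\theta}_0}[\dot{\Psi}(\vect{Y};\vect{\theta}_0)]$. At $\vect{\theta}_0$ the Pearson residual satisfies $\delta(\vect{y};\vect{\theta}_0) = 0$, and by construction of the weight function in (\ref{weightfun}) with $A(0)=0$ and $A^\prime(0)=1$, a direct computation gives $w(0)=1$ and $w^\prime(0)=0$. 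Consequently $\dot{\Psi}(\vect{y};\vect{\theta}_0) = \dot{u}(\vect{y};\vect{\theta}_0)$, and the information identity yields $\E_{\vect{\theta}_0}[\dot{u}(\vect{Y};\vect{\theta}_0)] = -A$.

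For the leading term, I would decompose
\begin{equation*}
\frac{1}{\sqrt{n}}\sum_{i=1}^n \Psi_n(\vect{Y}_i;\vect{\theta}_0) = \frac{1}{\sqrt{n}}\sum_{i=1}^n u(\vect{Y}_i;\vect{\theta}_0) + \frac{1}{\sqrt{n}}\sum_{i=1}^n \bigl[w(\delta_n(\vect{Y}_i;\vect{\theta}_0)) - 1\bigr] u(\vect{Y}_i;\vect{\theta}_0).
\end{equation*}
The first summand converges in distribution to $N(\vect{0}, A)$ by the ordinary multivariate CLT applied to the score, since $A = \E_{\vect{\theta}_0}[u u^\top]$ is finite. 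For the remainder, the crucial leverage is again $w(0)=1$ and $w^\prime(0)=0$: since $A^{\prime\prime}$ is bounded and continuous, a Taylor expansion of $w$ around $0$ gives $\lvert w(\delta) - 1 \rvert \le C\,\delta^2$ for $\delta$ in a neighborhood of $0$. Combined with the a.s.\ uniform convergence $\sup \lvert \delta_n \rvert \to 0$ from Lemma \ref{lemma:deltan}, the remainder is bounded in norm by $C\,(\sup \lvert \delta_n \rvert)^2 \cdot n^{-1/2}\sum_{i=1}^n \|u(\vect{Y}_i;\vect{\theta}_0)\|$, where the sum is $O_p(\sqrt{n})$ by the law of large numbers. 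Standard kernel rates ensure $\sup \lvert \delta_n \rvert = o_p(n^{-1/4})$, so the remainder is $o_p(1)$. Slutsky's theorem combined with Step 2 then yields $\sqrt{n}(\hat{\vect{\theta}}_n - \vect{\theta}_0) \stackrel{d}{\to} (-(-A))^{-1} N(\vect{0},A) = N(\vect{0}, A^{-1})$.

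The main obstacle, as outlined, is the stochastic negligibility of the weight perturbation in the leading term: the naive bound from Lemma \ref{lemma:deltan} only gives $O_p(\sup\lvert\delta_n\rvert) \cdot O_p(\sqrt{n})$, which is not $o_p(1)$. The essential trick is to exploit the vanishing first derivative $w^\prime(0)=0$ inherited from the RAF normalization in order to square the rate, effectively converting a first-order uniform bound into a second-order one. This is the same mechanism behind the asymptotic full efficiency of the WLE at the model, and it hinges critically on hypothesis (iv) of Lemma \ref{lemma:deltan} together with mild smoothness of the kernel estimator on $\mathbb{T}^p$ (respectively $\mathbb{R}^p$) guaranteeing a uniform rate faster than $n^{-1/4}$.
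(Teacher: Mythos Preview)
Your overall scheme matches the paper's: an M-estimator Taylor expansion of $\sum_i \Psi_n(\vect{Y}_i;\hat{\vect{\theta}}_n)=\vect{0}$, a law-of-large-numbers argument for the Jacobian, a CLT for the leading term, and Slutsky to combine them with the Bartlett identity $\E_{\vect{\theta}_0}[\dot u]=-A$. The differences are two. First, the paper carries the expansion to second order, writing $\sqrt{n}(\hat{\vect{\theta}}_n-\vect{\theta}_0)=-(B_n+C_n)^{-1}\sqrt{n}A_n$ with $B_n=\tfrac{1}{n}\sum_i \dot{\Psi}_n(\vect{Y}_i;\vect{\theta}_0)$ and $C_n$ built from the second derivatives at intermediate points; boundedness of $\ddot{\Psi}_n$ plus consistency kills $C_n$, and Lemma~\ref{lemma:deltan} is applied at the fixed point $\vect{\theta}_0$ to pass from $B_n$ to $B=\E_{\vect{\theta}_0}[\dot{\Psi}(\vect{Y};\vect{\theta}_0)]$. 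Your first-order mean-value expansion is equally valid but needs the envelope $K(\vect{y})$ plus a ULLN to handle $\dot{\Psi}_n$ at the random $\tilde{\vect{\theta}}_n$. Second, and more substantively, for the leading term the paper simply cites from the proof of Theorem~\ref{wconsistency2} that $A_n-\tfrac{1}{n}\sum_i u(\vect{Y}_i;\vect{\theta}_0)\stackrel{a.s.}{\to}\vect{0}$ and then asserts that $\sqrt{n}A_n$ inherits the CLT of $\tfrac{1}{\sqrt{n}}\sum_i u$; it does not invoke $w'(0)=0$ or any rate on $\delta_n$. Your treatment is more explicit about why an a.s.\ $o(1)$ bound is not automatically $o_p(n^{-1/2})$, and your remedy via the quadratic behaviour of $w$ near $0$ plus a uniform rate $\sup_{\vect{y}}|\delta_n|=o_p(n^{-1/4})$ is the standard weighted-likelihood device for full efficiency. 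Be aware, though, that this rate is not among the stated hypotheses of the theorem (Lemma~\ref{lemma:deltan} only delivers a.s.\ uniform convergence, no rate), so your argument imports an assumption the paper does not list; under the hypotheses \emph{as written}, the paper's own step from $A_n\to\vect{0}$ a.s.\ to $\sqrt{n}A_n\stackrel{d}{\to}N(\vect{0},A)$ is terse on exactly the point you flag.
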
  

\begin{proof}
  The proof is similar to Theorem 10.11 of \citet{maronna2019robust}. Let $\epsilon(\vect{\theta}) = \E_{\vect{\theta}_0}\Psi(\vect{Y};\vect{\theta})$ and $B$ the matrix of derivatives with elements $\partial \epsilon_j/\partial\theta_k \vert_{\vect{\theta} = \vect{\theta}_0}$. For each $n$ and $j$ call $\ddot{\Psi}_{n,j}$ be the matrix with elements $\partial \Psi_{n,j}/\partial \theta_k \partial \theta_h$. Let $A_n = \frac{1}{n} \sum_{i=1}^n \Psi_n(\vect{Y}_i; \vect{\theta}_0)$, $B_n = \frac{1}{n} \sum_{i=1}^n \dot{\Psi}_n(\vect{Y}_i; \vect{\theta}_0)$ and $C_n$ is the matrix with its $j$th row equals to $(\hat{\vect{\theta}}_n - \vect{\theta}_0)^\top \frac{1}{2n} \sum_{i=1}^n \ddot{\Psi}_{n,j}(\vect{Y}_i; \vect{\theta}_i)$. We notice that $\frac{1}{n} \sum_{i=1}^n \ddot{\Psi}_{n,j}(\vect{Y}_i; \vect{\theta}_i)$ is bounded and since $\hat{\vect{\theta}}_n - \vect{\theta}_0 \stackrel{a.s.}{\rightarrow} \vect{0}$ by Theorem \ref{wconsistency2}, this implies that $C_n \stackrel{a.s.}{\rightarrow} \vect{0}$. From a second order Taylor expansion around $\vect{\theta}_0$ of $\Psi_n(\vect{y};\vect{\theta})$ it is easy to see that
\begin{equation*}
\sqrt{n} (\hat{\vect{\theta}}_n - \vect{\theta}_0) = - (B_n + C_n)^{-1} \sqrt{n} A_n \ . 
\end{equation*}
From the proof of Theorem \ref{wconsistency2} we have $A_n - \frac{1}{n} \sum_{i=1}^n u(\vect{Y_i}; \vect{\theta}_0) \stackrel{a.s.}{\rightarrow} \vect{0}$. In a similar way, using Lemma \ref{lemma:deltan} we have $B_n -  B \stackrel{a.s.}{\rightarrow} \vect{0}$. Since $u(\vect{Y_i}; \vect{\theta}_0)$ are i.i.d and finite second moments, by multivariate central limit theorem we have $\frac{1}{\sqrt{n}} \sum_{i=1}^n u(\vect{Y_i}; \vect{\theta}_0) \stackrel{d}{\rightarrow} N(\vect{0}, A)$ and hence $\sqrt{n} A_n$ has the same limit. We notice that $B$ coincides with the second derivatives of the log-likelihood and we had assume it positive definite. So, by the multivariate Slutsky's lemma, see, e.g. \citet[Theorem 10.10]{maronna2019robust} we have
\begin{equation*}  
\sqrt{n} (\hat{\vect{\theta}}_n - \vect{\theta}_0) \stackrel{d}{\rightarrow} N(\vect{0}, B^{-1} A B^{-1\top}) \ , 
\end{equation*}
on the other hand, under Bartlett's assumption we have $A=B$ and the result holds.
\end{proof}

In the next corollary we provide a set of assumptions so that the previous results can be applied to wrapped unimodal elliptical symmetric models.
\begin{corollary}
Consider a wrapped unimodal elliptically symmetric model as in (\ref{elliptic}). Let $\vect{\theta}_0 = (\vect{\mu}_0$, $\Sigma_0$) be the true values with $\Sigma_0$ be a non singular covariance matrix, i.e., the sample $\vect{y}_1, \ldots, \vect{y}_n$ is i.i.d. from $m(\cdot; \vect{\theta}_0)$. Let $h(\cdot)$ be a strictly decreasing, non-negative function with uniformly bounded third derivatives and $h(\cdot)$ is positive in the region $T(\vect{\mu}_0)$. Assumptions in Lemma \ref{lemma:deltan} hold. Then,
\begin{itemize}
\item[i.] the sequence $\hat{\theta}_n$ solutions of (\ref{WLE}) is strongly consistent for $\vect{\theta}_0$ and 
\begin{equation*}  
\sqrt{n} (\hat{\vect{\theta}}_n - \vect{\theta}_0) \stackrel{d}{\rightarrow} N(\vect{0}, I(\vect{\theta}_0)^{-1}) \ , 
\end{equation*}
where $I = \E_{\vect{\theta}_0}(\nabla_{\vect{\theta}\vect{\theta}^\top} m^\circ(\vect{Y}; \vect{\theta})) \vert_{\vect{\theta} = \vect{\theta}_0}$ is the expected Fisher information matrix.
\item[ii.] the sequence $\tilde{\theta}_n$ solutions of (\ref{WCEM}) is strongly consistent for $\vect{\theta}_0^u = (\vect{\mu}_0, \Sigma_0^u)$ and 
\begin{equation*}  
\sqrt{n} (\tilde{\vect{\theta}}_n - \vect{\theta}_0^u) \stackrel{d}{\rightarrow} N(\vect{0}, I^u(\vect{\theta}_0^u)^{-1}) \ , 
\end{equation*}
where $I^u = \E_{\vect{\theta}_0}(\nabla_{\vect{\theta}\vect{\theta}^\top} m^u(\vect{Z}; \vect{\theta})) \vert_{\vect{\theta} = \vect{\theta}_0^u}$ is the expected Fisher information matrix based on $m^u(\cdot; \vect{\theta}_0)$. 
\end{itemize}  
\end{corollary}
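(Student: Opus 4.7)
The plan is to invoke Theorem \ref{wconsistency2} and the asymptotic distribution theorem by verifying their regularity hypotheses for the wrapped unimodal elliptical family, and then to identify the limiting covariance as the inverse Fisher information by exploiting the boundary behaviour $w(0)=1$ and $w'(0)=0$. First I would check regularity of $\Psi$ and $\Psi_n$: because $h$ has uniformly bounded third derivatives and $\Sigma_0$ is non-singular, the score $u^\circ(\vect{y};\vect{\theta})$ is twice continuously differentiable in $\vect{\theta}$ with derivatives dominated by an integrable envelope in a neighborhood of $\vect{\theta}_0$; the weighting factor is twice continuously differentiable by assumption (iv) of Lemma \ref{lemma:deltan}, so $\Psi_n$ inherits twice differentiability with bounded derivatives. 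The positivity of $h$ on the hyper-cube $T(\vect{\mu}_0)$ yields assumption (iii) of Lemma \ref{lemma:deltan}, and identifiability of $(\vect{\mu}_0,\Sigma_0)$ in the wrapped elliptical class, together with positive definiteness of $\Sigma_0$, implies that $\E_{\vect{\theta}_0}[\dot\Psi(\vect{Y};\vect{\theta}_0)]$ coincides with the Hessian of the expected log-likelihood at $\vect{\theta}_0$ and hence is non-singular. This delivers uniqueness and strong consistency through Lemma \ref{wconsistency1} and Theorem \ref{wconsistency2}.

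Next I would identify the correct target parameter. For estimator (\ref{WLE}), the $\vect{y}_i$ are i.i.d. from $m^\circ(\cdot;\vect{\theta}_0)$ and the population equation is (\ref{wem:population}), whose solution is $\vect{\theta}_0$. For estimator (\ref{WCEM}), the fitted unwrapped observations $\hat{\vect{x}}_i$ follow $m^u(\cdot;\vect{\theta}_0)$, as noted after (\ref{cem:population}); the relevant population equation is then (\ref{wcem:population}) and its solution is $\vect{\theta}_0^u=(\vect{\mu}_0,\Sigma_0^u)$. In both cases the remaining hypotheses of the asymptotic distribution theorem, namely second differentiability of $\Psi_n$ and an integrable envelope for $|\dot\Psi_{n,jk}|$, follow from the bounds established in the previous paragraph.

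Finally I would compute the asymptotic variance. From $A(0)=0$ and $A^{\prime}(0)=1$ one obtains $w(0)=1$ and $w^{\prime}(0)=0$ by direct differentiation of (\ref{weightfun}); under the model, $\delta(\vect{y};\vect{\theta}_0)\equiv 0$ almost surely. Therefore at $\vect{\theta}_0$ we have $\Psi(\vect{y};\vect{\theta}_0)=u^\circ(\vect{y};\vect{\theta}_0)$ and $\dot\Psi(\vect{y};\vect{\theta}_0)=\dot u^\circ(\vect{y};\vect{\theta}_0)$, since the $w^{\prime}$-term vanishes. Consequently the sandwich matrix in the asymptotic distribution theorem collapses, using $A=\E_{\vect{\theta}_0}[u^\circ u^{\circ\top}]=I(\vect{\theta}_0)$ together with the Bartlett identity $-\E_{\vect{\theta}_0}[\dot u^\circ]=I(\vect{\theta}_0)$, to
\[
B^{-1} A B^{-1\top} = I(\vect{\theta}_0)^{-1} I(\vect{\theta}_0) I(\vect{\theta}_0)^{-1} = I(\vect{\theta}_0)^{-1}.
\]
The identical argument, now with $u$ and $m^u(\cdot;\vect{\theta}_0^u)$ in place of $u^\circ$ and $m^\circ(\cdot;\vect{\theta}_0)$, yields the covariance $I^u(\vect{\theta}_0^u)^{-1}$ for part ii.

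The main obstacle will be the envelope / Jacobian step: showing that the second derivatives of $\Psi_n$ are uniformly dominated by an integrable function $K(\vect{y})$ and that the expected Jacobian is non-singular requires careful control of the ratio $v_{i\vect{j}}$ in (\ref{vijMLE}) together with the truncated infinite series defining $m^\circ$. Here compactness of $\mathbb{T}^p$ and the positivity and smoothness of $h$ on $T(\vect{\mu}_0)$ are essential, as is the uniformity statement of Lemma \ref{lemma:deltan}, which is what transfers the boundedness properties of the limiting quantity $\delta$ to the empirical residual $\delta_n$.
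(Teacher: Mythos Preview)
The paper does not supply an explicit proof of this corollary; it is presented as a specialization of Lemma~\ref{wconsistency1}, Theorem~\ref{wconsistency2} and the asymptotic distribution theorem to the wrapped elliptical family, with the preamble ``we provide a set of assumptions so that the previous results can be applied.'' Your proposal does exactly this: you verify the hypotheses of those results under the stated conditions on $h$ and $\Sigma_0$, identify the correct population targets $\vect{\theta}_0$ and $\vect{\theta}_0^u$ for the two estimating equations, and then collapse the sandwich $B^{-1}AB^{-1\top}$ to the inverse Fisher information using $w(0)=1$, $w'(0)=0$ at the model. This is precisely the intended argument, and your identification of the envelope/Jacobian verification as the only nontrivial step is accurate.
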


\subsection{Influence Function}
The influence function (IF) plays a very important role in the evaluation of local robust properties of estimators in a classic robust framework \citep{huber2robust}. For a class of minimum distance estimators and weighted likelihood estimators \citep{beran1977minimum, lindsay1994}, under broad regularity conditions and the assumed model, the IF coincides with that of the MLE. This feature suggests their high efficiency from one side, but a lack of local robustness on the other.  
The IF was used to investigate the robustness of some estimators for the circular mean direction in \citet{he1992robust} but its use was unsatisfactory.
Here, we discuss the IF of the proposed WLE in a more general setting. 

Given a distribution function $F$, let $\vect{T}: F \mapsto \vect{T}(F) \in \Theta$ be a statistical functional that admits a von Mises expansion \citep{serfling2009approximation}. 
Given the gross error neighborhood $F_\epsilon(\vect{x}) = (1 - \epsilon) F(\vect{x}) + \epsilon \mathbbm{1}_{\vect{z}}(\vect{x})$ we define the influence function of $\vect{T}$ at $\vect{z}$ as
\begin{equation*} \label{if}  
\IF(\vect{z}; \vect{T}, F)  = \lim_{\epsilon \downarrow 0} \frac{\vect{T}(F_\epsilon) - \vect{T}(F)}{\epsilon}
= \frac{\partial}{\partial \epsilon} \vect{T}(F_\epsilon) \big\vert_{\epsilon = 0} \ .
\end{equation*}
Let $M_{\vect{\theta}}=M(\vect{x};\vect{\theta})$ be the assumed model and $u(\vect{x};\vect{\theta})$ the corresponding score function. Let $\vect{T}_F = \vect{T}(F)$ be the statistical functional solution of the weighted likelihood estimating equations
\begin{equation*}
\label{equ:funzionale}  
\int w(\vect{x}; \vect{T}(F), F) u(\vect{x}; \vect{T}(F)) \ dF(\vect{x}) = \vect{0} \ ,
\end{equation*}
where we have $\vect{T}(M_{\vect{\theta}}) = \vect{\theta}$. The derivation of the IF for such functional is  similar to the case of M-estimators \citep{huber2robust}. We have that 
$$\frac{\partial}{\partial \delta} w(\delta) = \left( \frac{\partial}{\partial \delta} A(\delta) - w(\delta) \right) \left( \delta + 1 \right)^{-1}$$ and
\begin{equation*}
  \frac{\partial}{\partial \epsilon} \delta(\vect{x}; \vect{T}(F_\epsilon),F_\epsilon) \vert_{\epsilon=0} = - \frac{k(\vect{x}; \vect{z}, H)}{\hat{m}(\vect{x}; \vect{T}(F))} + (\delta(\vect{x}; \vect{T}(F),F) + 1) (1 - \hat{u}(\vect{x}; \vect{T}(F)) \IF(\vect{z}; \vect{T}, F)) \ ,
\end{equation*}  
where $\hat{m}(\vect{x}; \vect{\theta})$ is the smoothed model and $\hat{u}(\vect{x};\vect{\theta}) = \frac{\partial}{\partial \vect{\theta}} \log \hat{m}(\vect{x}; \vect{\theta})$. Then, we obtain
\begin{equation*} \label{equ:funinf1}
IF(\vect{z}; \vect{T}, F) = D(F)^{-1} N(\vect{z},F)
\end{equation*}
where
\begin{align*}
N(\vect{z},F) & = w(\vect{z}; \vect{T}(F), F) u(\vect{z}; \vect{T}(F)) \\
& +  \int w'(\delta(\vect{x}; \vect{T}(F), F)) \frac{k(\vect{x};\vect{z},H)}{\hat{m}(\vect{x}; \vect{T}(F))} u(\vect{x}; \vect{T}(F)) \ dF(\vect{x}) \\
& - \int w'(\delta(\vect{x}; \vect{T}(F), F)) (\delta(\vect{x}; \vect{T}(F), F) + 1) u(\vect{x}; \vect{T}(F)) \ dF(\vect{x}) \\
& = w(\vect{z}; \vect{T}(F), F) u(\vect{z}; \vect{T}(F)) \\
& + \int (A'(\delta(\vect{x}; \vect{T}(F),F)) - w(\delta(\vect{x}; \vect{T}(F),F))) \\
& \times \left(\frac{k(\vect{x}; \vect{z}, H)}{\hat{f}(\vect{x})} - 1 \right) u(\vect{x}; \vect{T}(F)) \ dF(\vect{x}) \\
\end{align*}
and
\begin{align*}
D(F) & = \int w'(\delta(\vect{x}; \vect{T}(F),F))(\delta(\vect{x}; \vect{T}(F),F) + 1) \hat{u}(\vect{x}; \vect{T}(F)) u(\vect{x};\vect{T}(F))^\top \ dF(\vect{x}) \\
& - \int w(\vect{x}; \vect{T}(F),F) u'(\vect{x}; \vect{T}(F)) \ dF(\vect{x}) \\
& = \int (A'(\delta(\vect{x}; \vect{T}(F),F)) - w(\delta(\vect{x}; \vect{T}(F),F))) \hat{u}(\vect{x}; \vect{T}(F)) u(\vect{x}; \vect{T}(F))^\top \ dF(\vect{x}) \\
& - \int w(\delta(\vect{x};\vect{T}(F),F)) u'(\vect{x};\vect{T}(F)) \ dF(\vect{x}) \ .
\end{align*}
where $u'(\vect{x};\vect{\theta}) = \frac{\partial^2}{\partial \vect{\theta}\partial \vect{\theta}^\top} \log m(\vect{x}; \vect{\theta})$. 
Under the model, we obtain the classical IF, that for the WLE corresponds to that of the MLE, i.e.
\begin{equation*}
\IF(\vect{z}; \vect{T}, M_{\vect{\theta}_0}) = I(\vect{\theta}_0)^{-1} \ u(\vect{z};\vect{\theta}_0) \ ,
\end{equation*} 
where $I(\vect{\theta}) = -\mathbb{E}_{\vect{\theta}}(u'(\vect{x};\vect{\theta}))$ is the expected Fisher information matrix.
However, the behavior of the IF under a distribution other than the postulated model is very different. 
As an example let us consider a simple setting in which $m^\circ(\vect{y}; \mu, \sigma^2)$ is the univariate WN and we are interested in evaluating the IF for the location functional when the data are from a two components mixture $f^\circ(\vect{y}) = (1 - \varepsilon) m^\circ(\vect{y}; 0, \sigma_0^2) + \varepsilon m^\circ(\vect{y}; \pi/2, (\pi/16)^2)$. In Figure \ref{IFwrapped} we show the IF of the location functional $\mu(F^\circ)$ defined as the solution to the estimating equation (\ref{wem:population}) for $\sigma_0=\pi/8$ (left panel) and $\sigma_0=\pi/4$ (right panel). In this setting, the IF is a periodic function and in a region of high probability for the contaminating distribution the influence of a point is almost null. On the opposite, the behavior of the IF outside that region is similar to that of the maximum likelihood functional. We also notice the change in sign at the antimode ($\pm\pi$).  
When we consider the location functional $\mu(F)$ associated to the WLE defined by (\ref{wcem:population}) with Pearson residuals as in (\ref{pearson}) or (\ref{pearson:lin}), the IF is not periodic and it is zero outside the interval $(\mu - \pi, \mu+\pi)$. Inside the interval, the behavior of the IF is similar to that of $\mu(F^\circ)$, as it is shown in Figure \ref{IFeuclidean}. 
in contrasts, the IF of  $\mu(F)$ with Pearson residuals built according to the geometric approach is symmetric, since only the magnitude of the outliers plays a role in the Mahalanobis distance, as shown in Figure \ref{IFeuclidean-chisq}.

\begin{figure*}[t]
	\centering
	\includegraphics[width=0.45\textwidth]{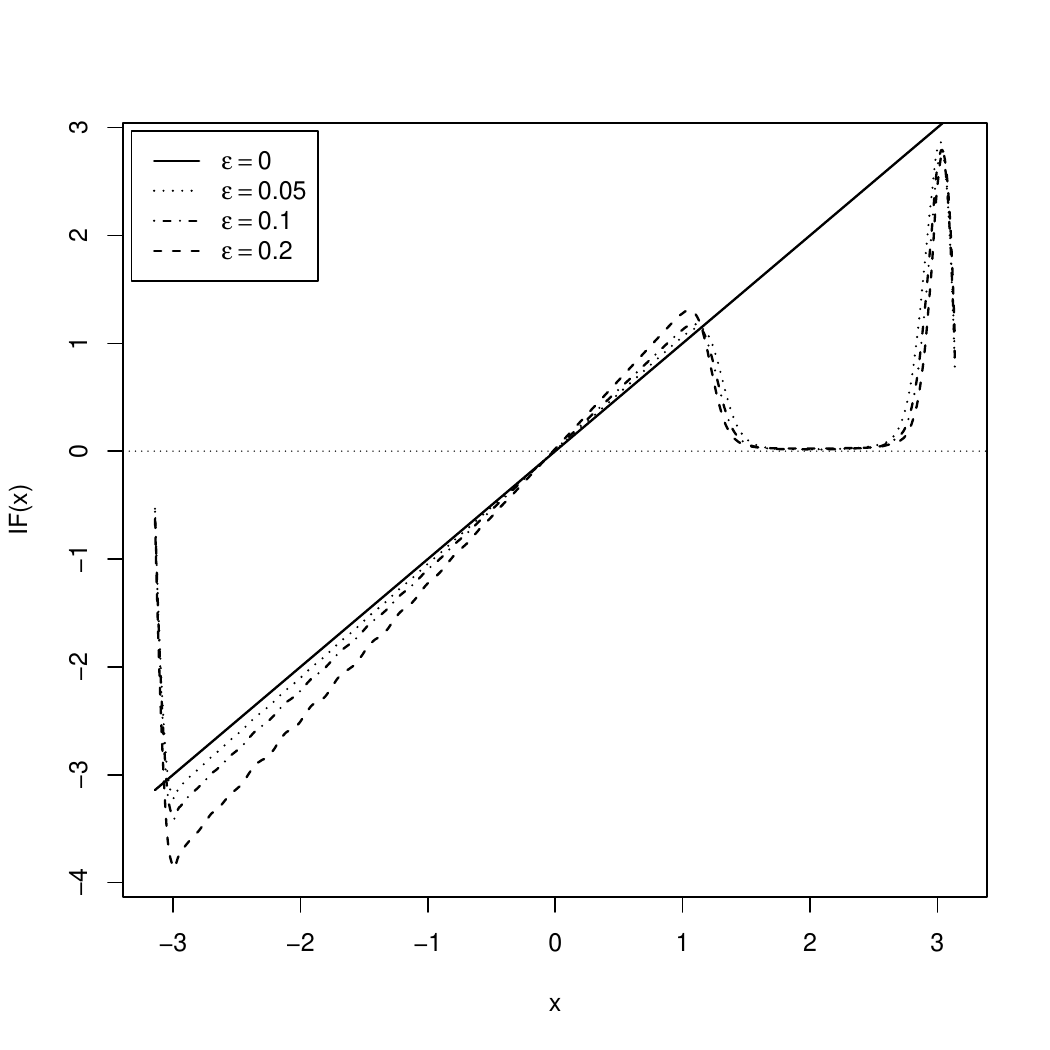}
	\includegraphics[width=0.45\textwidth]{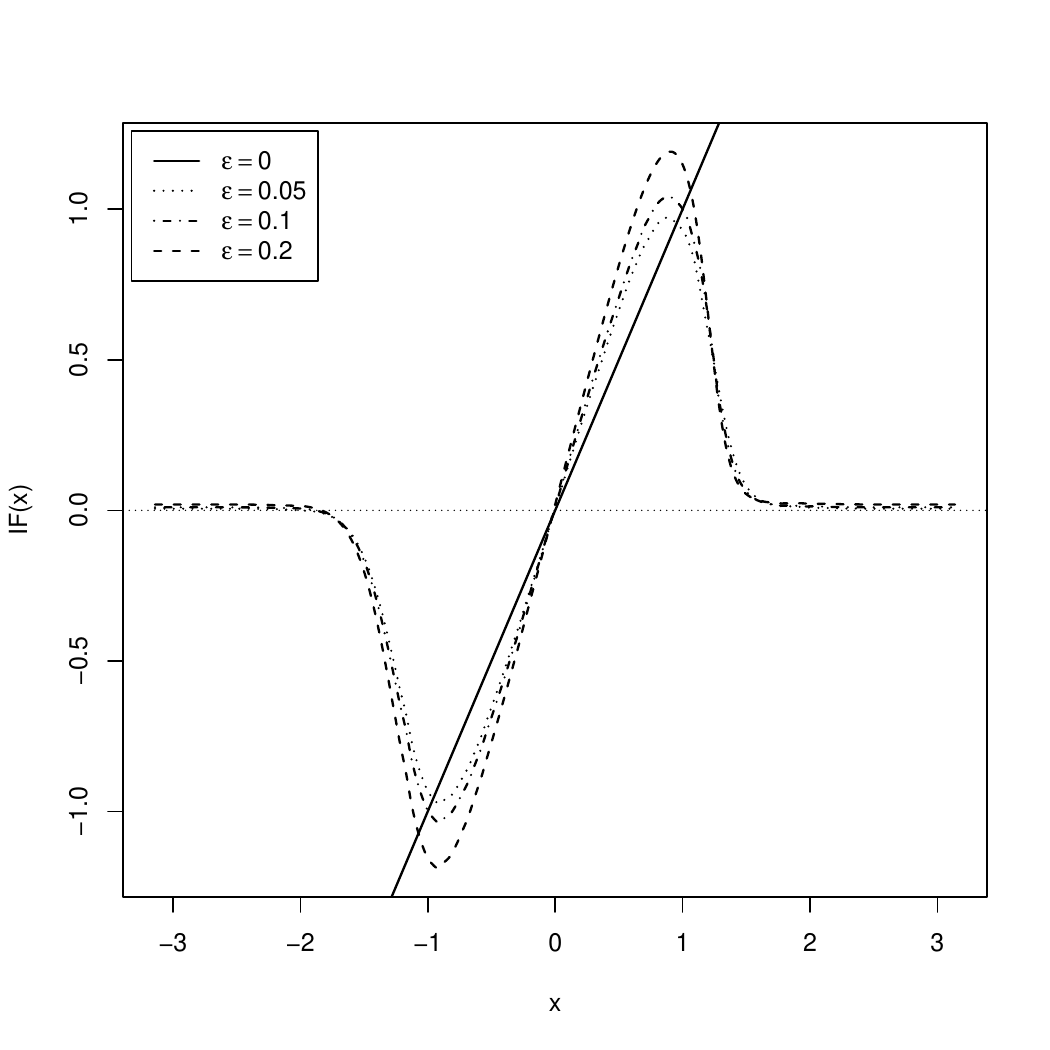}
	\caption{WEM. Influence function for the location functional $\mu(F)$ with $f^\circ(\vect{y}) = (1 - \varepsilon) m^\circ(\vect{y}; 0, \sigma_0^2) + \varepsilon m^\circ(\vect{y}; \pi/2, (\pi/16)^2)$, for $\varepsilon=0, 0.05, 0.10, 0.20$ and $\sigma_0=\pi/8$ (left panel) and $\sigma_0=\pi/4$ (right panel).} 
	\label{IFwrapped}
\end{figure*}

\begin{figure*}[t]
	\centering
	\includegraphics[width=0.45\textwidth]{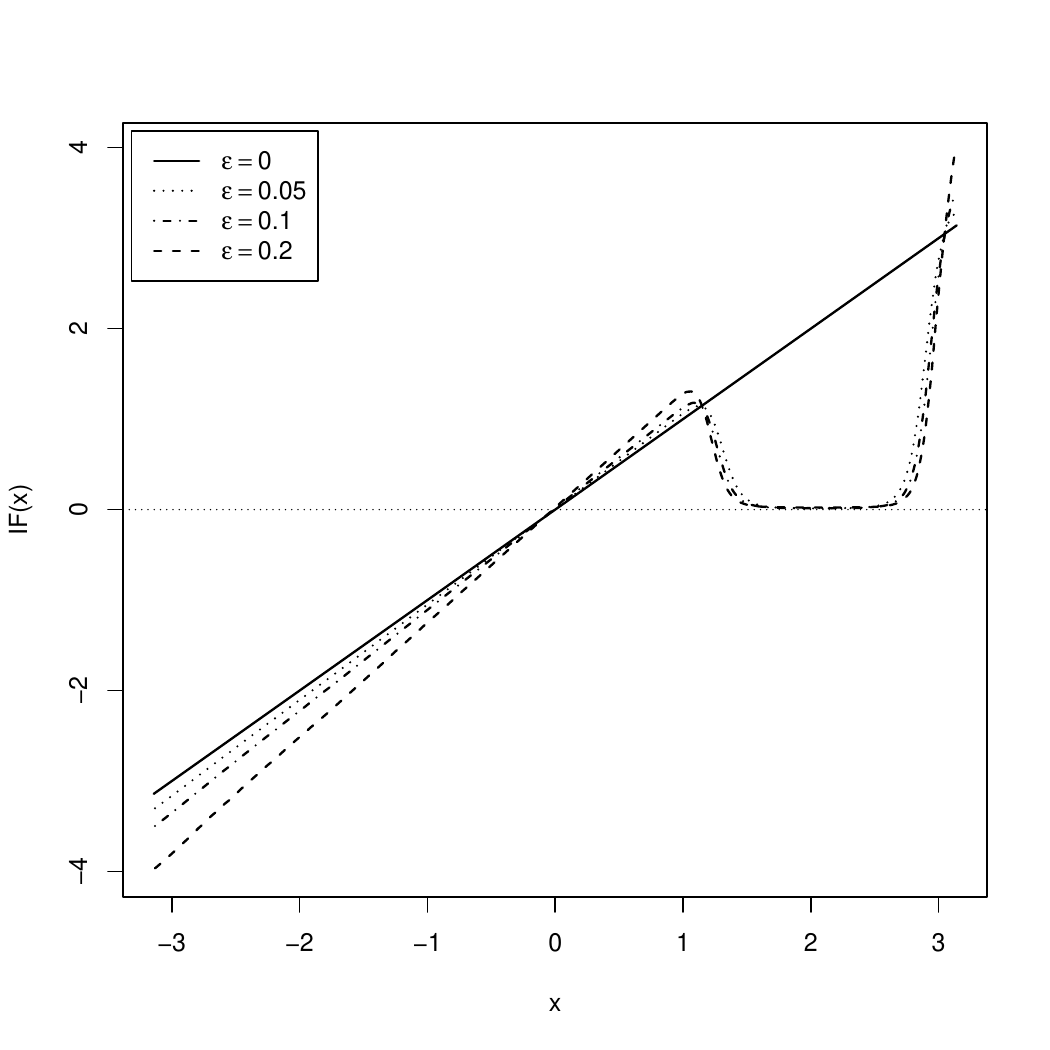}
	\includegraphics[width=0.45\textwidth]{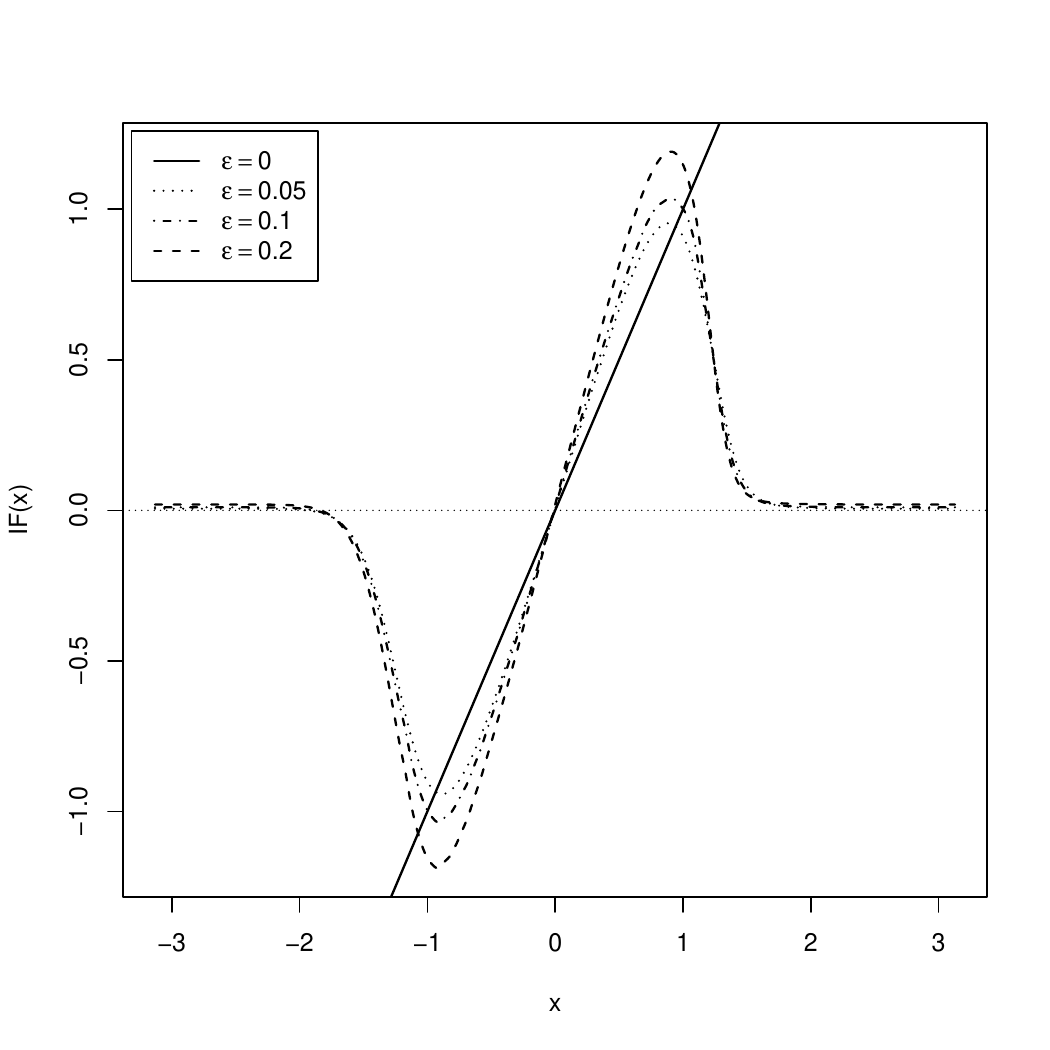}
	\caption{WCEM using Pearson residuals as in (\ref{residualfs}) or (\ref{residualfs2}). Influence function for the location functional $\mu(F)$ with $f^\circ(\vect{y}) = (1 - \varepsilon) m^\circ(\vect{y}; 0, \sigma_0^2) + \varepsilon m^\circ(\vect{y}; \pi/2, (\pi/16)^2)$, for $\varepsilon=0, 0.05, 0.10, 0.20$ and $\sigma_0=\pi/8$ (left panel) and $\sigma_0=\pi/4$ (right panel).} 
	\label{IFeuclidean}
\end{figure*}

\begin{figure*}[t]
	\centering
	\includegraphics[width=0.45\textwidth]{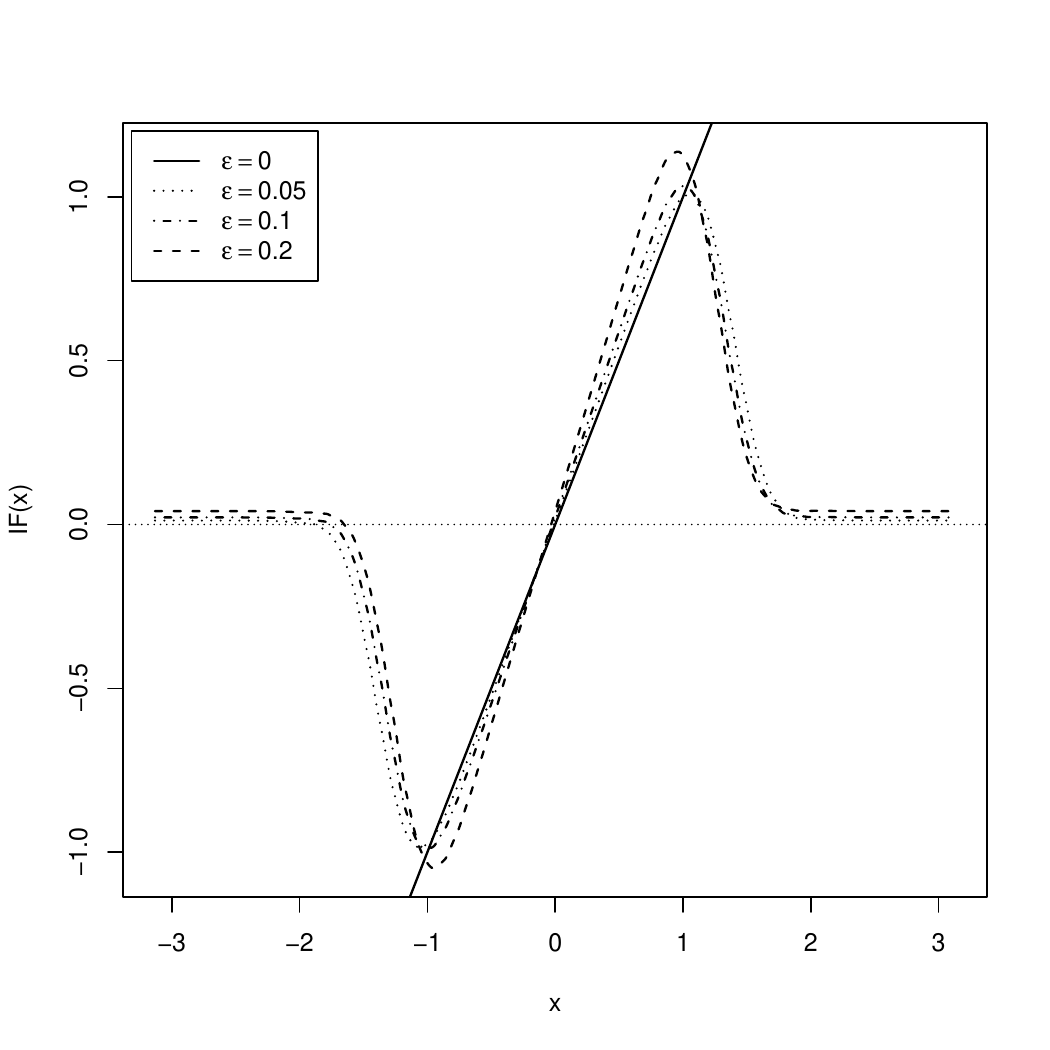}
	\includegraphics[width=0.45\textwidth]{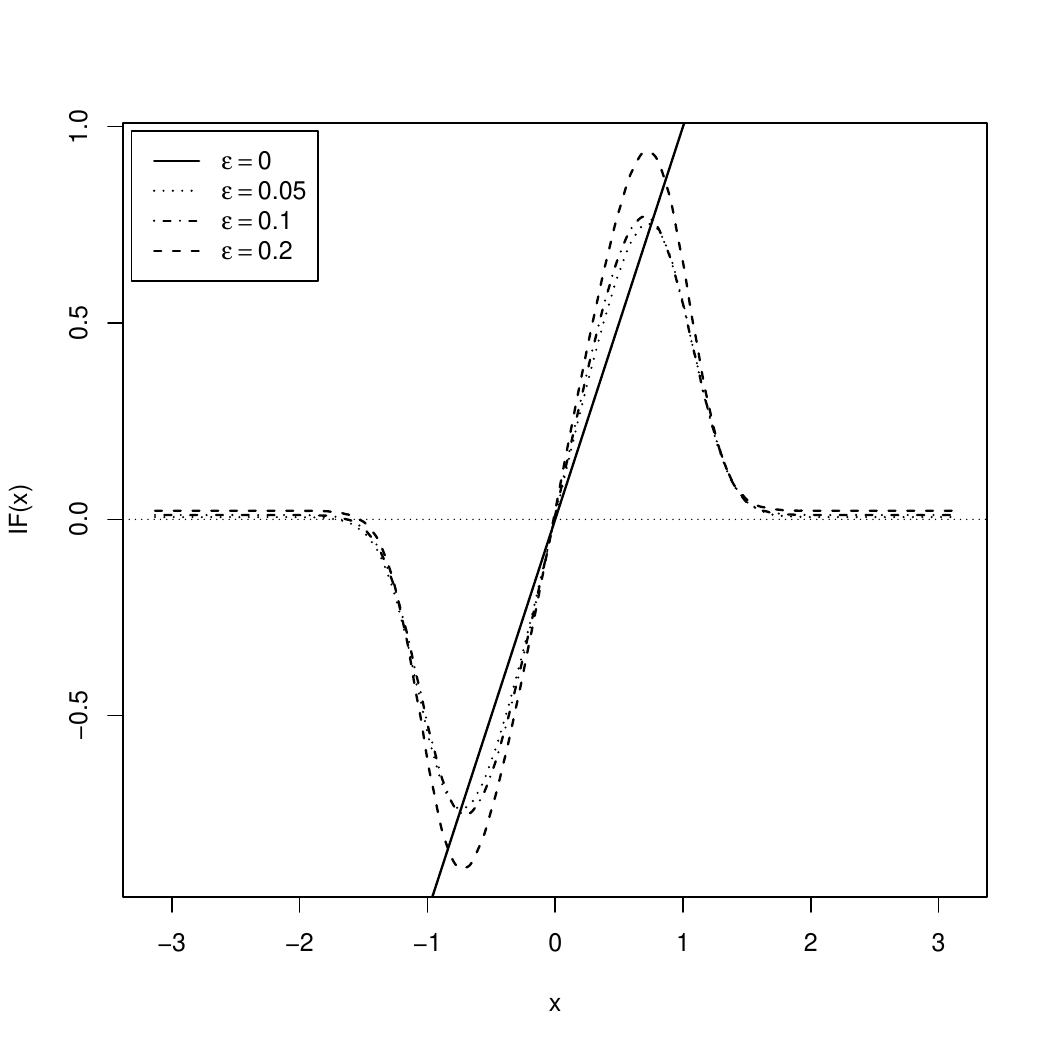}
	\caption{WCEM using Pearson residuals as in (\ref{residualfsGunw}). Influence function for the location functional $\mu(F)$ with $f^\circ(\vect{y}) = (1 - \varepsilon) m^\circ(\vect{y}; 0, \sigma_0^2) + \varepsilon m^\circ(\vect{y}; \pi/2, (\pi/16)^2)$, for $\varepsilon=0, 0.05, 0.10, 0.20$ and $\sigma_0=\pi/8$ (left panel) and $\sigma_0=\pi/4$ (right panel).} 
	\label{IFeuclidean-chisq}
\end{figure*}

\section{Numerical studies}
\label{sec:5}

In this section, we investigate the finite sample behavior of the proposed WLEs given by the WLEE in (\ref{WLE}) and (\ref{WCEM}), for the different weighting schemes considered. The numerical studies are limited to the WN case. Since solving the WLEE in this case is equivalent to consider a weighted counterpart of the EM or CEM algorithms,  in order to make it easier to read the results, we denote the WLE solution to (\ref{WLE}) as WEM and  the approximate WLE solution to (\ref{WCEM}) as WCEM-torus, WCEM-unwrap and WCEM-dist, depending on whether weights are based on residuals in (\ref{residualfs}), (\ref{residualfs2bis}) or (\ref{residualfsG}), respectively.
The MLE and its approximated version have been also taken into account and are denoted by EM and CEM, respectively.
We consider numerical studies based on $N = 500$ Monte Carlo trials. 
Data are sampled from a $p-$variate WN with null mean vector and variance-covariance matrix $\Sigma=D^{1/2}RD^{1/2}$, where $R$ is a random correlation matrix with condition number set equal to $20$ and $D=\sigma\mathrm{I}_p$. Contamination has been added by replacing a proportion $\epsilon$ of randomly selected data points. Those observations are shifted by an amount $k_\epsilon$ in the direction of the smallest eigenvector of $\Sigma$ and perturbed by adding some noise from a $p-$variate wrapped normal with independent components and marginal scale $\sigma_\epsilon$. 
We considered a sample size $n=250$, number of dimensions $p=2,5$, $\sigma=\pi/8, \pi/4$, $\epsilon=0, 0.10, 0.20$, $k_\epsilon=\pi/2, \pi$, $\sigma_\epsilon=0.05$, $J=2$. The case $\epsilon=0$ concerns the situation without contamination and allows to investigate the behavior of the proposed robust methods at the true model. When $p=5$, contamination only affects the first two dimensions. 
The bandwidths have been chosen so that all the WLEs return an empirical downweighting level close to the nominal contamination size to make a fair comparison. The weights are based on a GKL RAF.
Initialization is based on subsampling with twenty subsamples of size $p+p(p+1)/2+5$. This choice did not represent an issue. Moreover, very often the different starting values led to the same solution. 
All the algorithms are assumed to reach convergence when 
$$
\max\left( g(\hat{\vect{\mu}}^{(s+1)}-\hat{\vect{\mu}}^{(s)}),
\|\hat\Sigma^{(s+1)}-\hat\Sigma^{(s)}\|
\right) < 10^{-6} \ ,
$$
where $g(\vect{\mu})=\sqrt{2(1-\cos(\vect{\mu}))}$.
Fitting accuracy is evaluated according to
\begin{itemize}
	\item[(i)] the square root average angle separation
	\begin{equation*}
		\sqrt{AS(\hat{\vect{\mu}}) }= \sqrt{\frac{1}{p} \sum_{j=1}^p(1-\cos(\hat{\vect{\mu}}_j ))},
	\end{equation*}
	\item[(ii)] the divergence:
	\begin{equation*}
		\Delta(\hat{\Sigma}) = \textrm{trace}(\hat{\Sigma}\Sigma^{-1})-\log(\textrm{det}(\hat{\Sigma}\Sigma^{-1}))-p.
	\end{equation*}
\end{itemize}
The effectiveness of the outliers detection rules described in Section \ref{sec:2} is assessed in terms of swamping and power, that is evaluating the rate of genuine observation wrongly declared outliers and that of outliers correctly detected, respectively, for an overall significance level $\alpha=1\%$.
Both univariate and multivariate kernel density estimation involved in the computation of Pearson residuals in (\ref{residualfs2bis}) and (\ref{residualfsG}), respectively, has been performed using the functions available from package {\tt pdfCluster} \citep{pdfclus}. The numerical studies are based on non optimized {\tt R} code and have been run on a 3.4 GHz Intel Core i5 quad-core. Codes are available as supplementary material. 

Figure \ref{figtab1} displays the results under the true model, for $p=2, 5$: the robust methods all provide accurate results in this scenario and the observed differences with respect to the MLE are tolerable. 

Figure \ref{figtab2a} and Figure \ref{figtab2b} give the empirical distributions of the four WLEs in the presence of contamination when $p=2$ and $\sigma=\pi/8$ or $\sigma=\pi/4$, respectively. As well,  Figure \ref{figtab3a} and Figure \ref{figtab3b} concern the case with $p=5$. 
The MLE becomes unreliable and it is not shown. In contrasts, the robust techniques always provide resistant estimates, as expected. We do not observe relevant differences among the robust proposals in terms of fitting accuracy. For what concerns the task of outliers detection, all the suggested WLEs return an average rate of swamping close to the nominal level and a power almost always equal to one, for all considered scenario and they do not exhibit different performances. 

Computational time was always in a feasible range. However, based on the current codes, there is a remarkable time saving from the use of WCEM-unwrap or WCEM-dist with respect to WCEM-torus and WEM. One main reason could be the use of the functions from {\tt pdfCluster} in the former two methods. For instance, when $p=2$, $\sigma=\pi/4$, $\epsilon=20\%$ the median elapsed time was about 12 seconds for the WEM and the WCEM-torus, but only 1.3 seconds for the WCEM-unwrap and slightly larger (still less than two) for the WCEM-dist. The advantage of using the WCEM combined with Pearson residuals in (\ref{residualfs2bis}) was overwhelming for $p=5$: with $\sigma=\pi/4$ and $\epsilon=20\%$ the WEM and WCEM-torus took a median time of about 75 and 80 seconds, respectively for $k_\epsilon=\pi/2$, whereas the WCEM-unwrap took about 9 seconds and  the WCEM-dist about 35 seconds. The case with $k_\epsilon=\pi$ was less computationally demanding but still the differences were noticeable: about 55 seconds for the WEM and WCEM-torus, about 27 seconds for the WCEM-dist and only about 4 seconds for the WCEM-unwrap. The ability to evaluate weights on the unwrapped data rather than on the torus reduced the computational time, indeed. 

\begin{figure*}[t]
	\centering
	\includegraphics[scale=0.28]{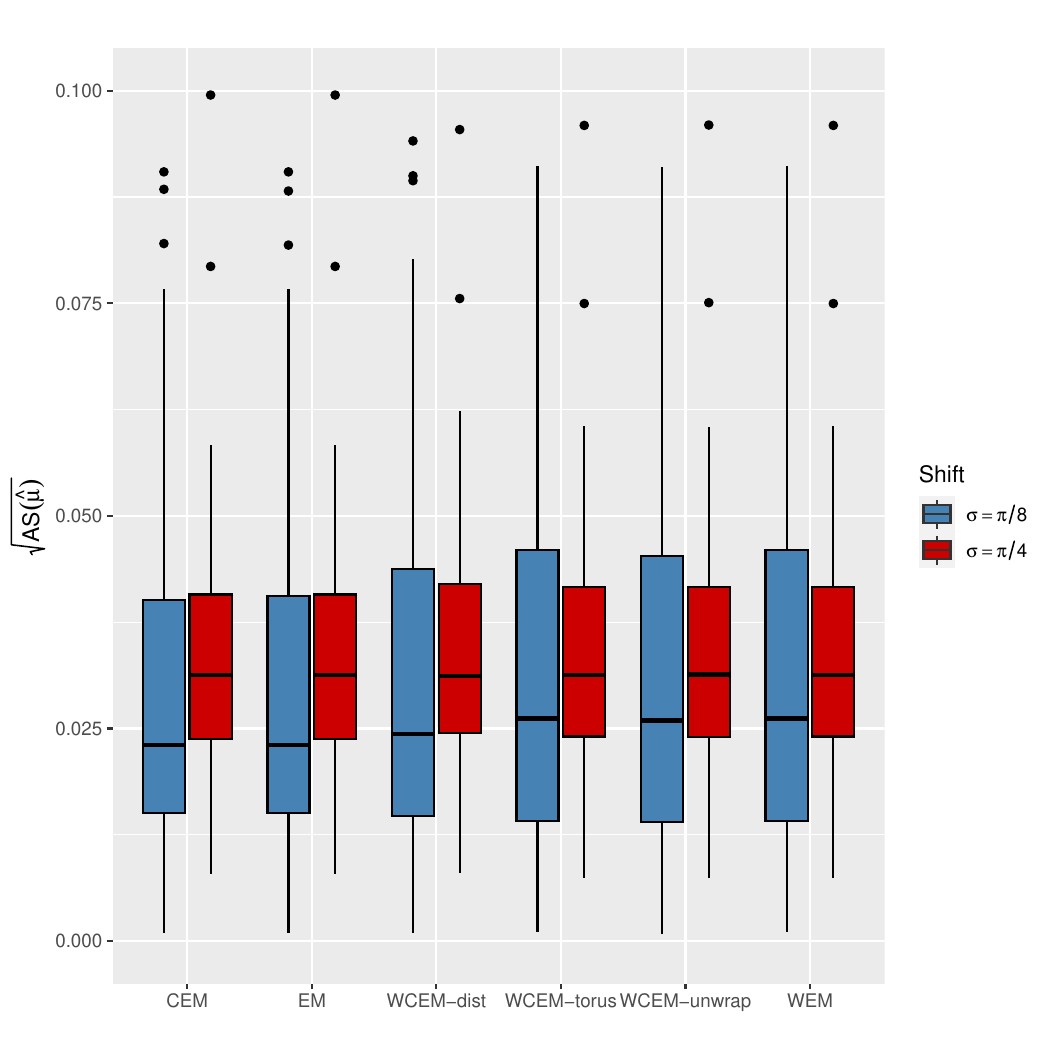}
	\includegraphics[scale=0.28]{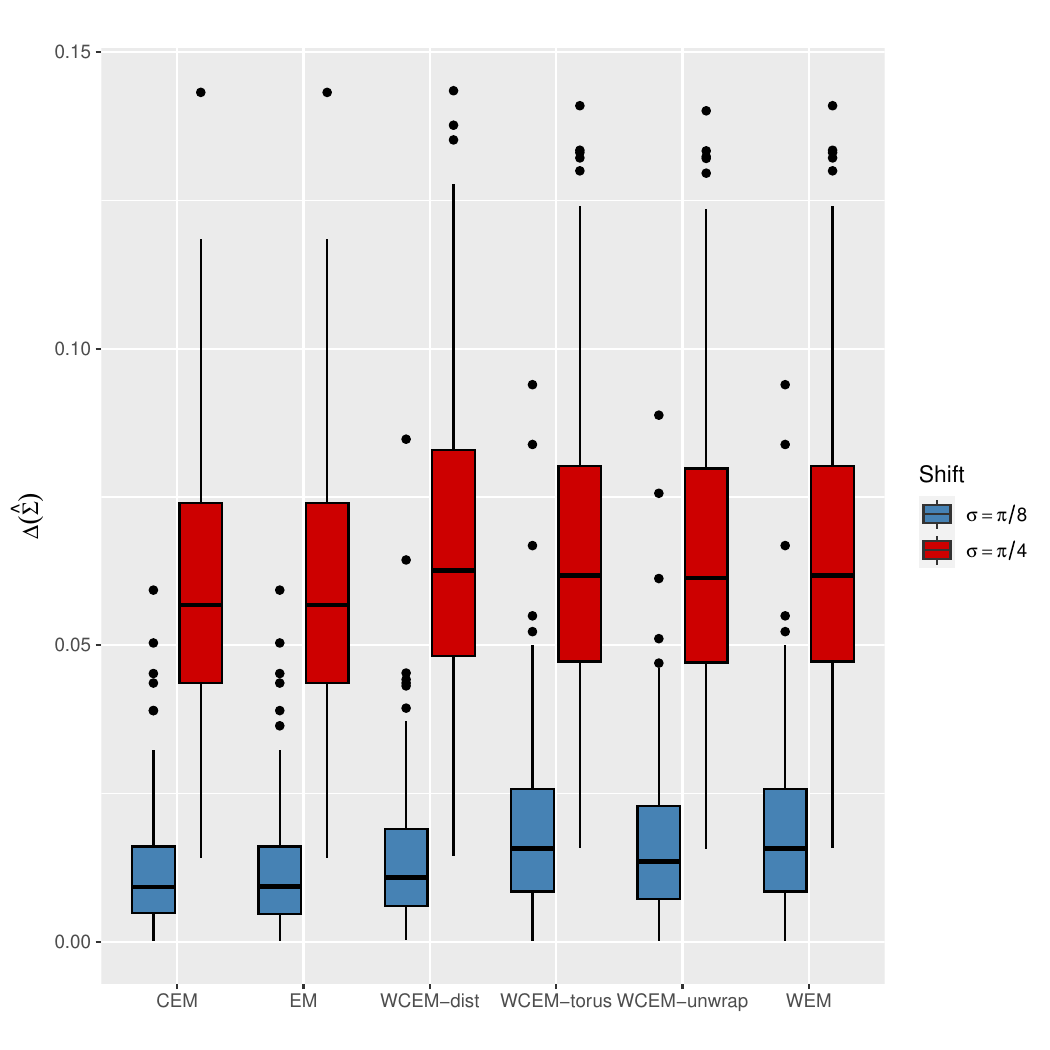}\\
	\includegraphics[scale=0.28]{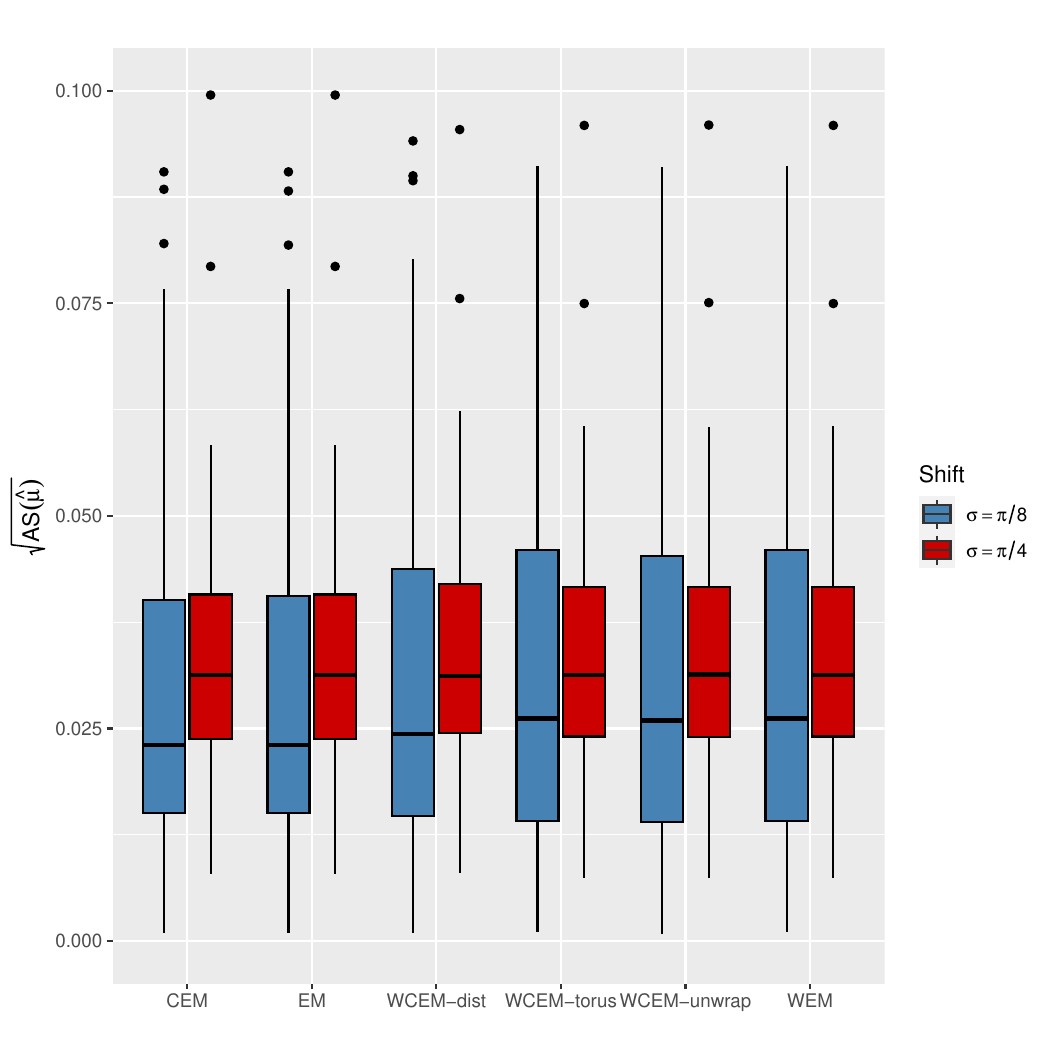}
	\includegraphics[scale=0.28]{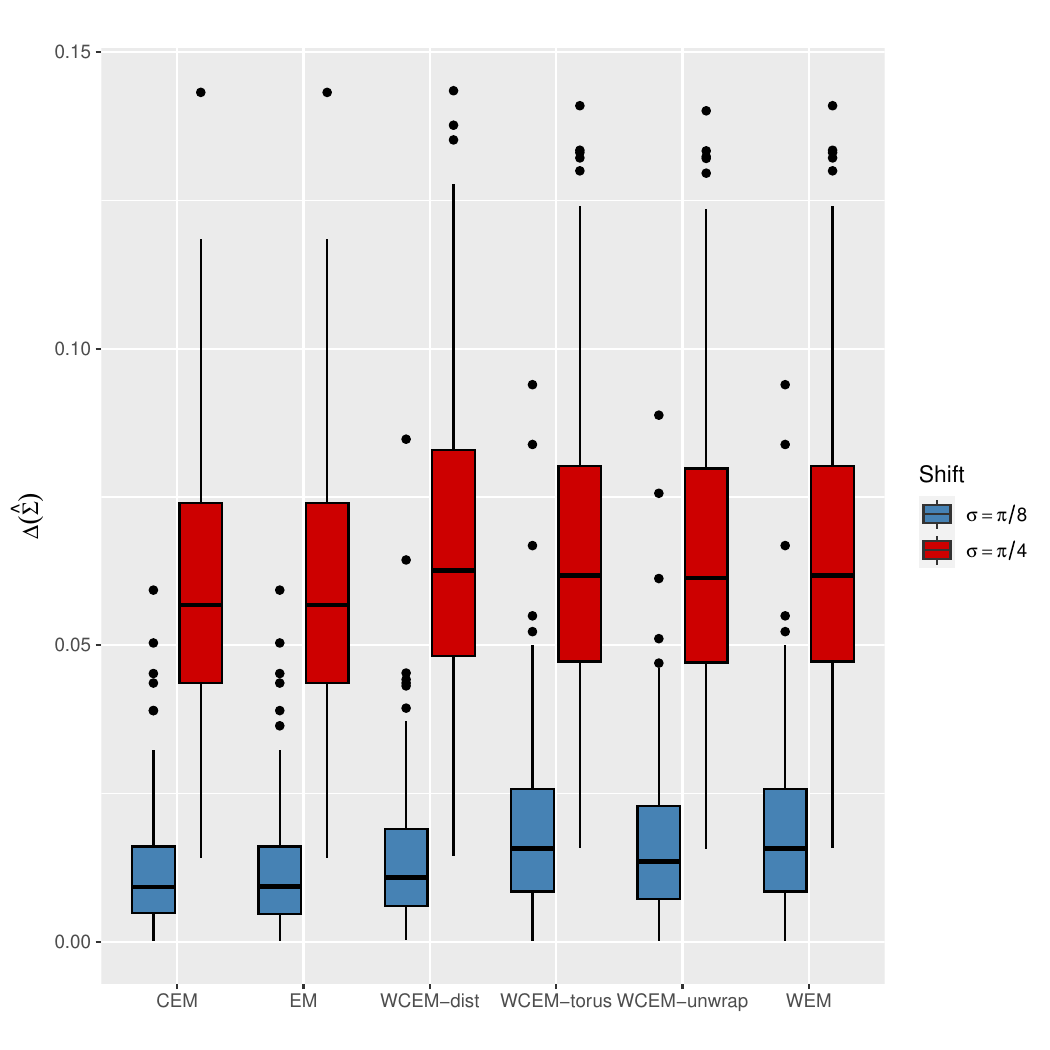}
	\caption{Box-plots for $\sqrt{AS(\hat{\vect{\mu}})}$ (left) and $\Delta(\hat\Sigma)$ (right) for $p=2$ (top) and $p=5$ (bottom),  $\sigma=\pi/8, \pi/4$ when $\epsilon=0$.} 
	\label{figtab1}
\end{figure*}

\begin{figure*}[t]
	\centering
	\includegraphics[scale=0.28]{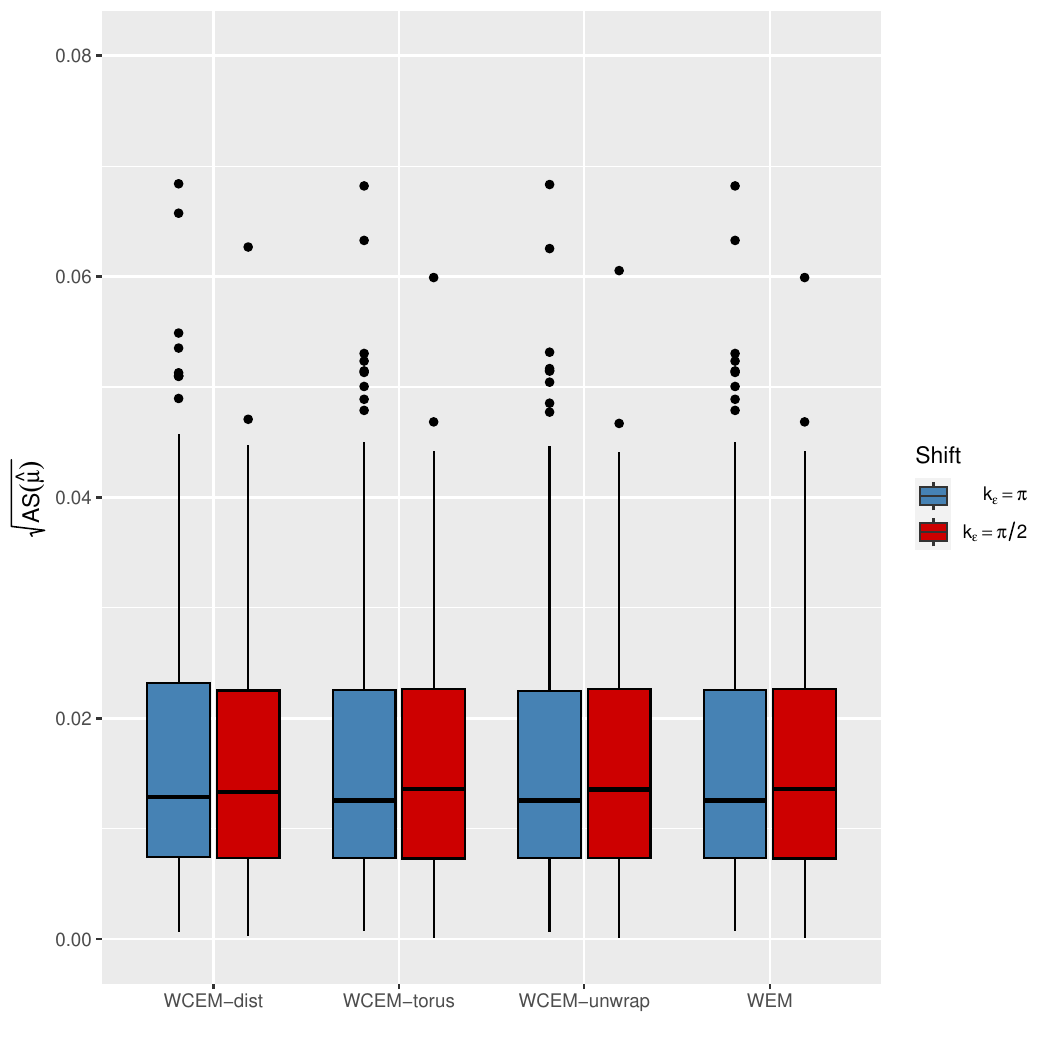}
	\includegraphics[scale=0.28]{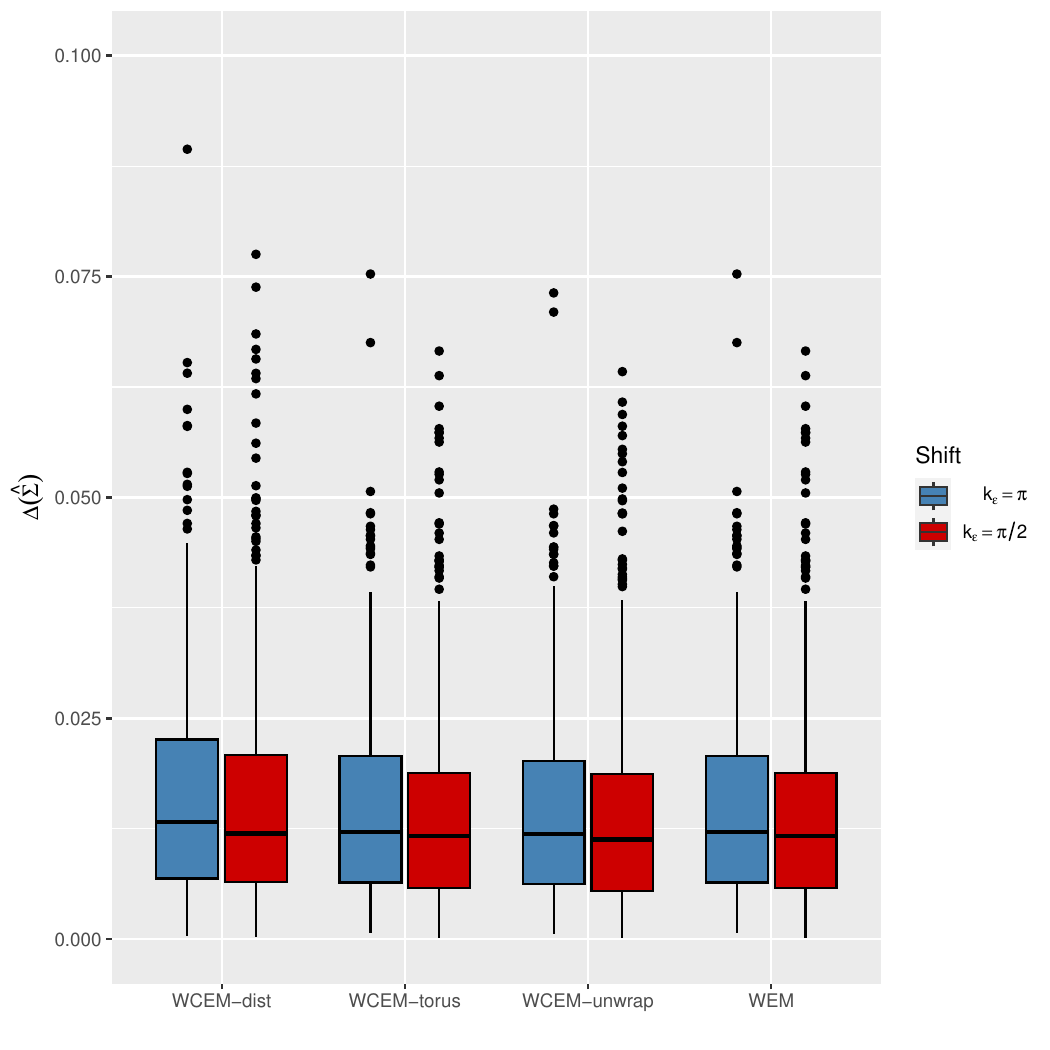}\\
	\includegraphics[scale=0.28]{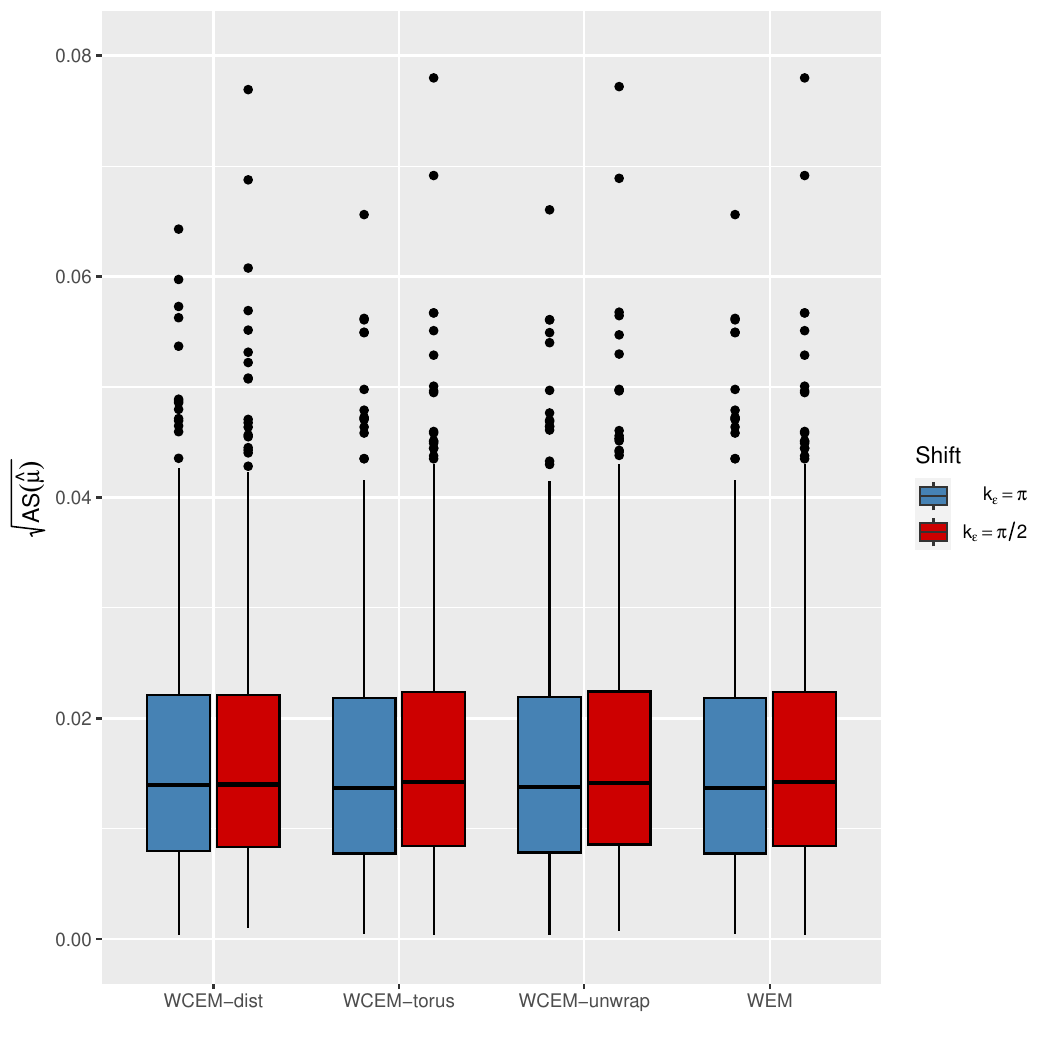}
	\includegraphics[scale=0.28]{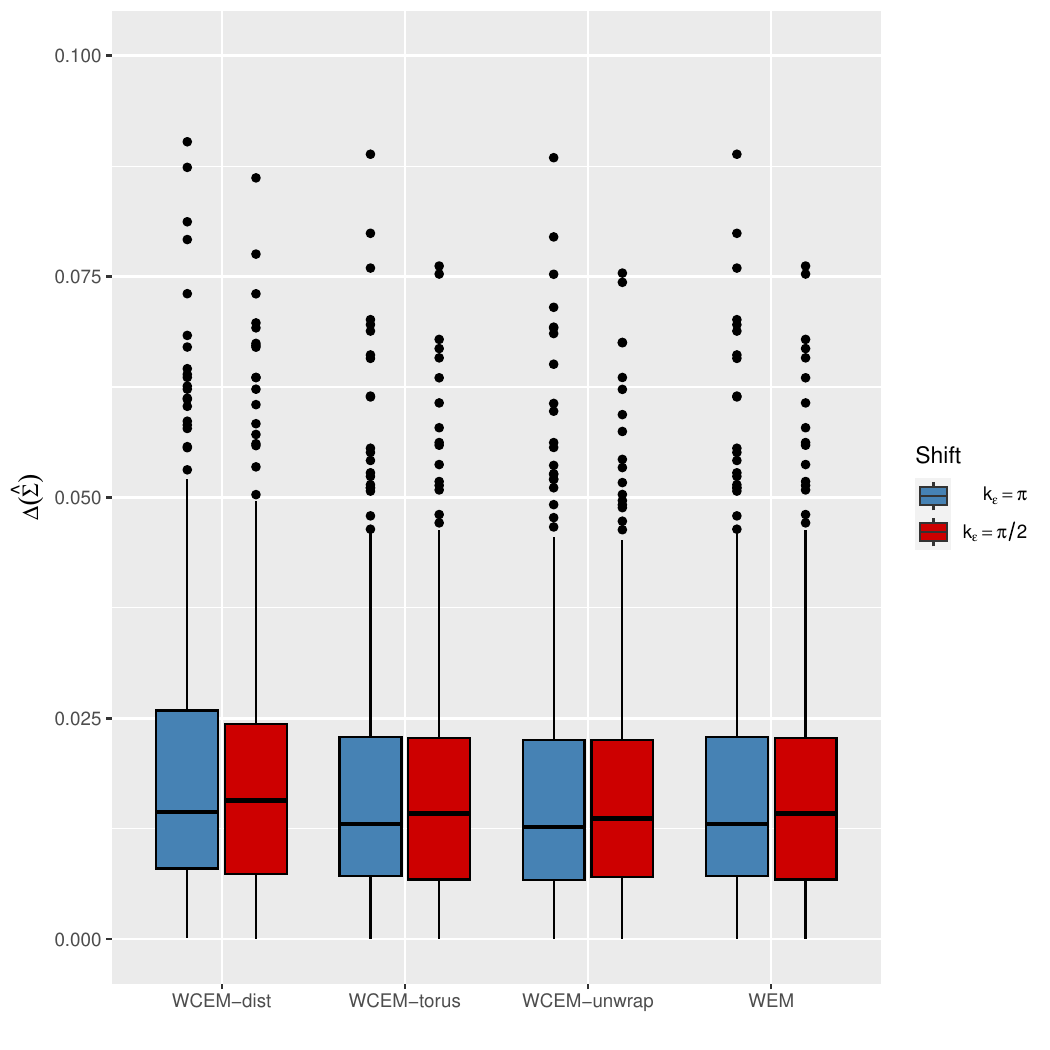}
	\caption{Box-plots for $\sqrt{AS(\hat{\vect{\mu}})}$ (left) and $\Delta(\hat\Sigma)$ (right) for $p=2$, $\sigma=\pi/8$, $k_\epsilon=\pi/2, \pi$ when $\epsilon=10\%$ (top) and $\epsilon=20\%$ (bottom).} 
	\label{figtab2a}
\end{figure*}

\begin{figure*}[t]
	\centering
	\includegraphics[scale=0.28]{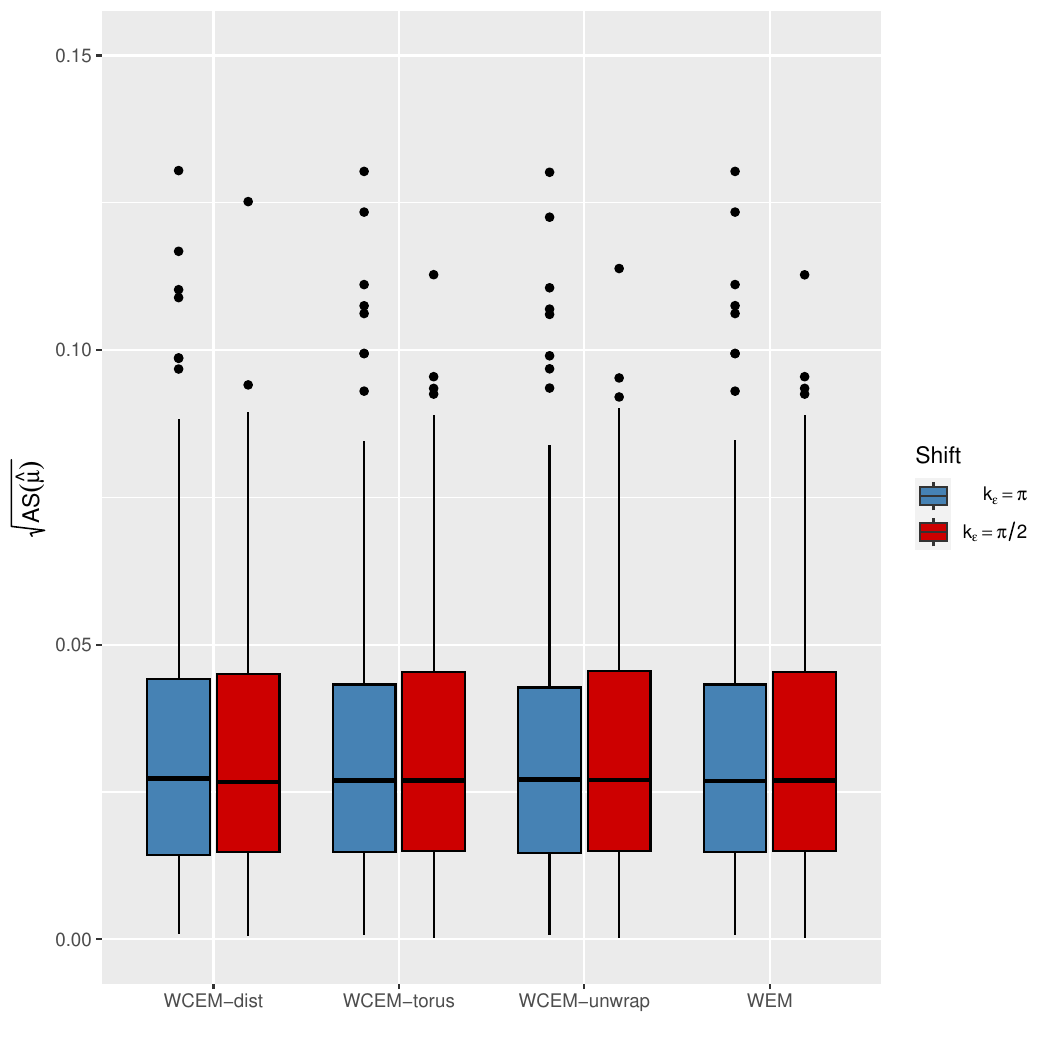}
	\includegraphics[scale=0.28]{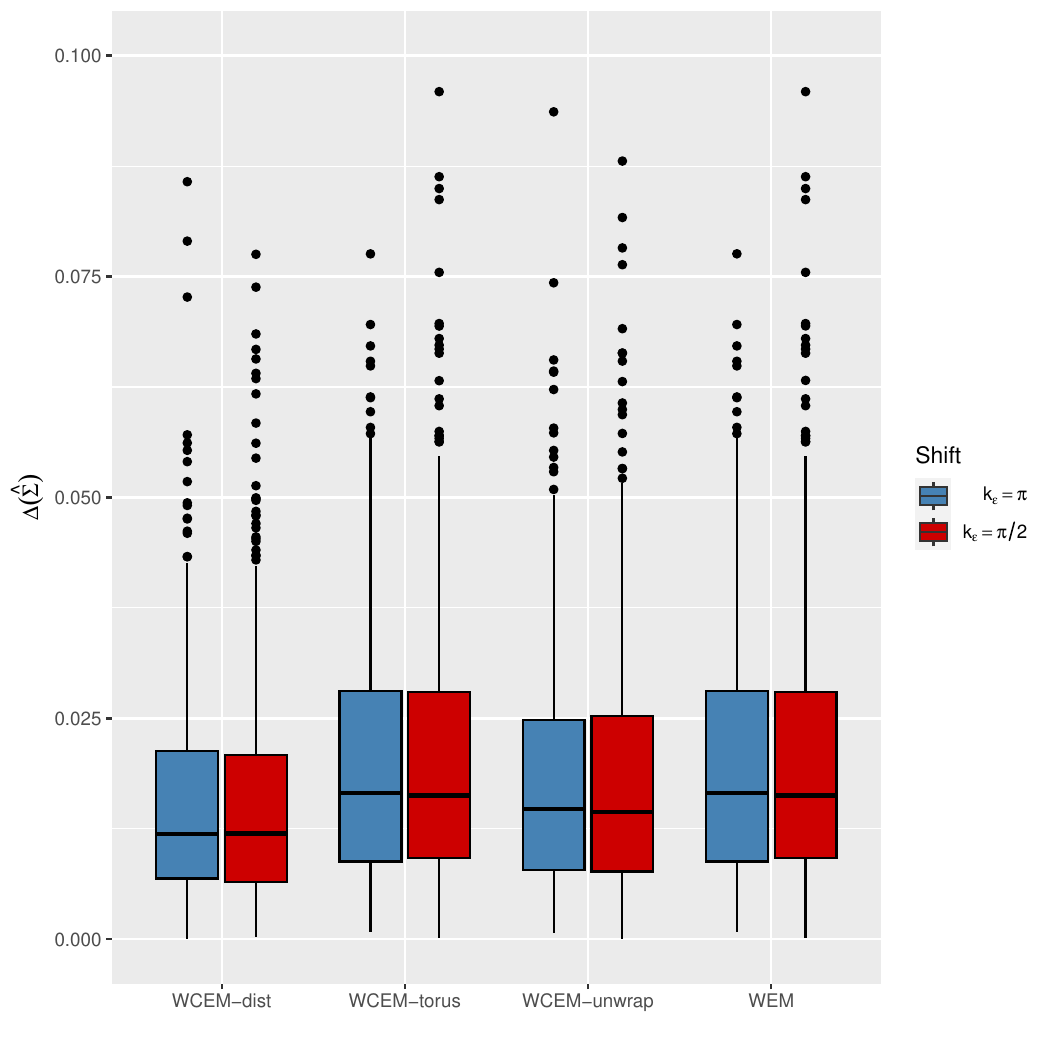}\\
	\includegraphics[scale=0.28]{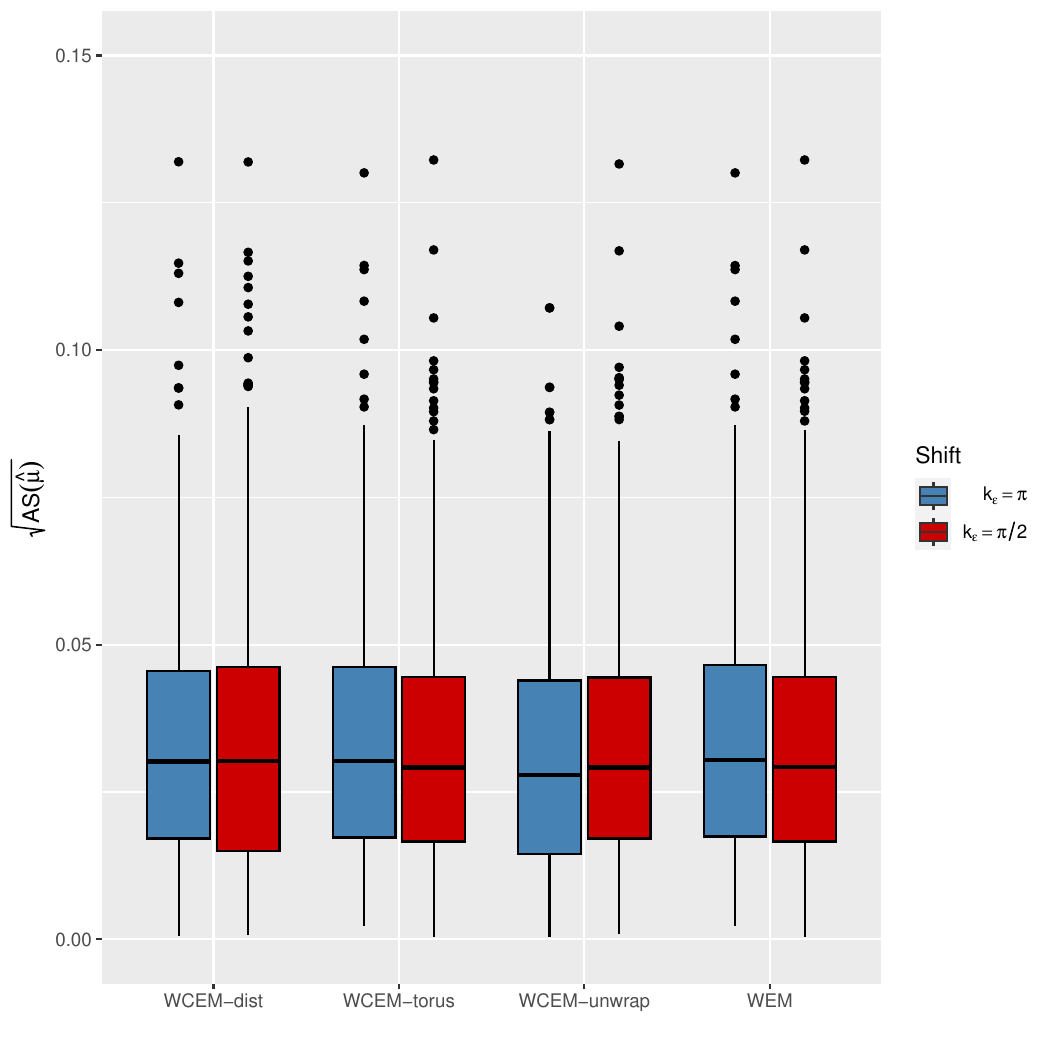}
	\includegraphics[scale=0.28]{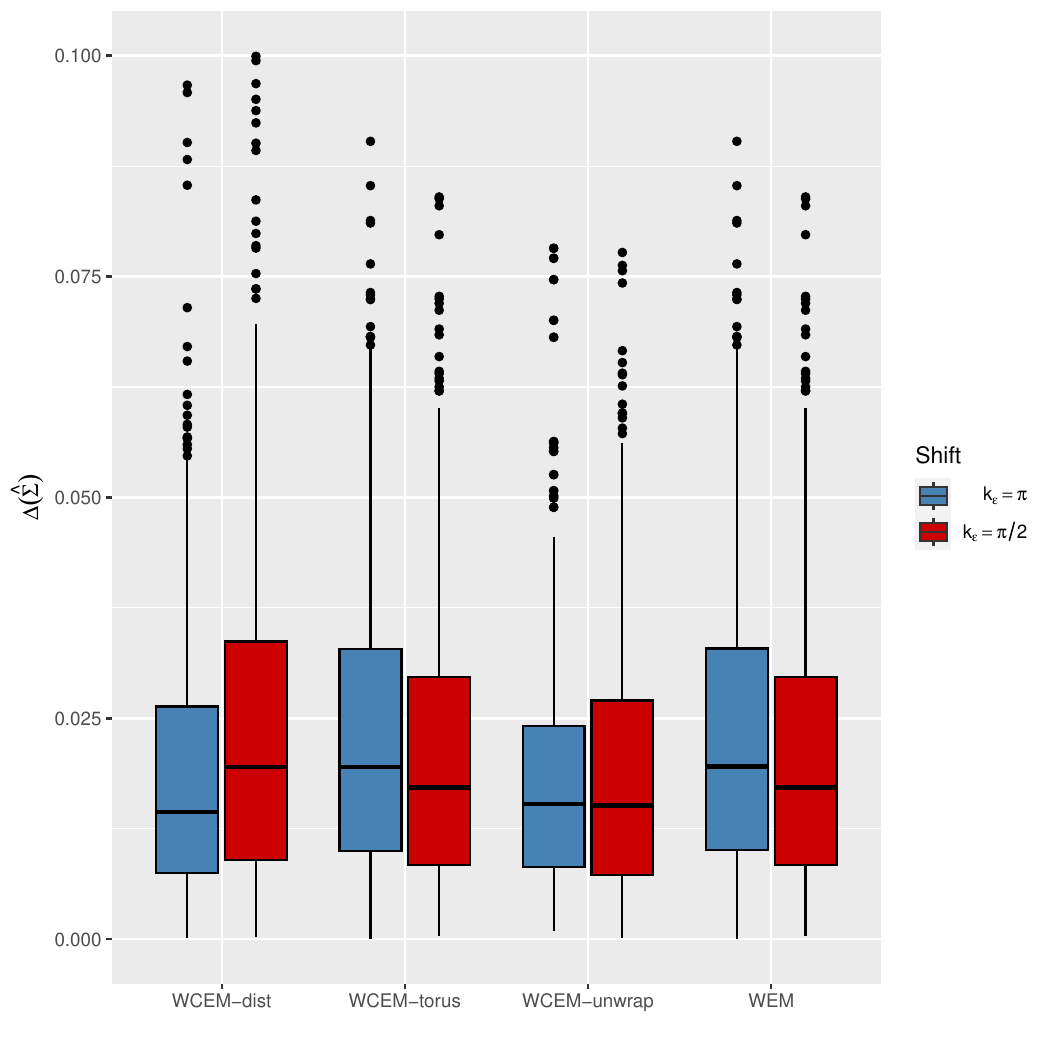}
	\caption{Box-plots for $\sqrt{AS(\hat{\vect{\mu}})}$ (left) and $\Delta(\hat\Sigma)$ (right) for $p=2$, $\sigma=\pi/4$, $k_\epsilon=\pi/2, \pi$ when $\epsilon=10\%$ (top) and $\epsilon=20\%$ (bottom).} 
	\label{figtab2b}
\end{figure*}

\begin{figure*}[t]
	\centering
	\includegraphics[scale=0.28]{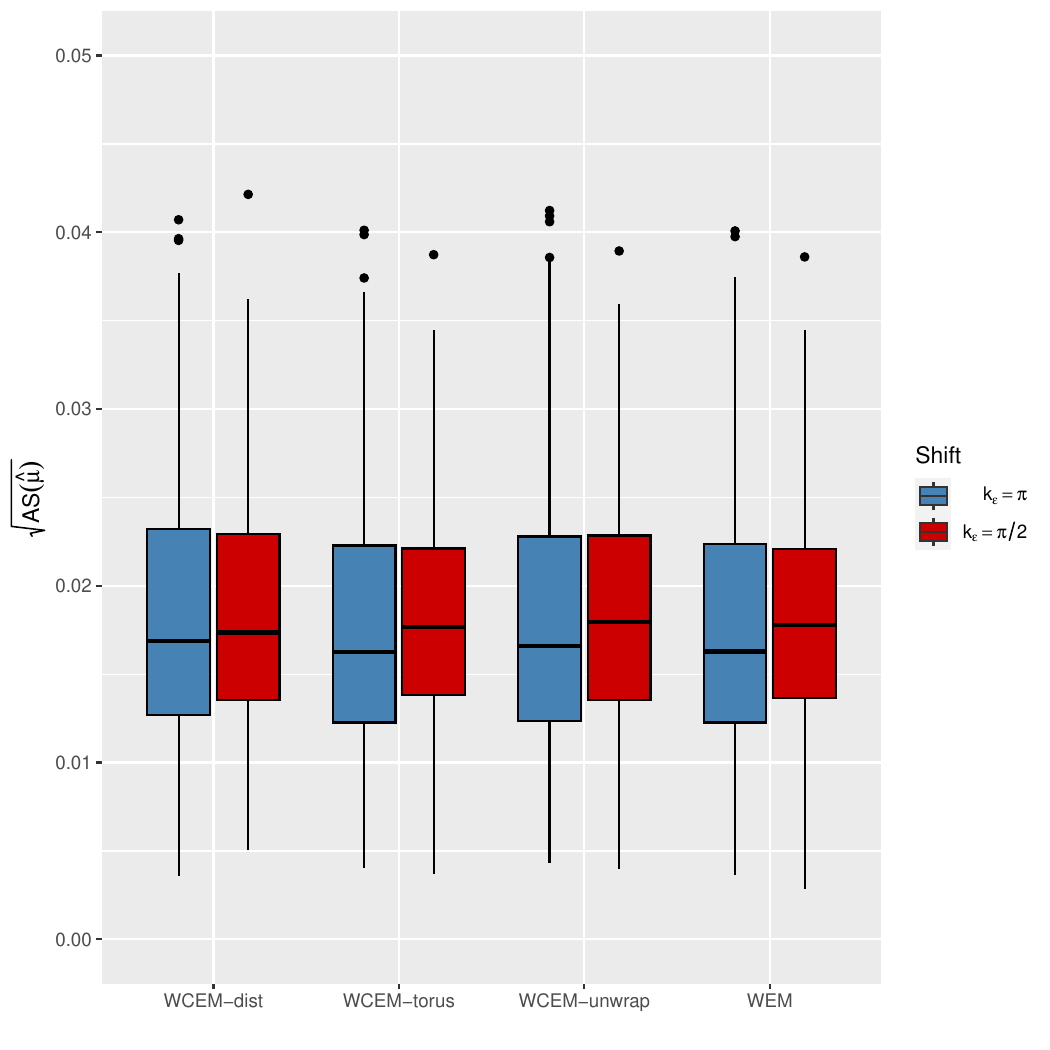}
\includegraphics[scale=0.28]{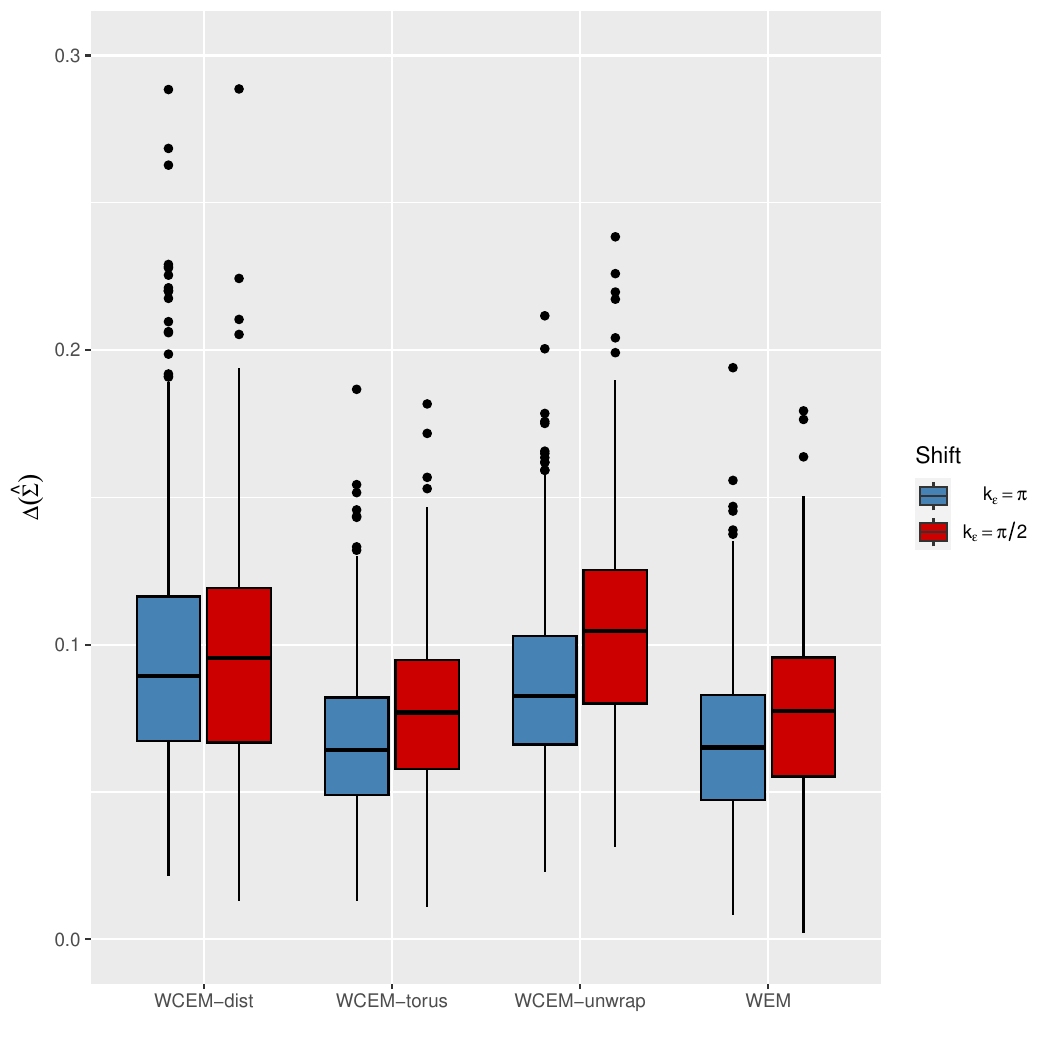}\\
\includegraphics[scale=0.28]{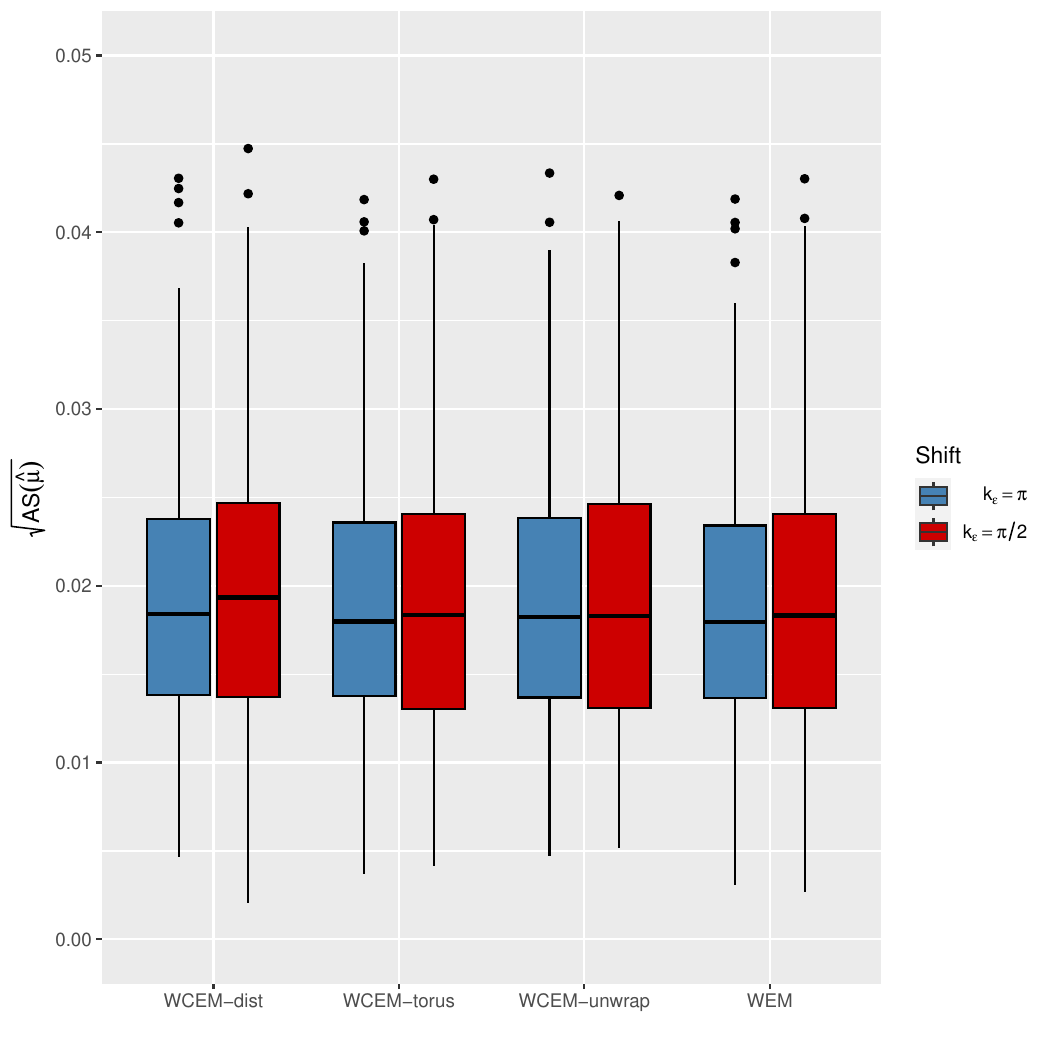}
\includegraphics[scale=0.28]{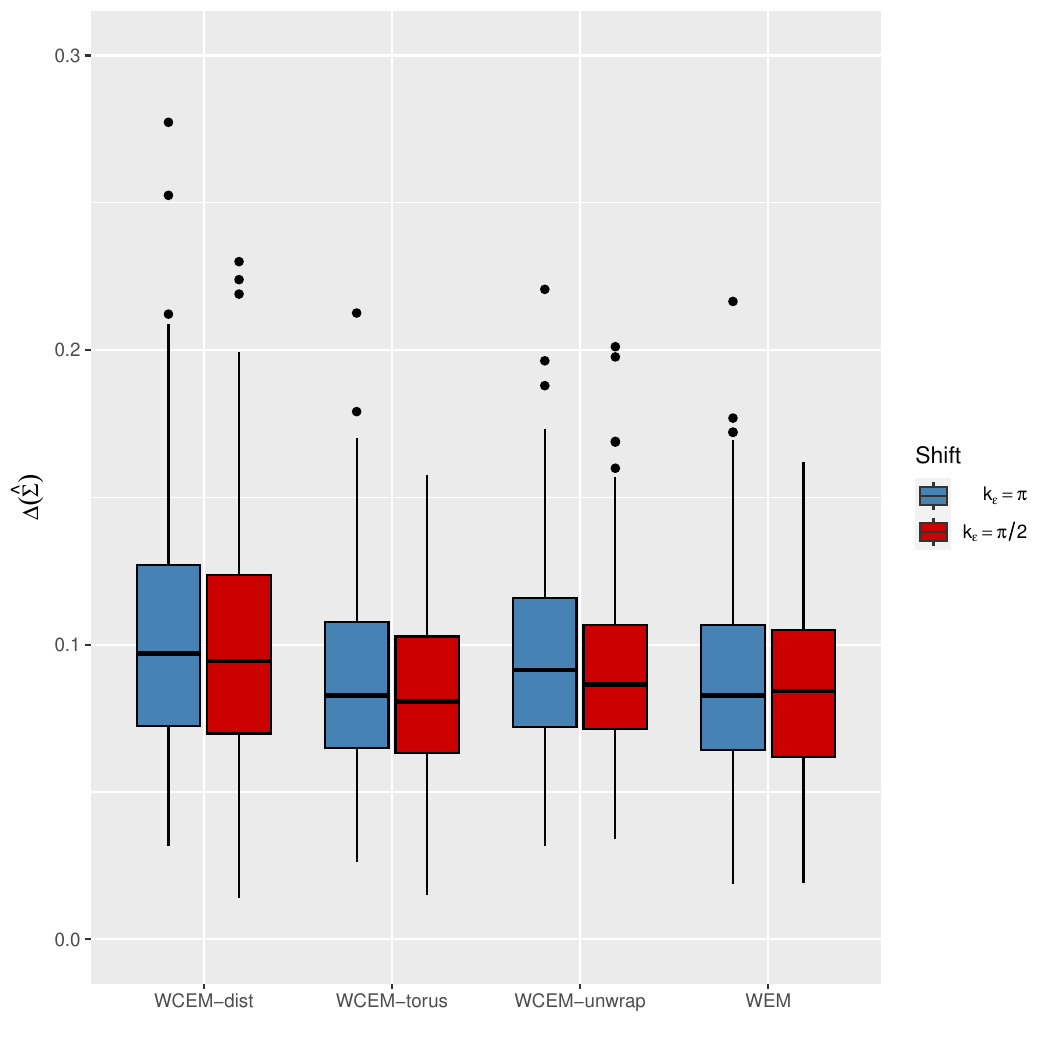}
	\caption{Box-plots for $\sqrt{AS(\hat{\vect{\mu}})}$ (left) and $\Delta(\hat\Sigma)$ (right) for $p=5$, $\sigma=\pi/8$, $k_\epsilon=\pi/2, \pi$ when $\epsilon=10\%$ (top) and $\epsilon=20\%$ (bottom).} 
	\label{figtab3a}
\end{figure*}

\begin{figure*}[t]
	\centering
	\includegraphics[scale=0.28]{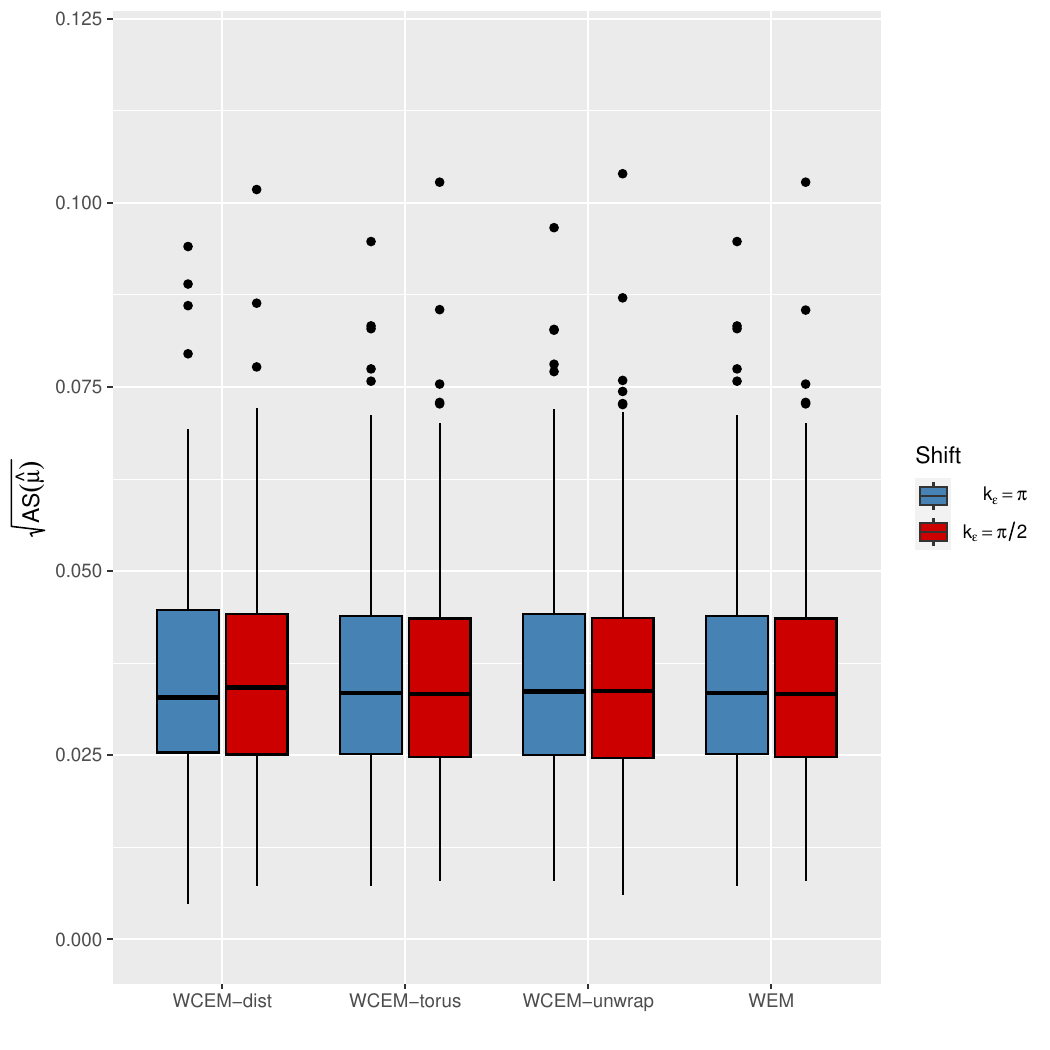}
	\includegraphics[scale=0.28]{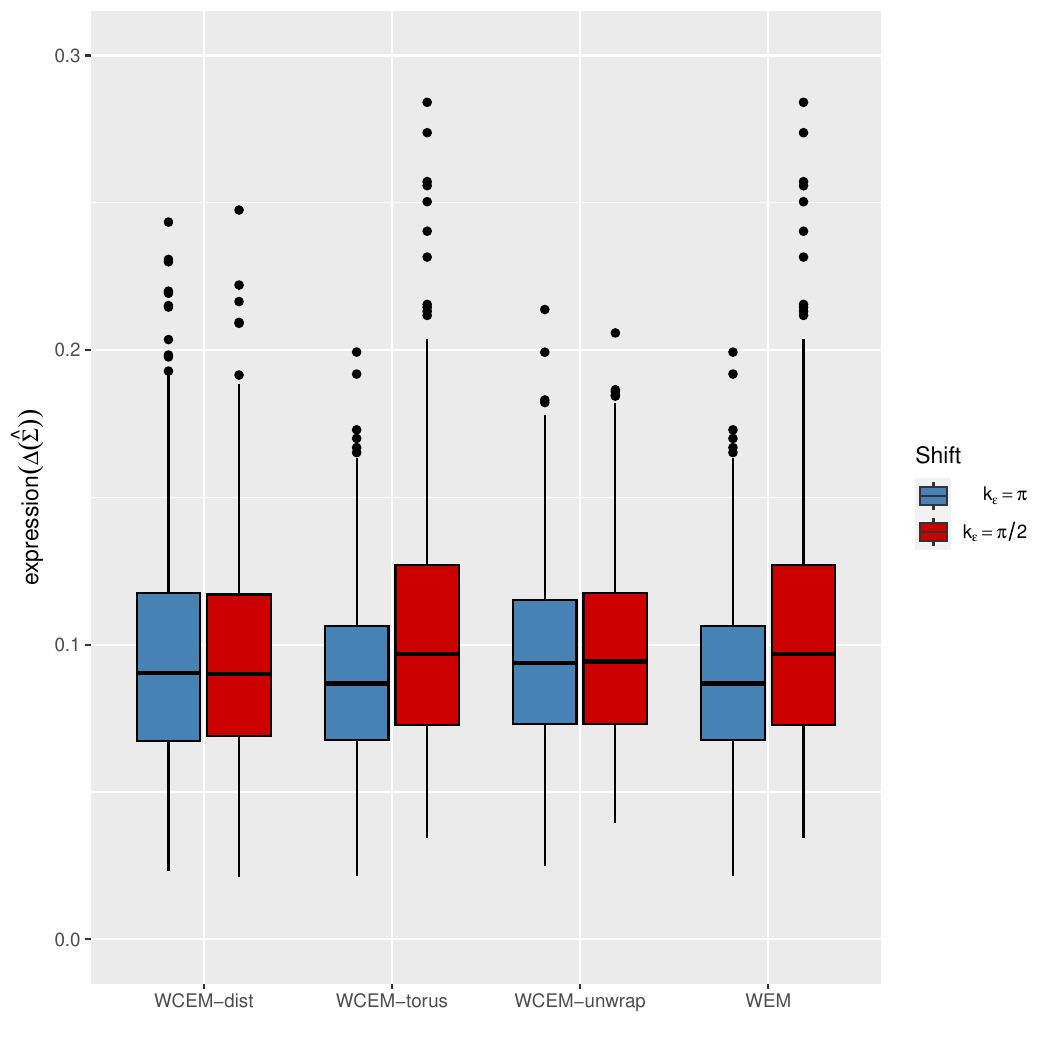}\\
	\includegraphics[scale=0.28]{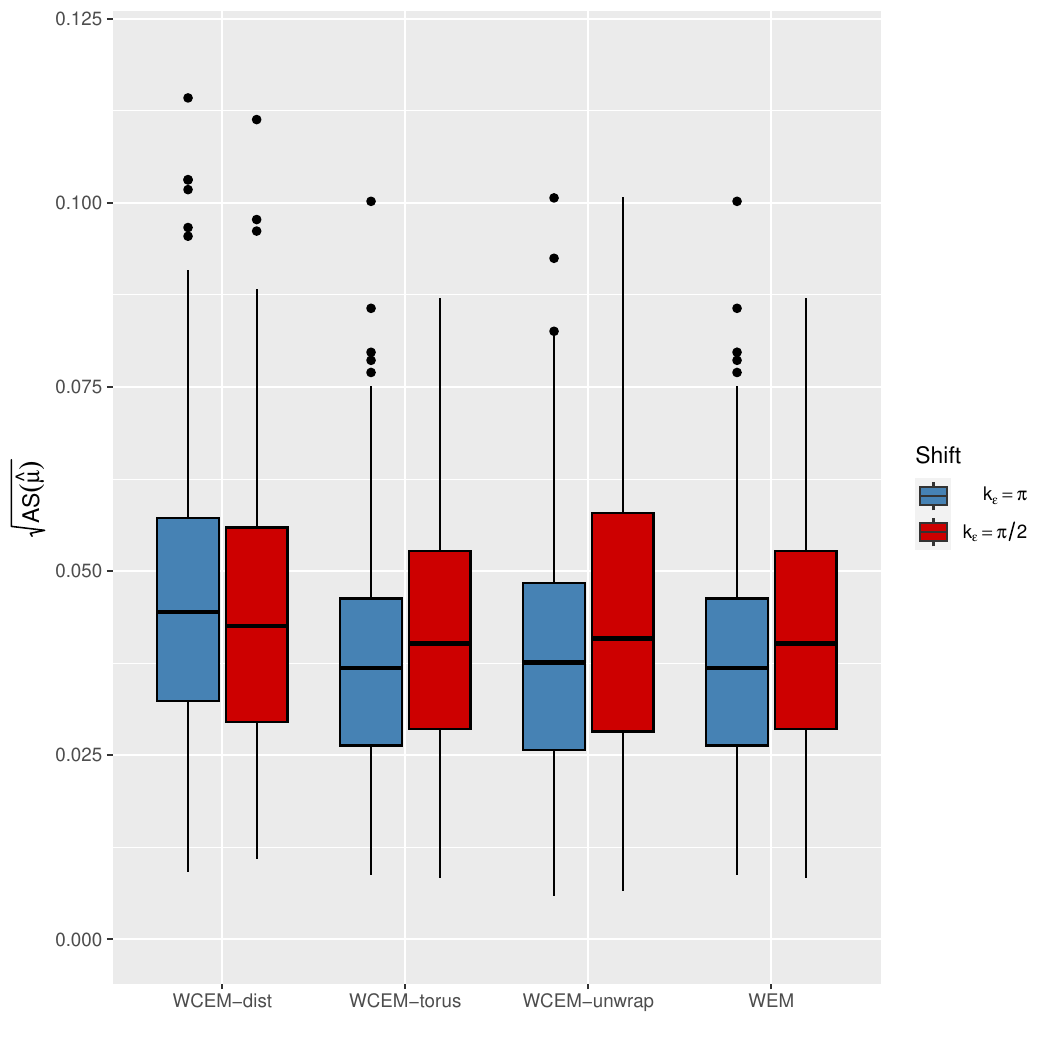}
	\includegraphics[scale=0.28]{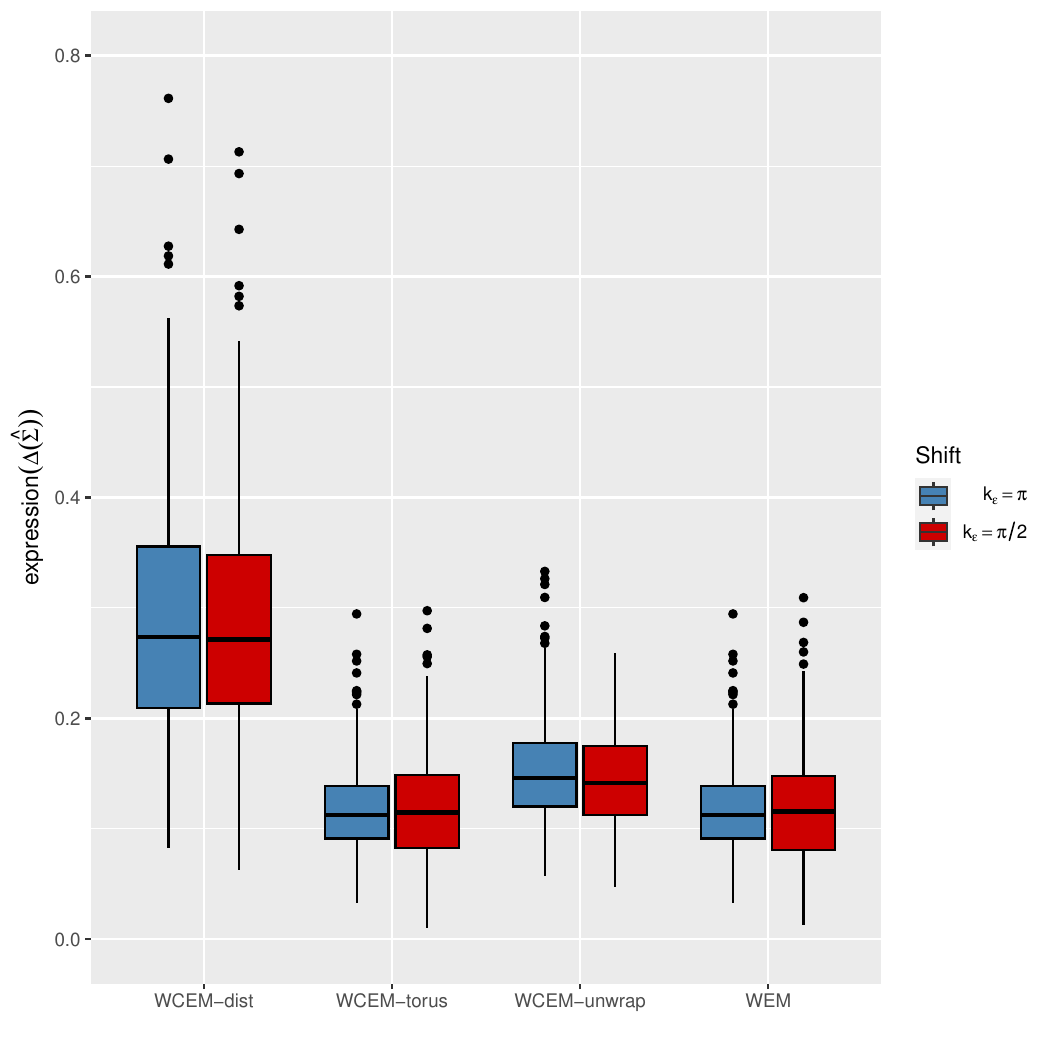}
	\caption{Box-plots for $\sqrt{AS(\hat{\vect{\mu}})}$ (left) and $\Delta(\hat\Sigma)$ (right) for $p=5$, $\sigma=\pi/4$, $k_\epsilon=\pi/2, \pi$ when $\epsilon=10\%$ (top) and $\epsilon=20\%$ (bottom).} 
	\label{figtab3b}
\end{figure*}

\section{Real data examples}
\label{sec:6}

\subsection{8TIM protein data}
Let us consider the 8TIM protein data described in Section \ref{sec:0}. 
We compare the results from maximum likelihood estimation and its robust counterparts based on weighted likelihood estimation under the WN model assumption. We use the same notation introduced in Section \ref{sec:5} to denote the different estimates.
The data and the fitted models given by the EM and WCEM-unwrap based on (\ref{residualfs2}) are shown in Figure \ref{fig:protein1}: the Ramachandran plot of the angles over $[0,2\pi)\times [0,2\pi)$ is given in the left panel, whereas data are displayed on a flat torus in the right panel, to account for their cyclic topology. The results from the WEM or WCEM-torus are indistinguishable.
In both panels the fitted models are represented through tolerance ellipses based on the $0.99-$level quantile of the $\chi^2_2$ distribution. 
The data clearly show a multi-modal clustered pattern.
Actually, the robust analyses give strong indication of the presence of several clusters: they  
all disclose the presence of different structures, otherwise undetectable by maximum likelihood estimation. The tolerance ellipses corresponding to the robustly fitted WN distribution enclose those points in  the most dense area, whereas the others are severely down-weighted. 
There is strong agreement with the findings from the analysis in \cite{bambi}.  
In the left panel of Figure \ref{fig:protein2} we displayed the weights from the WCEM-unwrap algorithm. 
According to an outliers detection testing rule performed at a significance level $\alpha=0.01$, the actual rate of contamination is about $46\%$. 
The right panel of Figure \ref{fig:protein2} shows the corresponding distance plot based on robust distances. The horizontal line gives the (square root) $\chi^2_{0.99, 2}$ cut-off. Figure \ref{fig:d} shows genuine points and outliers on the torus.

The clustered structure of the data suggested by the outcome of the robust analyses can be further explored using a monitoring plot of the weights as the bandwidth $h$ varies on a chosen grid of values. In this example, the bandwidth matrix is $H=\textrm{diag}(h^2)$. The vertical line gives the bandwidth actually used. 
The dark trajectories in Figure \ref{fig:protein3} correspond to those points receiving a large weight in the robust analysis, whereas the gray lines refer to the other points.
For small values of the bandwidth $h$, at least two groups can be detected. As $h$ increases,  we notice a transition from the robust to a non robust fit since many other observations are attached large weights and the size of global down-weighting reduces. In particular, some data points exhibit very steep trajectories, as they are no more down-weighted from some point ahead. This behavior suggest the presence of a second group of observations.  A closer look at Figure \ref{fig:protein3} also unveils a third group, that is composed by those points whose weight is still low for large values of the bandwidth on the right end part of the plot. These points highlight features that are not assimilable to the previous groups. 
Hence, the robust analysis indicates at least three groups. This finding is 
confirmed by the results stemming from a proper model based clustering of the torus data at hand \citep{greco2022mix}, whose cluster assignments are displayed in Figure \ref{fig:protein4} and Figure \ref{fig:e} .

\begin{figure*}[t]
	\centering
	\includegraphics[scale=0.28]{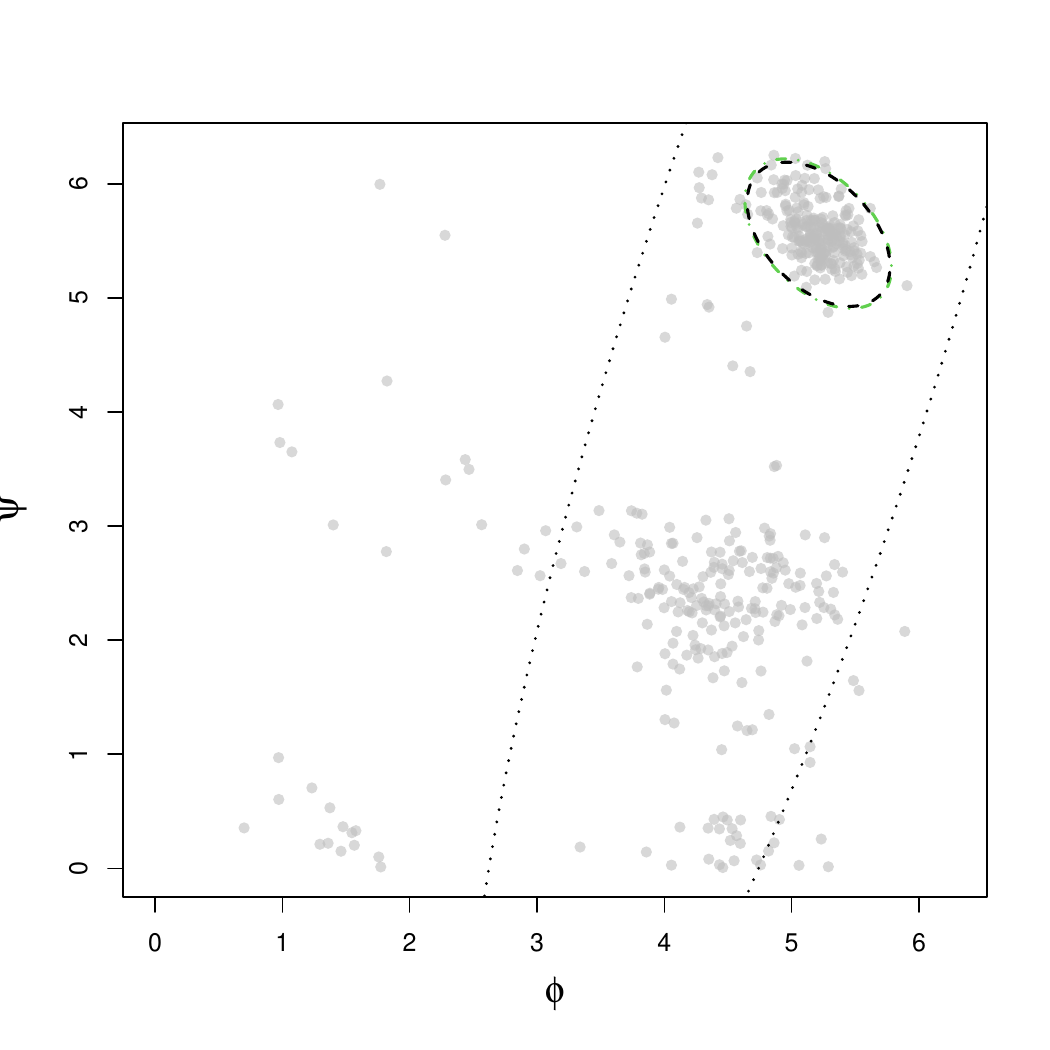}
	\includegraphics[scale=0.28]{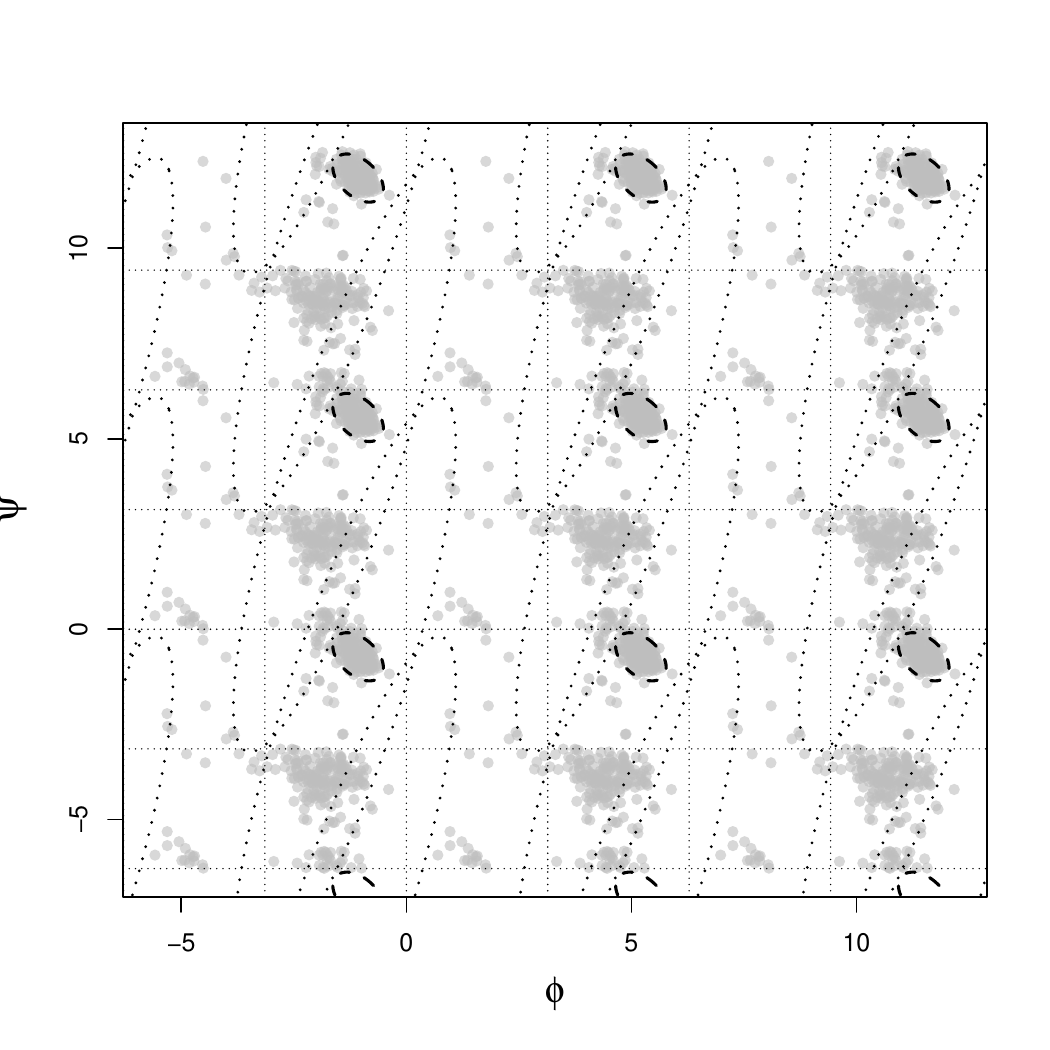}
	\caption{8TIM protein data. Left panel: Ramachandran plot. Right panel: unwrapped data on a flat torus. $99\%$ tolerance ellipses overimposed: robust fit (dashed line), maximum likelihood estimation (dotted line).} 
	\label{fig:protein1}
\end{figure*}

\begin{figure*}[t]
	\centering
	\includegraphics[scale=0.28]{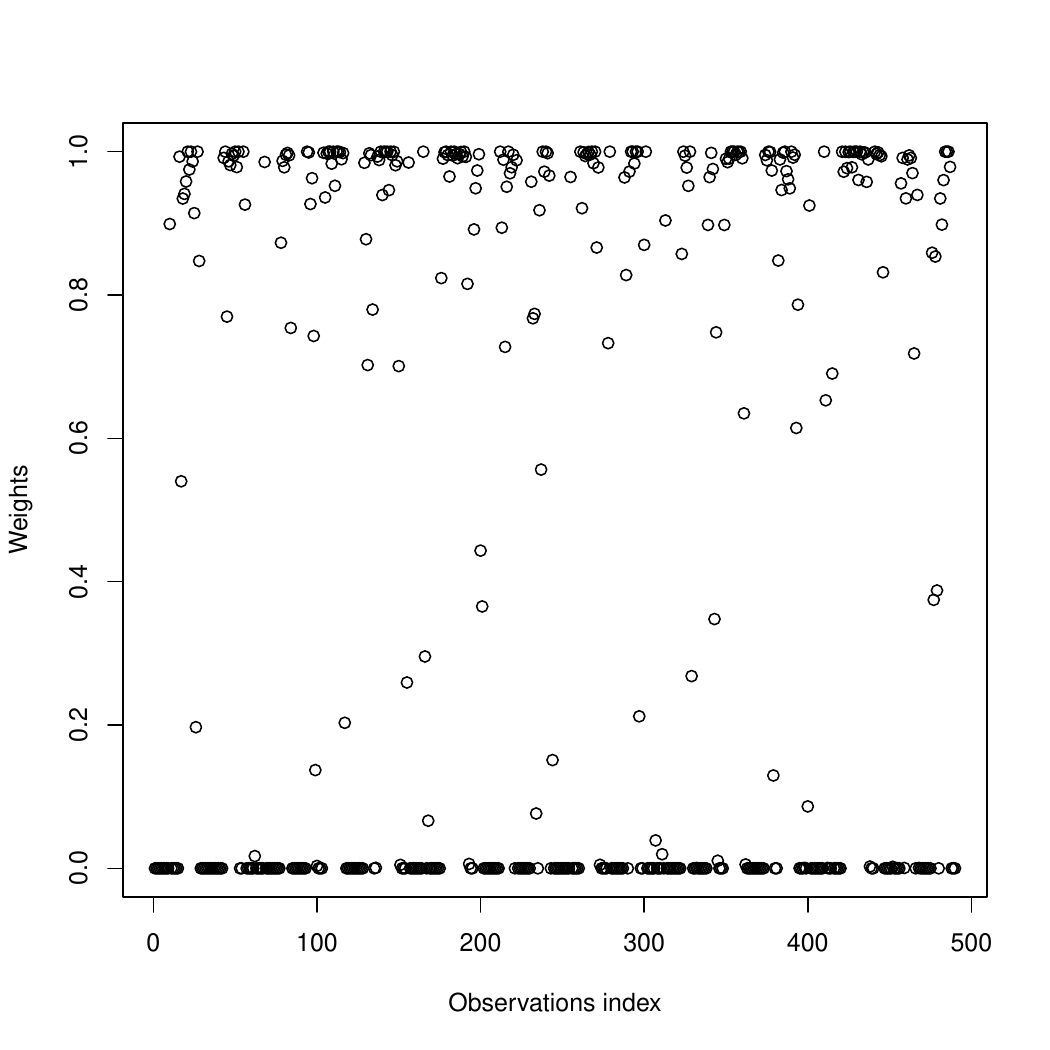}
	\includegraphics[scale=0.28]{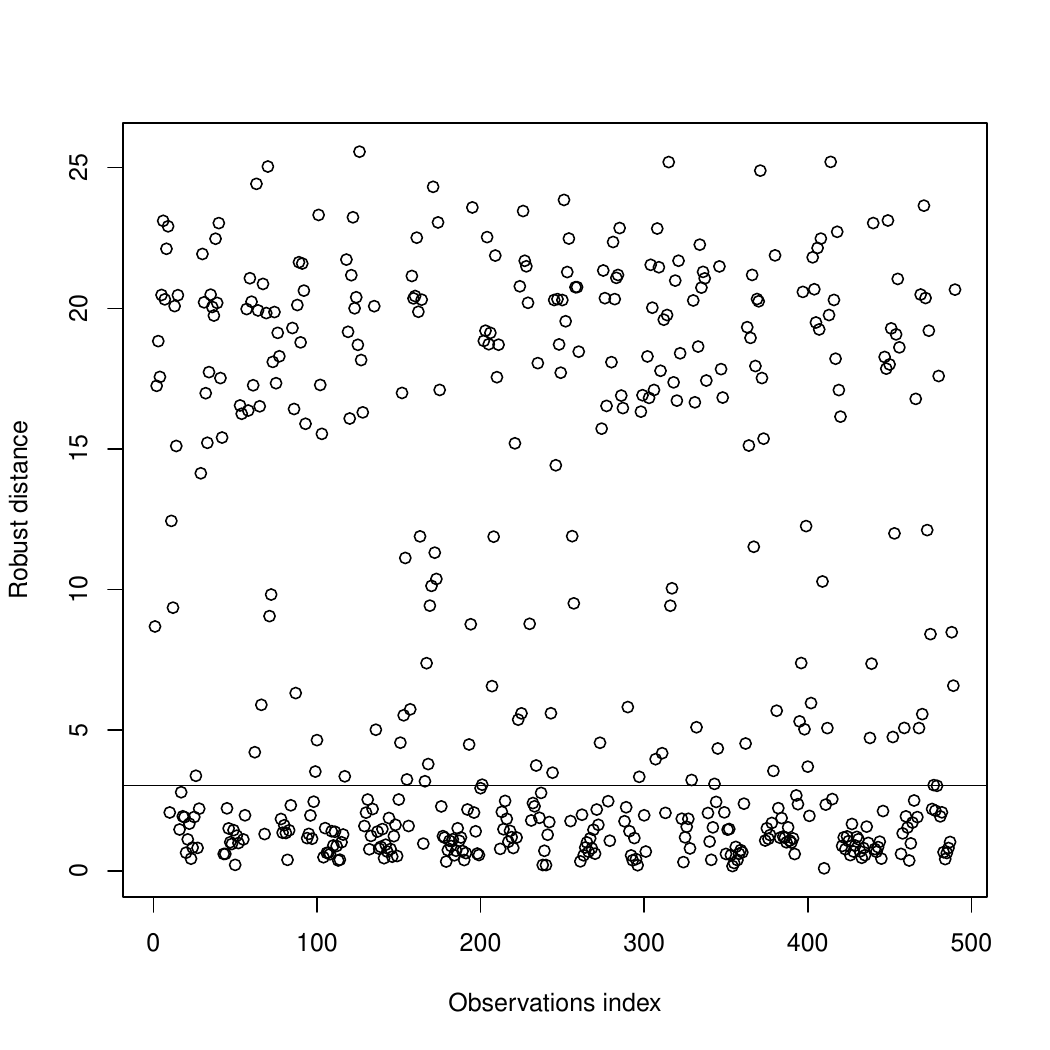}
	\caption{8TIM protein data. Left panel: weights. Right panel: robust distances. the horizontal line gives the square root of the $0.99$-level quantile of the $\chi^2_2$ distribution.}
	\label{fig:protein2}
\end{figure*}

\begin{figure*}[t]
	\centering
	\includegraphics[scale=0.65]{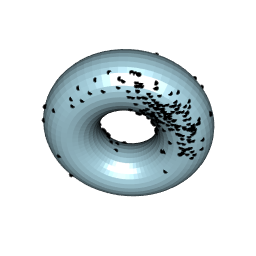}
	\includegraphics[scale=0.65]{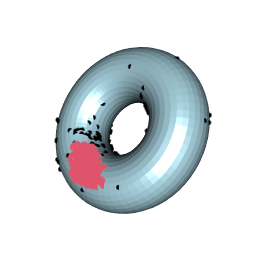}
	\caption{8TIM protein data. Bivariate angles as points on the surface of a torus from two different perspectives: genuine observations correspond to (red) larger dots, the remaining are outliers.} 
	\label{fig:d}
\end{figure*}

\begin{figure*}[t]
	\centering
	\includegraphics[scale=0.3]{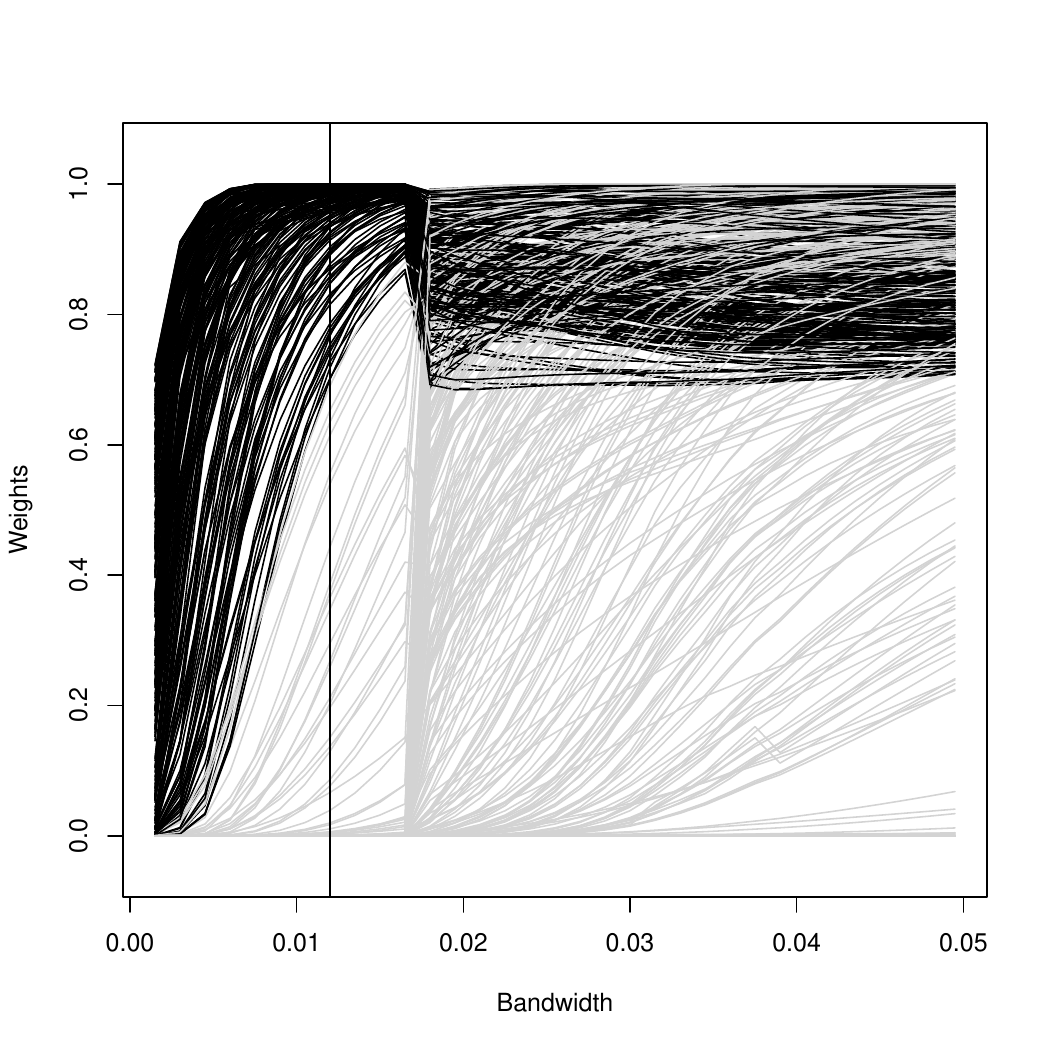}
	\caption{8TIM protein data. Monitoring plot of weights from the robust fit. The vertical line gives the selected bandwidth value.}
	\label{fig:protein3}
\end{figure*}

\begin{figure*}[t]
	\centering
	\includegraphics[scale=0.3]{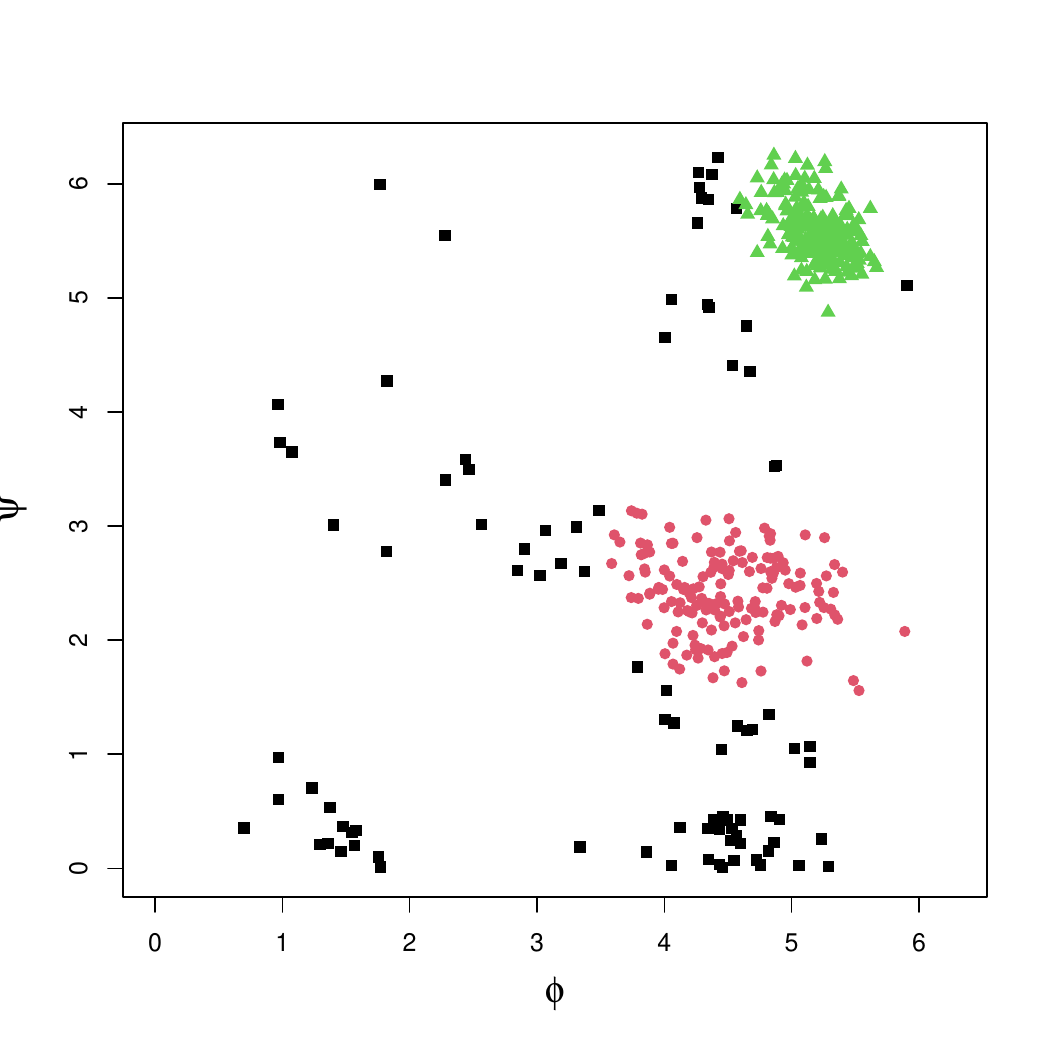}
	\caption{8TIM protein data. Model based clustering. }
	\label{fig:protein4}
\end{figure*}

\begin{figure*}[t]
	\centering
	\includegraphics[scale=0.65]{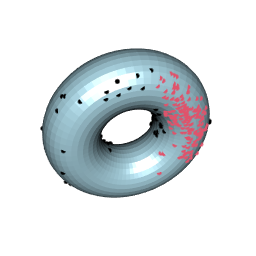}
	\includegraphics[scale=0.65]{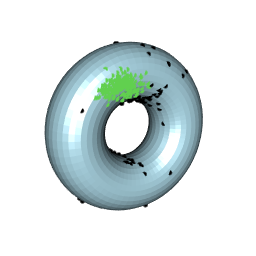}
	\caption{8TIM protein data. Model based clustering on the torus.} 
	\label{fig:e}
\end{figure*}

\subsection{RNA data}
RNA is assembled as a chain of nucleotides that constitutes a single strand folded onto itself. A nucleotide contains the five-carbon sugar deoxyribose, a nucleobase, that is a nitrogenous base,  and one phosphate group. Then, each nucleotide in RNA molecules presents seven torsion angles: six dihedral angles and one angle for the base. Data have been taken from the large RNA data set \citep{wadley2007evaluating}. Here, we consider a sub-sample of size $n=260$, obtained after joining data from two distinct clusters, whose size are 232 and 28, respectively, and we neglect the information about group labels in the fitting process.  Since the sizes of two clusters are very unbalanced, a feasible robust method is expected to fit the majority of the data belonging to the larger cluster and to lead to detect the data from the smaller cluster as outliers, as they share a different pattern. Figure \ref{fig:protein5} gives the distance plot from WCEM-torus, WCEM-unwrap and WCEM-dist, under the WN model. We do not appreciate noticeable differences among the results.
Each technique leads to detect the smaller group, denoted by black dots.
Actually, in this case, the outcome from the robust analysis allows to cope with an unsupervised classification problem and to discriminate between the two groups, with a satisfactory balance between swamping and power. \\

\begin{figure*}[t]
	\centering
	\includegraphics[scale=0.28]{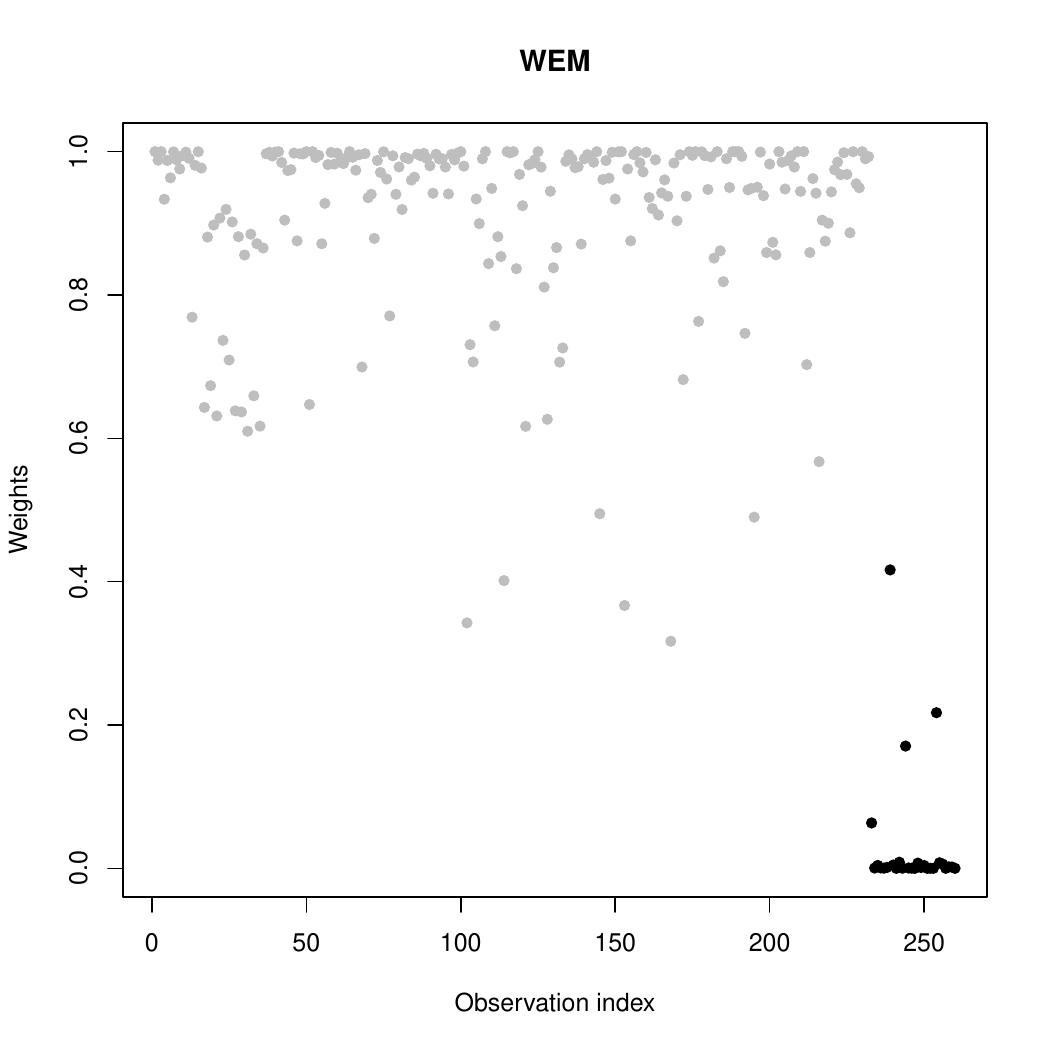}
	\includegraphics[scale=0.28]{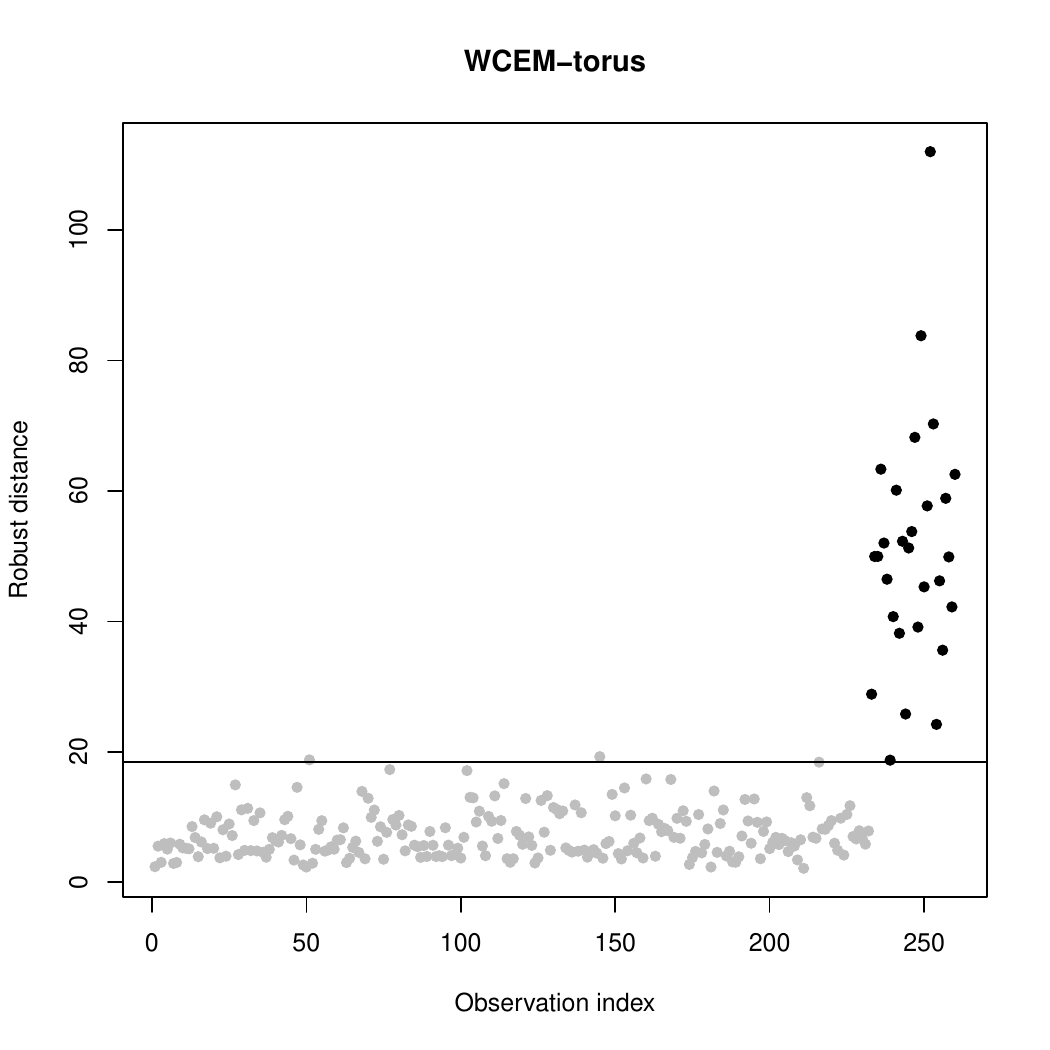}
	\includegraphics[scale=0.28]{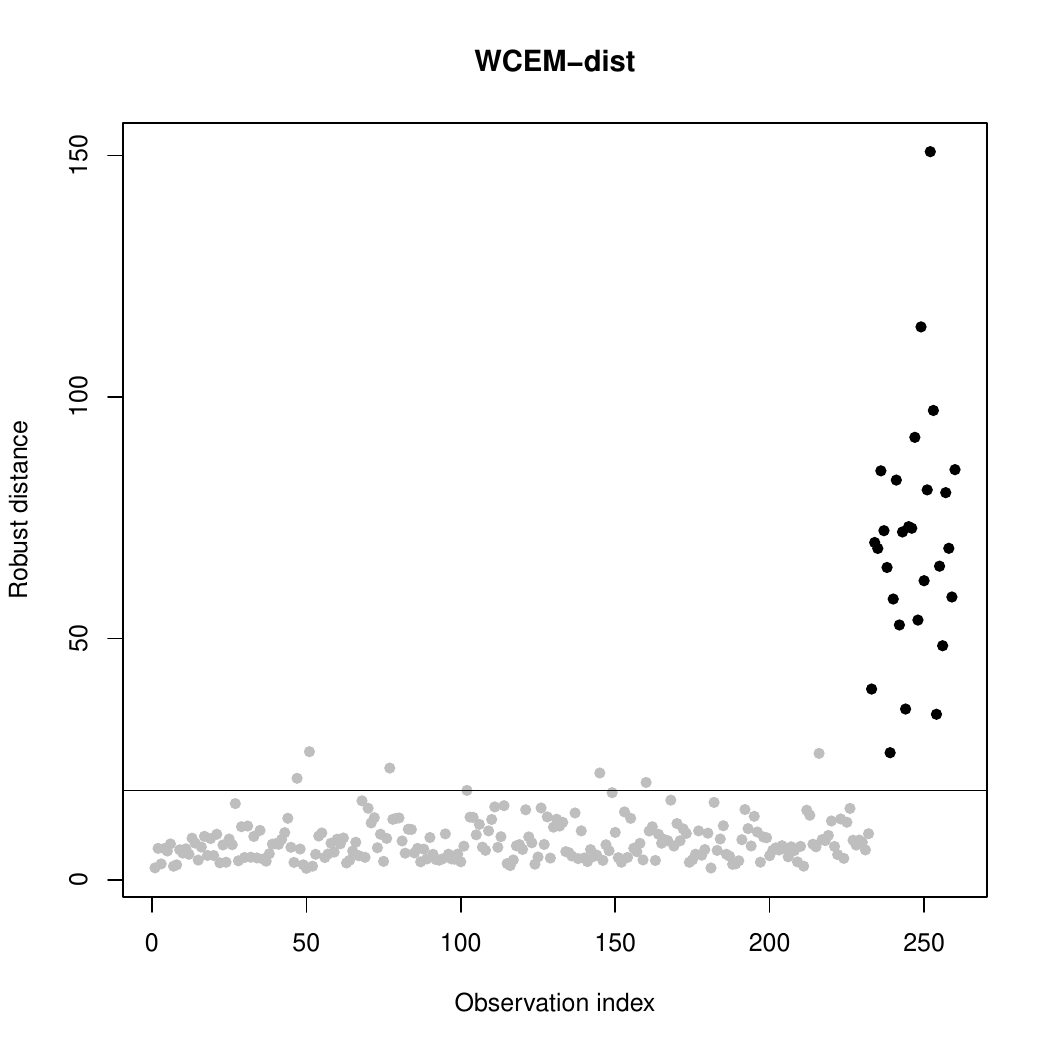}
	\includegraphics[scale=0.28]{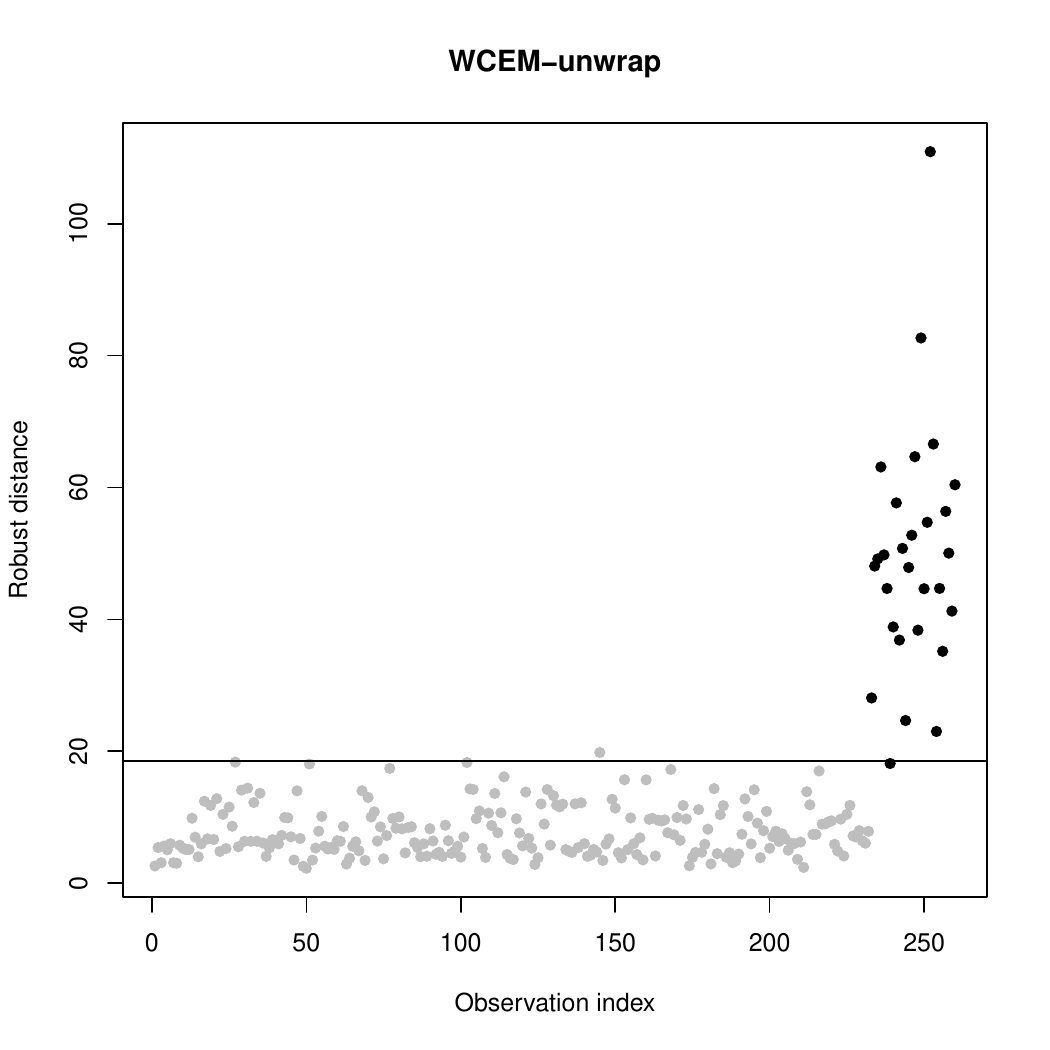}
	\caption{RNA data. Weights returned by the WEM (left-right panle). Squared distance plots for the different weighting scheme as given by the WCEM-torus, WCEM-unwrap and WCEM-dist (clockwise in the other panels). Black dots give points from the smaller {\it outlying} cluster. The horizontal line gives the $0.99$-level quantile of the $\chi^2_7$ distribution.}
	\label{fig:protein5}
\end{figure*}

\appendix

\section{MLE for wrapped unimodal elliptically symmetric distributions}

\label{appendix:mle}

Let us consider the circular model $$m^\circ(\vect{y}; \vect{\mu}, \Sigma) = \sum_{\vect{j} \in \mathbb{Z}^p} m(\vect{y} + 2\pi \vect{j}; \vect{\mu}, \Sigma)$$ where
\begin{equation*}
m(\vect{x}; \vect{\theta}) \propto \vert \Sigma \vert ^{-1/2} h\left((\vect{x} - \vect{\mu})^\top \Sigma^{-1} (\vect{x} - \vect{\mu})\right)
\end{equation*}
is a unimodal elliptically symmetric distribution. 
The log-likelihood function based on an  i.i.d. sample $\vect{y}_1, \ldots, \vect{y}_n$ is
\begin{align*}
\ell^\circ(\vect{\mu}, \Sigma) & = \sum_{i=1}^n \log m^\circ(\vect{y}_i; \vect{\mu}, \Sigma) \\
& = \sum_{i=1}^n \log \sum_{\vect{j} \in \mathbb{Z}^p} m(\vect{y}_i + 2\pi \vect{j}; \vect{\mu}, \Sigma) \\
& \propto \sum_{i=1}^n \log \sum_{\vect{j} \in \mathbb{Z}^p} \vert \Sigma\vert^{-\frac{1}{2}} h\left[(\vect{y}_i + 2\pi \vect{j} - \vect{\mu})^\top \Sigma^{-1} (\vect{y}_i + 2\pi \vect{j} - \vect{\mu})\right] \\
& = \frac{n}{2} \log \vert \Sigma^{-1} \vert + \sum_{i=1}^n \log \sum_{\vect{j} \in \mathbb{Z}^p} h\left[\trace \left((\vect{y}_i + 2\pi \vect{j} - \vect{\mu}) (\vect{y}_i + 2\pi \vect{j} - \vect{\mu})^\top \Sigma^{-1} \right) \right]\\
\end{align*}
Recall that for given square matrices $A$ and $B$, both symmetric and positive definite we have that
\begin{enumerate}
\item $\nabla_{A} \trace(BA) = B^\top$,
\item  $\nabla_A \log(\vert A \vert) = \left(A^{-1}\right)^\top$, 
\item  $\nabla_{\vect{x}} (\vect{x}^\top A \vect{x}) = 2 A \vect{x}$ .
\end{enumerate}
Let $d_{i\vect{j}}(\vect{\mu},\Sigma) = (\vect{y}_i + 2\pi \vect{j} - \vect{\mu})^\top \Sigma^{-1}(\vect{y}_i + 2\pi \vect{j} - \vect{\mu})$. 
Taking the derivatives w.r.t. $\vect{\mu}$ and $\Sigma^{-1}$, the likelihood equations are
\begin{align*}
\nabla_{\vect{\mu}} \ell^\circ(\vect{\mu}, \Sigma) & = \sum_{i=1}^n \nabla_{\vect{\mu}}  \log \sum_{\vect{j} \in \mathbb{Z}^p} h(d_{i\vect{j}}(\vect{\mu},\Sigma)) \\
& = \sum_{i=1}^n \frac{\sum_{\vect{j} \in \mathbb{Z}^p} \nabla_{\vect{\mu}} h(d_{i\vect{j}}(\vect{\mu},\Sigma))}{\sum_{\vect{k} \in \mathbb{Z}^p} h(d_{i\vect{k}}(\vect{\mu},\Sigma))} \\
& = 2 \sum_{i=1}^n \frac{\sum_{\vect{j} \in \mathbb{Z}^p} h^\prime(d_{i\vect{j}}(\vect{\mu},\Sigma)) \Sigma^{-1} (\vect{y}_i + 2\pi \vect{j} - \vect{\mu})}{\sum_{\vect{k} \in \mathbb{Z}^p} h(d_{i\vect{k}}(\vect{\mu},\Sigma))} 
\end{align*}
and
\begin{align*}
\nabla_{\Sigma^{-1}} \ell^\circ(\vect{\mu}, \Sigma) & = \frac{n}{2} \Sigma^\top + \sum_{i=1}^n \nabla_{\Sigma^{-1}} \log \sum_{\vect{j} \in \mathbb{Z}^p} h(d_{i\vect{j}}(\vect{\mu},\Sigma)) \\
& = \frac{n}{2} \Sigma + \sum_{i=1}^n \frac{\sum_{\vect{j} \in \mathbb{Z}^p} \nabla_{\Sigma^{-1}} h(d_{i\vect{j}}(\vect{\mu},\Sigma))}{\sum_{\vect{k} \in \mathbb{Z}^p} h(d_{i\vect{k}}(\vect{\mu},\Sigma))} \\
& = \frac{n}{2} \Sigma + \sum_{i=1}^n \frac{\sum_{\vect{j} \in \mathbb{Z}^p} h^\prime(d_{i\vect{j}}(\vect{\mu},\Sigma))  (\vect{y}_i + 2\pi \vect{j}-\vect{\mu})  (\vect{y}_i + 2\pi \vect{j}-\vect{\mu})^\top}{\sum_{\vect{k} \in \mathbb{Z}^p} h(d_{i\vect{k}}(\vect{\mu},\Sigma))} \\
\end{align*}
where $h^\prime(d) = \partial h(d)/\partial d$. 
Let
\begin{equation*}
v_{i\vect{j}} = \frac{h^\prime(d_{i\vect{j}}(\vect{\mu},\Sigma))}{\sum_{\vect{k} \in \mathbb{Z}^p} h(d_{i\vect{k}}(\vect{\mu},\Sigma))} \ .
\end{equation*}  
Then, the MLE $(\hat{\vect{\mu}}, \hat\Sigma)$ is the solution to the (set of) fixed point equations
\begin{align*}
\vect{\mu} & = \frac{\sum_{i=1}^n \sum_{\vect{j} \in \mathbb{Z}^p} v_{i\vect{j}}(\vect{y}_i + 2\pi \vect{j})}{\sum_{i=1}^n \sum_{\vect{k} \in \mathbb{Z}^p} v_{i\vect{k}}} \\
\Sigma & = -\frac{2}{n} \sum_{i=1}^n \sum_{\vect{j} \in \mathbb{Z}^p} v_{i\vect{j}} (\vect{y}_i + 2\pi \vect{j}-\vect{\mu}) (\vect{y}_i + 2\pi \vect{j}-\vect{\mu})^\top \ .
\end{align*}
The WN distribution corresponds to $h(t) = \exp\left(-\frac{t}{2} \right)$. Since $h^\prime(t) = -\frac{1}{2} h(d)$ then
\begin{equation*}
v_{i\vect{j}} = -\frac{1}{2} \frac{h(d_{i\vect{j}})}{\sum_{\vect{k} \in \mathbb{Z}^p} h(d_{i\vect{k}})} = -\frac{1}{2} \frac{m(\vect{y}_i + 2\pi \vect{j}; \vect{\mu}, \Sigma)}{\sum_{\vect{k} \in \mathbb{Z}^p} m(\vect{y}_i + 2\pi \vect{k}; \vect{\mu}, \Sigma)} \ .
\end{equation*}
and the estimating equations simplify to
\begin{align*}
\vect{\mu} & = \frac{1}{n} \sum_{i=1}^n \sum_{\vect{j} \in \mathbb{Z}^p} \omega_{i\vect{j}} (\vect{y}_i + 2\pi \vect{j}) \\
\Sigma & = \frac{1}{n} \sum_{i=1}^n \sum_{\vect{j} \in \mathbb{Z}^p} \omega_{i\vect{j}} (\vect{y}_i + 2\pi \vect{j} - \vect{\mu})(\vect{y}_i + 2\pi \vect{j} - \vect{\mu})^\top \ . \\
\end{align*}
with
\begin{equation*}
\omega_{i\vect{j}} = \frac{m(\vect{y}_i + 2\pi \vect{j}; \vect{\mu}, \Sigma)}{\sum_{\vect{k} \in \mathbb{Z}^p} m(\vect{y}_i + 2\pi \vect{k}; \vect{\mu}, \Sigma)} \ .
\end{equation*}

\section{EM algorithm for WN estimation}
\label{appendix:em}

Given an i.i.d. sample $(\vect{y}_1, \ldots, \vect{y}_n)$ from a WN distribution, in the EM algorithm the wrapping coefficients $\vect{j}$ are considered as latent variables and the observed torus data $\vect{y}_i$s as being incomplete, that is $\vect{y}_i$ is assumed to be one component of the pair $(\vect{y}_i,\vect{\omega}_i)$, where $\vect{\omega}_i=(\omega_{i\vect{j}} : \vect{j} \in \mathbb{Z}^p)$ is the associated latent wrapping coefficients label vector. Then, the MLE for $\vect{\theta} = (\vect{\mu}, \Sigma)$ is the result of the EM algorithm based on the complete log-likelihood function 
\begin{equation} \label{llc}
\ell_c(\vect{\theta})=\sum_{i=1}^n \sum_{\vect{j} \in \mathbb{Z}^p} \omega_{i\vect{j}} \log m(\vect{y}_i + 2 \pi \vect{j}; \vect{\theta})  \ .
\end{equation}
In the Expectation step (E-step), we evaluate the conditional expectation of (\ref{llc}) given the observed data and the current parameters value $\vect{\theta}$ by computing the conditional probability that $\vect{y}_i$ has $\vect{j}$ as wrapping coefficients vector, that is
\begin{equation*}
\omega_{i\vect{j}}=\frac{m(\vect{y}_i + 2 \pi \vect{j}; \vect{\theta})}{\sum_{\vect{k} \in \mathbb{Z}^p} m(\vect{y}_i + 2 \pi \vect{k}; \vect{\theta})}, \forall \vect{j} \in \mathbb{Z}^p \ .
\end{equation*}
Parameters estimation is carried out in the Maximization step (M-step) solving the set of (complete) likelihood equations
\begin{equation*}
\sum_{i=1}^n \sum_{\vect{j} \in \mathbb{Z}^p} \omega_{i\vect{j}} u(\vect{y}_i + 2 \pi \vect{j}; \vect{\theta}) = \vect{0} \ ,
\end{equation*}
with $u(\vect{y}_i + 2 \pi \vect{j}; \vect{\theta})=\nabla_{\vect{\theta}} \log m(\vect{y}+ 2 \pi \vect{j}; \vect{\theta})$.
An alternative estimation strategy can be based on a CEM algorithm leading to an approximated solution. At each iteration, a Classification step (C-step) is performed after the E-step, that provides crispy assignments. Let 
\begin{equation*}
\hat{\vect{j}}_i= \argmax_{\vect{j} \in \mathbb{Z}^p} \omega_{i\vect{j}}, 
\end{equation*}
then, set $\omega_{i\vect{j}} = 1$ when $\vect{j}=\hat{\vect{j}}_i$, $\omega_{i\vect{j}} = 0$ otherwise. As a result, the torus data $\vect{y}_i$ are \textit{unwrapped} to (fitted) linear data $\hat{\vect{x}}_i = \vect{y}_i + 2\pi \hat{\vect{j}}_i$. It is easy to see that the M-step simplifies to
\begin{equation*}
\sum_{i=1}^n u(\hat{\vect{x}}_{i}; \vect{\theta}) = \vect{0} \ .
\end{equation*}
Both the procedures are iterated until some convergence criterion is fulfilled, that could be based on the changes in the likelihood or in fitted parameter values \citep{nodehi2020}. 


\end{document}